\begin{document}

\title*{Post-Lie Algebras, Factorization Theorems and Isospectral Flows}
\author{Kurusch Ebrahimi-Fard and Igor Mencattini}
\institute{Kurusch Ebrahimi-Fard \at Department of Mathematical Sciences, Norwegian University of Science and Technology -- NTNU, 7491 Trondheim, Norway, \email{kurusch.ebrahimi-fard@ntnu.no}. On leave from UHA, Mulhouse, France.
\and Igor Mencattini \at Instituto de Ci\^encias Matem\'aticas e de Computa\c{c}\~ao, Universidade de S\~ao Paulo (USP), S\~ao Carlos, SP, Brazil, \email{igorre@icmc.usp.br}}
%
%
\maketitle


\abstract*{In these notes we review and further explore the Lie enveloping algebra of a post-Lie algebra. From a Hopf algebra point of view, one of the central results, which will be recalled in detail, is the existence of a second Hopf algebra structure. By comparing group-like elements in suitable completions of these two Hopf algebras, we derive a particular map which we dub post-Lie Magnus expansion. These results are then considered in the case of Semenov-Tian-Shansky's double Lie algebra, where a post-Lie algebra is defined in terms of solutions of modified classical Yang--Baxter equation. In this context, we prove a factorization theorem for group-like elements. An explicit exponential solution of the corresponding Lie bracket flow is presented, which is based on the aforementioned post-Lie Magnus expansion.}

\abstract{In these notes we review and further explore the Lie enveloping algebra of a post-Lie algebra. From a Hopf algebra point of view, one of the central results, which will be recalled in detail, is the existence of a second Hopf algebra structure. By comparing group-like elements in suitable completions of these two Hopf algebras, we derive a particular map which we dub post-Lie Magnus expansion. These results are then considered in the case of Semenov-Tian-Shansky's double Lie algebra, where a post-Lie algebra is defined in terms of solutions of modified classical Yang--Baxter equation. In this context, we prove a factorization theorem for group-like elements. An explicit exponential solution of the corresponding Lie bracket flow is presented, which is based on the aforementioned post-Lie Magnus expansion.}

\noindent {\footnotesize{\keywords{post-Lie algebra; universal enveloping algebra; Hopf algebra; Magnus expansion; classical $r$-matrices; classical Yang--Baxter equation; factorization theorems; isospectral flow}.}}\\
{\footnotesize{\bf MSC Classification}: 16T05; 16T10; 16T25; 16T30;17D25}

\section{Introduction}
\label{sec:intro}

These notes are based on recent joint work \cite{EFLMK,EFLIM,EFIM} by the authors together with A.~Lundervold and H.~Z.~Munthe-Kaas. They present an extended summary of a talk given by the first author at the Instituto de Ciencias Matem\'aticas (ICMAT) in Madrid\footnote{Brainstorming Workshop on ``New Developments in Discrete Mechanics, Geometric Integration and Lie-Butcher Series", May 25-28, 2015, ICMAT, Madrid, Spain. Supported by a grant from Iceland, Liechtenstein and Norway through the EEA Financial Mechanism as well as the project ``Mathematical Methods for Ecology and Industrial Management" funded by Ayudas Fundaci\'on BBVA a Investigadores, Innovadores y Creadores Culturales.}. The main aim is to explore a certain factorization problem for Lie groups in the framework of universal enveloping algebra from the perspective offered by the relatively new theory of post-Lie algebras. The latter provides another viewpoint on the notion of (finite dimensional) double Lie algebra, which is a Lie algebra $\mathfrak g$ over the ground field $\mathbb{F}$ endowed with a solution $R_+ \in \operatorname{End}_\mathbb{F}(\mathfrak g)$ of the modified classical Yang--Baxter equation:
\begin{equation}
\label{R-matrix}
	[R_+ x, R_+ y] = R_+ ([R_+ x, y] + [x,R_+ y] - [x,y]).
\end{equation}
The identity implies that the bracket
$$
	[x,y]_{R_+} := [R_+ x, y] - [R_+ y,x] - [x,y]
$$
satisfies the Jacobi identity and therefore yields another Lie algebra, denoted $\mathfrak g_R$, on the vector space underlying $\mathfrak g$. Thanks to the seminal work of Semenov-Tian-Shansky \cite{STS1}, solutions of \eqref{R-matrix}, known as classical $r$-matrices, play an important role in studying solutions of Lax equations, which in turn are intimately related to a factorization problem in the Lie group corresponding to $\mathfrak g$. In the framework of the universal enveloping algebra of the Lie algebra $\mathfrak g$, this factorization problem has been studied in \cite{RSTS,STS3}. In these works it is shown, among other things, that every solution of the modified classical Yang--Baxter equation gives rise to a factorization of group-like elements in (a suitable completion of) the universal enveloping algebra of $\mathfrak g$. On the other hand, in \cite{GuoBaiNi} it was shown that in a Lie algebra $\mathfrak g$ every solution of \eqref{R-matrix} gives rise to a post-Lie algebra. 

\medskip

A post-Lie algebra \cite{LMK1,LMK2,Vallette}, which we denote by the triple $(V, \triangleright, [\cdot,\cdot])$, consists of a vector space $V$ which is endowed with two bilinear operations, the Lie bracket $[ \cdot , \cdot ]: V \otimes V \to V$ and the magmatic post-Lie product $\triangleright: V \otimes V \to V$. The relations that the latter is supposed to satisfy with respect to the Lie bracket imply that
\begin{equation}
\label{def:post-LieRel}
	 \llbracket x,y \rrbracket := x \triangleright y - y \triangleright x - [x,y]
\end{equation}
yields another Lie bracket on $V$. The complete definition will given further below. However, the following geometric example \cite{LMK1,LMK2} may provide some insight into the interplay between the post-Lie product and the Lie bracket in post-Lie algebra. Recall that a \emph{linear connection} is a $\mathbb F$-bilinear application $\nabla \colon \mathfrak X_M \times \mathfrak X_M  \rightarrow \mathfrak X_M$ on $\mathfrak X_M$, the vector space of smooth vector fields on the manifold $M$, satisfying the Leibniz rule $\nabla_X(fY)=X(f) Y+f\nabla_X Y,$ for all $f\in C^\infty(M)$ and all $X,Y\in\mathfrak X_M$. Clearly, a linear connection endows $\mathfrak X_M$ with a product, defined simply as $(X,Y) \mapsto X \curvearrowright  Y := \nabla_X Y.$ The \emph{torsion} of $\nabla$ is a skew-symmetric tensor $\mathrm{T} \colon  TM \wedge TM \rightarrow  TM$
\begin{equation}
	\mathrm{T}(X,Y) := X\curvearrowright Y -Y\curvearrowright X - [X,Y],
	\label{eq:torsion}
\end{equation}
where $[\cdot,\cdot ]$ denotes the \emph{Jacobi--Lie bracket} of vector fields, defined by $[X,Y](f) = X(Y(f))-Y(X(f))$, for every $X,Y\in\mathfrak X_M$ and every $f\in C^\infty(M)$. The \emph{curvature} tensor $\mathrm{R}\colon TM \wedge TM \rightarrow \mbox{End}(TM)$ satisfies the identity
\begin{align}
	\mathrm{R}(X,Y)Z &= X\curvearrowright (Y\curvearrowright Z)-(X\curvearrowright Y)\curvearrowright Z \nonumber \\
	& \quad - Y\curvearrowright (X\curvearrowright Z)+(Y\curvearrowright X)\curvearrowright Z
		+\mathrm{T}(X,Y)\curvearrowright Z.
	\label{eq:curvature}
\end{align}
Torsion and curvature are related by the \emph{Bianchi identities}
\begin{align}
	\sum_{\circlearrowleft}(\mathrm{T}(\mathrm{T}(X,Y),Z)+(\nabla_X\mathrm{T})(Y,Z)) 
	&= \sum_{\circlearrowleft}\mathrm{R}(X,Y)Z\label{eq:bianchi1}\\
	\sum_{\circlearrowleft}((\nabla_X\mathrm{R})(Y,Z)+\mathrm{R}(\mathrm{T}(X,Y),Z)) 
	&= 0 ,\label{eq:bianchi2}
\end{align}
where $\sum_{\circlearrowleft}$ denotes the sum over the three cyclic permutations of $(X,Y,Z)$. If a connection is \emph{flat}, $\mathrm{R}=0$, and has \emph{constant torsion}, $\nabla_X \mathrm{T}=0$, then \eqref{eq:bianchi1} reduces to the Jacobi identity, such that the torsion defines a Lie bracket $[X,Y] _{\mathrm{T}}:= \mathrm{T}(X,Y)$, which is related to the Jacobi--Lie bracket by \eqref{eq:torsion}. The covariant derivation formula $\nabla_X (\mathrm{T}(Y,Z)) = (\nabla_X\mathrm{T})(Y,Z)+\mathrm{T}(\nabla_X Y,Z)+\mathrm{T}(Y,\nabla_X Z)$ together with $\nabla_X \mathrm{T}=0$ imply
\begin{equation}
	X\curvearrowright [Y,Z]_\mathrm{T} 
	= [X\curvearrowright Y, Z]_\mathrm{T} + [Y,X\curvearrowright Z]_\mathrm{T} .\label{eq:post-Lie1}
\end{equation}
On the other hand, \eqref{eq:curvature} together with $\mathrm{R}=0$ yield
\begin{equation}
	[X,Y]_\mathrm{T}\curvearrowright Z = a_\curvearrowright(X,Y,Z) - a_\curvearrowright(Y,X,Z) ,\label{eq:post-Lie2}
\end{equation}
where $a_\curvearrowright(X,Y,Z):= (X\curvearrowright Y)\curvearrowright Z - X\curvearrowright (Y\curvearrowright Z)$ is the usual associator with respect to the product $\curvearrowright$. Relations \eqref{eq:post-Lie1} and \eqref{eq:post-Lie2} define the post-Lie algebra $(\mathfrak X_M,\curvearrowright, [\cdot,\cdot]_\mathrm{T})$, see Proposition \ref{pro:ge}. 

Note that for a connection which is both flat  and torsion free ($\mathrm{T}= 0=\mathrm{R}$), equation \eqref{eq:curvature} implies $a_\curvearrowright(X,Y,Z) = a_\curvearrowright(Y,X,Z)$. This is the characterizing identity of a (left) pre-Lie algebra, which is Lie admissible, i.e., by skew-symmetrization one obtains a Lie algebra. We refer the reader to \cite{Burde,Cartier11,ChaLiv,Manchon} for details. 

\medskip

Returning to the abstract definition of a post-Lie algebra, $(V, \triangleright, [\cdot,\cdot])$, we consider the lifting of the post-Lie product to the universal enveloping algebra, $\mathcal U(\mathfrak g)$, of the Lie algebra $\mathfrak g:=(V,[ \cdot , \cdot ])$. It turns out that it allows to define another Hopf algebra, $\mathcal U_*(\mathfrak g)$, on the underlying vector space of $\mathcal U(\mathfrak g)$, which is isomorphic, as a Hopf algebra, to the universal enveloping algebra corresponding to the Lie algebra $\bar{\mathfrak g}:=(V,\llbracket \cdot,\cdot \rrbracket)$ defined in terms of the Lie bracket \eqref{def:post-LieRel}. The Hopf algebra isomorphism between $\mathcal U(\bar{\mathfrak g})$ and $\mathcal U_*(\mathfrak g)$ is an extension of the identity between the Lie algebras $\mathfrak{g}$ and $\bar{\mathfrak{g}}$. Moreover, for every $x \in \mathfrak g$ there exists a unique element $\chi(x) \in \mathfrak g$, such that $\exp(x) = \exp^*(\chi(x))$ with respect to (suitable completions of) $\mathcal U(\mathfrak g)$ respectively $\mathcal U_*(\mathfrak g)$. The map $\chi: \mathfrak g \to \mathfrak g$ is called {\it{post-Lie Magnus expansion}} and is defined as the solution of a particular differential equation. 

From \cite{GuoBaiNi} we know that every solution of \eqref{R-matrix} turns a double Lie algebra \cite{STS1} into a post-Lie algebra. The Lie bracket $[\cdot,\cdot]_{R_+}$ on $\mathfrak g_R $ is a manifestation of \eqref{def:post-LieRel}, and the aforementioned Hopf algebra isomorphism between $\mathcal U_\ast(\mathfrak g)$ and $\mathcal U(\bar{\mathfrak g})=\mathcal U(\mathfrak g_R)$ can be realized in terms of the solution $R_+$ and the Hopf algebra structures of these two universal enveloping algebras. The role of post-Lie algebra in the context of the factorization problem on $\mathcal U(\mathfrak g)$, mentioned above, becomes clear from the fact that any group-like element $\exp(x)$ in (a suitable completion of) $\mathcal U(\mathfrak g)$ factorizes into the product of two group-like elements, $ \exp(\chi_+(x))$ and $\exp(\chi_-(x))$, with $\chi_{\pm}(x):=\pm R_{\pm}\chi(x)$, where $R_-:=R_+ - \operatorname{id}$.

\medskip 

In what follows $\mathbb F$ denotes the ground field of characteristic zero over which all algebraic structures are considered. Unless stated otherwise,  $\mathbb F$ will be either the complex numbers $\mathbb C$ or the real numbers $\mathbb R$.

\smallskip
{\bf{Acknowledgements}}: The first author acknowledges support from the Spanish government under the project MTM2013-46553-C3-2-P and from FAPESP under the project 2015/06858-2 . We also thank the anonymous referees for pointed suggestions and remarks that helped to improve the manuscript.


\section{Universal enveloping algebras and Hopf algebras}
\label{sec:LieTheory}

In this section we present some background on classical Lie theory. We recall notions and thereby fix notations used in later sections. The construction of the so-called \emph{$I$-adic} completion of an augmented algebra will be discussed since it plays a central role. For details the reader is referred to \cite{D-K,Postnikov,Quillen,Warner}. 


\subsection{Lie groups and Lie algebras}
\label{subsect:LieGroAlg}

A Lie group $G$ is a smooth manifold endowed with the structure of an abstract group, which is compatible with the underlying differentiable structure of $G$. This means that both maps, the multiplication $m: G \times G \rightarrow G$ and the inversion $i: G \rightarrow G$ are smooth applications. For each element $g \in G$, one can define the diffeomorphisms $L_g: G \rightarrow G$ and $R_g : G \rightarrow G$, called the \emph{left}- and \emph{right-translations} by $g$, respectively. In the following we will use the $\ast$-notation to denote the differential\footnote{More precisely, for $\phi : M_1 \rightarrow M_2$, $M_2 \neq \mathbb F$, the differential of $\phi$ at $m \in M_1$ will be denoted as $\phi_{\ast,m}$. This is a linear map between $T_m M_1$ and $T_{\phi(m)}M_2$ such that $(\phi_{\ast,m}v)f = v(f\circ\phi)$, for all $v \in T_mM_1$ and all $f \in C^\infty(M_2)$. In case of $M_2=\mathbb F$, i.e., if $\phi = H : M \rightarrow \mathbb F$ is a smooth function, we will write its differential at the point $m \in M$ as $dH_m \in T^\ast_mM$.} of a smooth application. For each element $x$ in $T_eG$, the tangent space at the identity $e \in G$, let $X_x : G \rightarrow TG$ be the map defined for all $g \in G$ by $X_x : g \mapsto (L_g)_{\ast,e}x$. Then $X_x$ is smooth and it satisfies $\pi \circ X_x=\operatorname{id}_G$, where $\pi: TG \rightarrow G$ is the canonical projection. In other words, $X_x$ is a \emph{left-invariant, smooth vector field} on $G$. The set $\mathfrak X(G)^G \subset \mathfrak X(G)$ of all left-invariant vector fields forms a Lie algebra of $\mathfrak X(G)$ of dimension equal to the dimension of $G$, and $X \in \mathfrak X(G)$ is left-invariant if and only if $X=X_x$ for some $x \in T_eG$. This observation implies the existence of a bilinear, skew-symmetric bracket $[\cdot,\cdot]: T_eG \times T_eG \rightarrow T_eG$ satisfying the Jacobi identity. By definition, $[x,y]:=[X_x,X_y](e)$, for all $x,y\in T_eG$. The Lie algebra $(T_eG,[\cdot,\cdot])$ will be denoted by $\mathfrak g$. To every homomorphism of Lie groups corresponds a homomorphism between the corresponding Lie algebras, i.e., if $\psi : G_1 \rightarrow G_2$ is a homomorphism between groups $G_i$, $i=1,2$, with corresponding Lie algebras $\mathfrak g_i=(T_eG_i,[\cdot,\cdot]_i)$, $i=1,2$, then its differential evaluated at $e$ satisfies $\psi_{\ast,e}[x,y]_1=[\psi_{\ast,e}(x),\psi_{\ast,e}(y)]_2$, for all $x,y\in\mathfrak g_1$.  Since every left (right) invariant vector field is complete, for every $x \in\mathfrak g$ the integral curve of $X_x$, going through the identity $e \in G$ at $t=0$, defines a smooth map $\gamma^x_e : \mathbb{R} \rightarrow G$. It can be shown that $\gamma^{x}_e$ is a Lie group homomorphism from $(\mathbb R,+)$ to $G$, and every {\it continuous} group homomorphism between $(\mathbb R,+)$ and $G$ is of this form. For all $x \in \mathfrak g$ the curve $\gamma^x_{e}$ is called a {\it $1$-parameter group homomorphism} of $G$. Given the latter, one can define the \emph{exponential map}, $\exp:\mathfrak g\rightarrow G$, $x \in \mathfrak g \mapsto \operatorname{exp} x =\gamma^x_e(1),$ which is smooth and has the following properties:
\begin{enumerate}
 
 \item $(\operatorname{exp})_{\ast,0}=\operatorname{id}_{\mathfrak{g}}$. In particular, there exist $U\subset\mathfrak{g}$ and $V\subset G$, open neighborhoods of $0\in\mathfrak{g}$ respectively $e\in G$, such that $\operatorname{exp}\vert_{U}:U\rightarrow V$ is a diffeomorphism.

 \item Let $G_1$ and $G_2$ be two Lie groups and $\mathfrak{g}_1$ respectively $\mathfrak{g}_2$ the corresponding Lie algebras. If $\phi:G_1\rightarrow G_2$ is a Lie group morphism, then $\operatorname{exp}\circ\phi_{\ast,e}=\phi\circ\operatorname{exp}.$
\end{enumerate}

A closed formula for the differential of the exponential map at $x \in \mathfrak g$ is:
\begin{equation}
	\exp_{\ast,x}=(L_{\exp x}\big)_{\ast,e}\circ\frac{\operatorname{id}_{\mathfrak g}-\operatorname{e}^{-\operatorname{ad}_x}}{\operatorname{ad}_x}.
	\label{eq:difexp}
\end{equation}
Note that $\operatorname{ad}_xy:=[x,y]$, for all $x,y\in\mathfrak g$. The formal expression $\frac{\operatorname{id}_{\mathfrak g}-\operatorname{e}^{-\operatorname{ad}_x}}{\operatorname{ad}_x}$ represents the element of $\operatorname{End}_{\mathbb F}(\mathfrak g)$ defined by
\[
	\frac{\operatorname{id}_{\mathfrak g}-\operatorname{e}^{-\operatorname{ad}_x}}{\operatorname{ad}_x}
	=\int_0^1 \operatorname{e}^{-s\operatorname{ad}_x}\,ds.
\] 
One can prove that the exponential map is a local diffeomorphism in a neighborhood of $x \in \mathfrak g$ if and only if the linear operator $\operatorname{ad}_x$ has no eigenvalues in the set $2 \pi\,\imath\,\mathbb Z\backslash\{0\}$. Choosing $x,y\in\mathfrak g$ belonging to a sufficiently small open neighborhood $U$ of $0 \in \mathfrak g$, such that $\exp x\exp y\in V \subset G$, where $V$ is a small neighborhood of $e \in G$, one is able to find an element $\operatorname{BCH}(x,y)\in\mathfrak g$ such that $\exp\operatorname{BCH}(x,y)=\exp x\,\exp y,$ or, what is equivalent, such that $\operatorname{BCH}(x,y)=\log\big(\exp x\,\exp y\big)$, where $\log:V\rightarrow U$ denotes the inverse of the restriction of the exponential map. An explicit formula for $\operatorname{BCH}(x,y)$ is given by the so-called \emph{Baker--Campbell--Hausdorff} series ($\operatorname{BCH}$-series). See \cite{Postnikov} for details. The first few terms of this \emph{Lie series} in the variables $x,y \in \mathfrak g$ are:
\[
	\operatorname{BCH}(x,y) = x + y + \frac 12[x,y]+\frac 1{12}([x,[x,y]]+[y,[y,x]]) + \cdots.
\]
The reduced $\operatorname{BCH}$-series is defined by $\overline{\operatorname{BCH}}(x,y):=\operatorname{BCH}(x,y) - x - y$.


\subsection{Universal enveloping algebras and Hopf algebras}
\label{subsec:universalAlg}

In this subsection we follow references \cite{Postnikov, Reu}. Let $\mathfrak{g}$ be a finite dimensional Lie algebra and let $T(\mathfrak{g})=T_{\mathfrak{g}}=(T^{\bullet}\mathfrak{g},\otimes)$ be its \emph{tensor algebra}, which is a {\it graded}, {\it associative} and {\it non--commutative} algebra, whose homogeneous sub-space of degree $n$, $\mathfrak g^n:=\mathfrak g \otimes \mathfrak g^{n-1}$, $\mathfrak g^0:=1$,  is generated, as vector space, by monomials of the form $x_{i_1}\otimes\cdots\otimes {x_{i_n}}$. Consider the \emph{$2$-sided ideal} 
\[
	J=\langle x\otimes y-y\otimes x-[x,y]\rangle:=T_{\mathfrak g}\big(x\otimes y-y\otimes x-[x,y]\big)T_{\mathfrak g}.
\]
Note that in the following we will identify $x_{i_1}\otimes\cdots\otimes {x_{i_n}}$ with words $x_{i_1}\cdots x_{i_n}$. 

\begin{definition}[Universal enveloping algebra]\label{def:unienal}
The \emph{universal enveloping algebra} of $\mathfrak g$ is the algebra $\mathcal{U}(\mathfrak g):=T_{\mathfrak{g}}/J$. Its product is induced by the tensor product $\otimes$, i.e., if $\overline{X},\overline{Y}\in\mathscr{U}(\mathfrak g)$ are the classes of the monomials $X\in\mathfrak{g}^{k}$ and $Y\in\mathfrak{g}^{l}$, then $\overline{X}\cdot\overline{Y}$ is the class of the monomial $X\otimes Y\in\mathfrak{g}^{k+l}$.
\end{definition}

Note that $\mathcal{U}(\mathfrak{g})$ is a \emph{unital}, \emph{associative} algebra. In general it is not graded, since the ideal $J$ is non--homogeneous. However, $\mathcal U(\mathfrak g)$ is a \emph{filtered} algebra, that is, it is endowed with the filtration ${\mathbb F}=\mathcal{U}_0(\mathfrak{g})\subset\mathcal{U}_1(\mathfrak{g})\subset\cdots\subset\mathcal{U}_n(\mathfrak{g})\subset\cdots$, where $\mathcal{U}_n(\mathfrak{g})$ is the subspace of $\mathcal{U}(\mathfrak{g})$ generated by {\it monomials of length at most $n$}, i.e., monomials like $x_{i_1}\cdots x_{i_m}$, $m\leq n$, with $x_{i_1},\dots,x_{i_m} \in\mathfrak{g}$. Note that $\mathcal{U}_i(\mathfrak{g})\cdot\mathcal{U}_j(\mathfrak{g})\subset\mathcal{U}_{i+j}(\mathfrak{g})$, for all $i,j\geq 0$, and that $\mathcal{U}(\mathfrak{g})=\cup_{k\geq 0}\mathscr{U}_k(\mathfrak{g})$. Observe that $\mathcal{U}_1(\mathfrak{g})\simeq\mathfrak{g}$, so that there is a {\it natural} homomorphism of Lie algebras, $i:\mathfrak{g}\rightarrow\mathcal{U}(\mathfrak{g})_{\text{Lie}}$. The adjective universal emphasizes the fact that $\mathcal{U}(\mathfrak{g})$ has the following property: suppose that $\mathscr A$ is an associative algebra and that $j:\mathfrak{g} \rightarrow {\mathscr A}_{\text{Lie}}$ is a morphism of Lie algebras. Then there exists a {\it unique} morphism of unital associative algebras, $\phi:\mathcal{U}(\mathfrak{g})\rightarrow{\mathscr A}$, which makes the following diagram of Lie algebras commutative:  
\[
	\xy\xymatrix{
  	&\mathfrak{g}\ar[dr]^j\ar[d]_{i} &\\
  	&\mathcal{U}(\mathfrak{g})_{\text{Lie}}\ar[r]_{\phi_L}&{\mathscr A}_{\text{Lie}}}\endxy
\]
  
The {\it graded algebra} associated to $\mathscr{U}(\mathfrak{g})$ is: 
\[
	\operatorname{gr}(\mathcal{U}(\mathfrak{g}))
	=\bigoplus_{k\geq 0}\displaystyle\frac{\mathcal{U}_k(\mathfrak{g})}{\mathcal{U}_{k-1}(\mathfrak{g})},
	\quad\mathcal{U}_{-1}(\mathfrak{g})=\{0\}.
\]
Furthermore, note that $\mathfrak{g}\simeq \mathcal{U}_1(\mathfrak{g})/\mathcal{U}_{0}(\mathfrak{g})$, so that there exists a linear map $i:\mathfrak{g}\rightarrow \operatorname{gr}(\mathcal{U}(\mathfrak{g}))$ and $\operatorname{gr}(\mathcal{U}(\mathfrak{g}))$, endowed with the obvious multiplication, is a {\it commutative algebra}. In fact, for every $k\geq 0$, $x_{i_1}\cdots x_{i_k} - x_{\sigma(i_1)}\cdots x_{\sigma(i_k)} \in \mathscr{U}_{k-1}(\mathfrak{g})$, for all $\sigma$ in the permutation group $\Sigma_k$ of $k$ elements. This is clear when $\sigma$ is a transposition. For a general $\sigma$, the statement follows from the fact that every permutation is the product of transpositions. Observe that since a general $x_{i_1}\cdots x_{i_k}\in\mathcal U_k(\mathfrak g)$ can be written as
\begin{equation}\label{eq:dec}
	x_{i_1}\cdots x_{i_k}=\frac{1}{k!}\sum_{\sigma\in\Sigma_k}x_{\sigma(i_1)}\cdots x_{\sigma(i_k)}
	+\frac{1}{k!}\sum_{\sigma\in\Sigma_k}\big(x_{i_1}\cdots x_{i_k}-x_{\sigma(i_1)}\cdots x_{\sigma(i_k)}\big),
\end{equation}
where each summand of the second sum is an element of $\mathcal U_{k-1}(\mathfrak g)$, for each $k\geq 1$ one has the following exact sequence of vector spaces
\begin{equation}
\begin{CD}
0@>>>\mathcal U_{k-1}(\mathfrak g)@>i_{k-1}>>
\mathcal U_k(\mathfrak g)@>\sigma_k>>\frac{\mathcal U_k(\mathfrak g)}{\mathcal U_{k-1}(\mathfrak g)}@>>>0\end{CD}\label{eq:symb1}
\end{equation}
where $\sigma_k(x_{i_1}\cdots x_{i_k})$ is the class in $\mathcal U_k(\mathfrak g)/\mathcal U_{k-1}(\mathfrak g)$ of the sum $\frac{1}{k!}\sum_{\sigma\in\Sigma_k}x_{\sigma(i_1)}\cdots x_{\sigma(i_k)}$, see formula \eqref{eq:dec}.

Together with the universal enveloping algebra, one can introduce the \emph{symmetric algebra} of $\mathfrak g$,  $S(\mathfrak{g})=S_{\mathfrak{g}}:=T_{\mathfrak{g}}/{J'}$ where the $2$-sided ideal $J':=T_{\mathfrak{g}}\big(x\otimes y-y\otimes x\big)T_{\mathfrak{g}}$. $S_{\mathfrak{g}}$ is a {\it graded commutative algebra} endowed with a natural {\it injective} linear map $j:\mathfrak{g}\rightarrow S_{\mathfrak{g}}$, having the following {\it universal property}: given a commutative algebra $\mathscr{C}$ and a linear map $f:\mathfrak{g}\rightarrow\mathscr C$ there exists a \emph{unique} map of commutative algebras $\phi:S_{\mathfrak{g}}\rightarrow\mathscr C$ which closes the following to a commutative diagram:
\[
  \xy\xymatrix{
  &\mathfrak{g}\ar[dr]^f\ar[d]_{j} &\\
  &S_{\mathfrak{g}}\ar[r]_{\phi}&{\mathscr C}}\endxy
\]
 
For each $k\geq 0$, $S_k(\mathfrak g)$ denotes the homogeneous component of degree $k$ of $S_{\mathfrak g}$, and $S_{\mathfrak g}=\oplus_{k\geq 0}S_k(\mathfrak g),$ where $S_0(\mathfrak g):=\mathbb F$ and $S_{-1}(\mathfrak g)=\{0\}$. Letting $\mathscr C=\operatorname{gr}(\mathscr{U}(\mathfrak g))$ and $f=i:\mathfrak g\rightarrow \operatorname{gr}(\mathscr{U}(\mathfrak g))$, one can state the following important result. 
  
\begin{theorem}[Poincar\'e--Birkhoff--Witt]
The corresponding map $\phi:S_{\mathfrak{g}}\rightarrow\operatorname{gr}(\mathscr{U}(\mathfrak{g}))$ in the above diagram is an {\it isomorphism} of graded commutative algebras. In particular, for each $k\geq 0$ one has that 
\begin{equation}
	\phi_k:=\phi\vert_{S_k(\mathfrak g)}:S_k(\mathfrak g)\rightarrow
	\frac{\mathcal U_k(\mathfrak g)}{\mathcal U_{k-1}(\mathfrak g)}
	\label{eq:symm1}
\end{equation}
is an isomorphism of vector spaces.
\end{theorem}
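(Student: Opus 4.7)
\medskip
\noindent\textbf{Proof proposal.} The plan is to verify surjectivity and injectivity of $\phi$ separately, working with each graded piece $\phi_k$ via the filtration. Surjectivity is essentially already contained in equation \eqref{eq:dec}: the class of any length-$k$ monomial $x_{i_1}\cdots x_{i_k}$ in $\mathcal{U}_k(\mathfrak{g})/\mathcal{U}_{k-1}(\mathfrak{g})$ coincides with the class of its symmetrization $\frac{1}{k!}\sum_{\sigma \in \Sigma_k} x_{\sigma(i_1)}\cdots x_{\sigma(i_k)}$, and the latter is precisely $\phi_k$ applied to the image of $x_{i_1}\cdots x_{i_k}$ in $S_k(\mathfrak{g})$. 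Since classes of such monomials span $\mathcal{U}_k(\mathfrak{g})/\mathcal{U}_{k-1}(\mathfrak{g})$, surjectivity of $\phi_k$ is immediate, and the same computation shows that $\phi$ respects the multiplicative structure.

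For injectivity I would follow the classical route of constructing a faithful representation of $\mathcal{U}(\mathfrak{g})$ on $S(\mathfrak{g})$. Fix a totally ordered basis $\{e_\alpha\}_{\alpha \in I}$ of $\mathfrak{g}$, so that the ordered monomials $e_{\alpha_1}\cdots e_{\alpha_k}$ with $\alpha_1 \leq \cdots \leq \alpha_k$ form a basis of $S(\mathfrak{g})$. The goal is to produce a Lie algebra morphism $\rho: \mathfrak{g} \to \operatorname{End}_{\mathbb F}(S(\mathfrak{g}))$ such that, on an ordered monomial $E = e_{\beta_1}\cdots e_{\beta_k}$ with $\alpha \leq \beta_1$, one has $\rho(e_\alpha)(E) = e_\alpha\, E$; when $\alpha > \beta_1$, one defines $\rho(e_\alpha)(E)$ recursively by forcing the commutation relation
\[
\rho(e_\alpha)\rho(e_{\beta_1}) - \rho(e_{\beta_1})\rho(e_\alpha) = \rho([e_\alpha, e_{\beta_1}]),
\]
which introduces correction terms of strictly lower filtration degree. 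By the universal property of $\mathcal{U}(\mathfrak{g})$ from Section~\ref{subsec:universalAlg}, any such Lie algebra morphism extends uniquely to an algebra morphism $\rho: \mathcal{U}(\mathfrak{g}) \to \operatorname{End}_{\mathbb F}(S(\mathfrak{g}))$, and by construction it satisfies $\rho(e_{\alpha_1}\cdots e_{\alpha_k})\cdot 1 = e_{\alpha_1}\cdots e_{\alpha_k}$ in $S_k(\mathfrak{g})$ for ordered indices, and maps $\mathcal{U}_k(\mathfrak{g})$ into $\bigoplus_{j \leq k} S_j(\mathfrak{g})$.

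The main obstacle is proving that the recursion is well defined and genuinely produces a Lie algebra morphism. Concretely, one has to verify the bracket identity $\rho([e_\alpha, e_\beta])(E) = \rho(e_\alpha)\rho(e_\beta)(E) - \rho(e_\beta)\rho(e_\alpha)(E)$ on every ordered monomial $E$. This is established by induction on the length of $E$, and the Jacobi identity for $\mathfrak{g}$ enters decisively in the subtle case where the leading index of $E$ lies strictly between $\alpha$ and $\beta$, forcing a threefold reshuffle before the two sides match. Once $\rho$ is in place, injectivity of $\phi_k$ follows quickly: given $P = \sum c_{\bar\alpha}\, e_{\alpha_1}\cdots e_{\alpha_k} \in S_k(\mathfrak{g})$ with $\phi_k(P)=0$, the corresponding element $Q := \sum c_{\bar\alpha}\, e_{\alpha_1}\cdots e_{\alpha_k}$ of $\mathcal{U}_k(\mathfrak{g})$ lies in $\mathcal{U}_{k-1}(\mathfrak{g})$, hence $\rho(Q)\cdot 1 \in \bigoplus_{j \leq k-1} S_j(\mathfrak{g})$; but by construction the component of $\rho(Q)\cdot 1$ in $S_k(\mathfrak{g})$ equals $P$, forcing $P=0$.
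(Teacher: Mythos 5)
Your proposal is sound, but note that the paper itself does not prove this theorem: it is stated and used, with the proof deferred to the cited references, so there is no in-text argument to compare against. What you supply is the classical Poincar\'e--Birkhoff--Witt proof via a faithful representation of $\mathcal{U}(\mathfrak{g})$ on $S_{\mathfrak{g}}$, and the outline is correct. Surjectivity is in fact even more immediate than you make it: $\mathcal{U}_k(\mathfrak g)/\mathcal{U}_{k-1}(\mathfrak g)$ is spanned by classes of length-$k$ monomials and $\phi_k$ hits each such class by construction, while multiplicativity of $\phi$ is automatic since $\phi$ is produced by the universal property of $S_{\mathfrak g}$ as an algebra morphism; the symmetrization identity \eqref{eq:dec} is not really needed here (it is needed afterwards, to identify the section $\mathrm{symm}_k$). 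You correctly locate the entire difficulty in the injectivity half, namely in showing that the recursively defined $\rho$ is well defined, raises the $S_{\mathfrak g}$-degree by at most one, and satisfies $\rho(e_\alpha)\rho(e_\beta)-\rho(e_\beta)\rho(e_\alpha)=\rho([e_\alpha,e_\beta])$ on ordered monomials; this is where all the work lies and your sketch leaves it as a description rather than a verification, which is acceptable for a proposal but is the one step a full write-up must carry out. One small correction to that description: in the standard induction the delicate case is not when the leading index $\gamma$ of $E$ lies strictly between $\alpha$ and $\beta$, but when $\gamma$ is strictly smaller than both $\alpha$ and $\beta$, so that neither operator acts by plain multiplication; one then peels off $e_\gamma$, applies the inductive hypothesis twice, and the three resulting double brackets cancel precisely by the Jacobi identity. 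With that adjustment, your argument closes injectivity exactly as you indicate: a $P\in S_k(\mathfrak g)$ with $\phi_k(P)=0$ lifts to $Q\in\mathcal U_{k-1}(\mathfrak g)$, yet the degree-$k$ component of $\rho(Q)\cdot 1$ is $P$, forcing $P=0$.
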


Note that $\phi_k$ in \eqref{eq:symm1} maps every monomial $x_{i_1}\cdots x_{i_k}\in S_k(\mathfrak g)$ to the class of $x_{i_1}\cdots x_{i_k}$ in $\mathcal U_k(\mathfrak g)/\mathcal U_{k-1}(\mathfrak g)$, i.e., $\phi(x_{i_1}\cdots x_{i_k})=x_{i_1}\cdots x_{i_k}\,\text{mod}\,\mathcal U_{k-1}(\mathfrak g),$ where the product on the l.h.s.~is the one in the symmetric algebra while the product on the r.h.s.~is the one in the universal enveloping algebra. Since for each $x_{i_1}\cdots x_{i_k}\in\mathcal U_k(\mathfrak g)$
\[
	x_{i_1}\cdots x_{i_k}=\frac{1}{k!}\sum_{\sigma\in\Sigma_k}x_{\sigma(i_1)}\cdots 
	x_{\sigma(i_k)}\,\text{mod}\,\mathcal U_{k-1}(\mathfrak g),
\]
see \eqref{eq:dec}, for each $k$,  one can use the (inverse) of the map $\phi_k$ together with \eqref{eq:symb1}, to define the following exact sequence
\begin{equation}
	\begin{CD}
	0@>>>\mathcal U_{k-1}(\mathfrak g)@>i_{k-1}>>
	\mathcal U_k(\mathfrak g)@>s_k>>S_k(\mathfrak g)@>>>0\end{CD}\label{eq:symb2}
\end{equation}
where $s_k=\phi_k^{-1}\circ\sigma_k$ is defined by 
\begin{equation}
	s_k(x_{i_1}\cdots x_{i_k})=x_{i_1}\cdots x_{i_k}.\label{eq:symb3}
\end{equation}

The map $s_k$ defined above is called the (degree $k$) \emph{symbol map}. Since \eqref{eq:symb2} is an exact sequence of vector spaces it splits.  The linear map $\text{symm}_k:S_k(\mathfrak g)\rightarrow\mathcal U_k(\mathfrak g)$ defined by
\begin{equation}
	\text{symm}_k(x_{i_1}\cdots x_{i_k})=
	\frac{1}{k!}\sum_{\sigma\in\Sigma_k}x_{\sigma(i_1)}\cdots x_{\sigma(i_k)}\label{eq:symm3}
\end{equation}
and called the (degree $k$) \emph{symmetrization map}, is a \emph{section} of \eqref{eq:symb2}, i.e., for each $k$, $s_k\circ\text{symm}_k=\operatorname{id}_{S_k(\mathfrak g)}.$

Note that both products in \eqref{eq:symb3} and \eqref{eq:symm3} on the right and left side should be interpreted accordingly to the meaning of the monomials.

Observe that when $\mathfrak{g}$ is abelian $\mathscr{U}(\mathfrak{g})=S_{\mathfrak{g}}$, while for general $\mathfrak{g}$ the Poincar\'e--Birkhoff--Witt theorem tells us that we still have an isomorphism $\mathscr{U}(\mathfrak{g})\simeq S_{\mathfrak{g}}$ but {\it only} at the level of vector spaces. 
 
The universal enveloping algebra is an example of a {\it quasi-commutative algebra}, i.e., an associative, unital and filtered algebra $\mathscr A$, whose associated graded algebra $\operatorname{gr}(\mathscr A)$ is commutative. One can prove that if  $\mathscr A$ is a quasi-commutative algebra, then $\operatorname{gr}(\mathscr A)$ is a \emph{Poisson algebra}, see Section \ref{sec:PoisYB}. To define the Poisson bracket on $\operatorname{gr}(\mathscr A)$ it suffices to define it on the homogeneous components of the associated algebra. To this end, let:
\begin{equation}
	\{\cdot,\cdot\}:\frac{{\mathscr A}_i}{{\mathscr A}_{i-1}}\times\frac{{\mathscr A}_j}{{\mathscr A}_{j-1}}\rightarrow
	\frac{{\mathscr A}_{i+j-1}}{{\mathscr A}_{i+j-2}},
	\quad(\overline{x},\overline{y})\rightarrow (xy-yx)\:\text{mod}\:{\mathscr A}_{i+j-2},
	\label{eq:poisgrad}
\end{equation}
where $x\in\mathscr A_{i}$ and $y\in\mathscr A_{j}$ are two lifts of $\overline x$ respectively $\overline{y}$. The proof follows at once after showing that such a bracket is well defined, in particular, that given $x,y$ as above $xy-yx \in \mathscr A_{i+j-1}$, and that the result does not depend on the choice of the two lifts. Given that, the proof that the above bracket is Poisson follows from the fact that $\mathscr A$ is an associative algebra. Then, in particular, given a Lie algebra $\mathfrak{g}$, the graded algebra associated to ${\mathscr U}(\mathfrak{g})$ is a Poisson algebra. Using the Poincar\'e--Birkhoff--Witt theorem, such a Poisson structure can be transferred to the symmetric algebra $S_{\mathfrak{g}}$. In this framework it is worth to note that the Poisson bracket induced on $S_{\mathfrak{g}}$ by the one defined on $\operatorname{gr}({\mathscr U}(\mathfrak{g}))$ coincides with the linear Poisson structure of $\mathfrak g$, see \eqref{eq:linpoi} further below. To prove this statement, it suffices to check it on the restriction of (\ref{eq:poisgrad}) to the components of degree $1$:
\[
	\{\cdot,\cdot\}:\frac{{\mathscr U}_1(\mathfrak{g})}{{\mathscr U}_{0}(\mathfrak{g})}\times
	\frac{{\mathscr U}_1(\mathfrak{g})}{{\mathscr U}_{0}(\mathfrak{g})}\rightarrow
	\frac{{\mathscr U}_{1}(\mathfrak{g})}{{\mathscr U}_{0}(\mathfrak{g})},
	\quad(\overline{x},\overline{y})\rightarrow (xy-yx)\:\text{mod}\:\mathscr{U}_0(\mathfrak{g}),
\]
which shows that $\{\overline{x},\overline{y}\}=\overline{[x,y]}$ (remember that ${\mathscr U}_1(\mathfrak{g})/{\mathscr U}_{0}(\mathfrak{g})\simeq\mathfrak{g}$ and that ${\mathscr U}_{0}(\mathfrak{g})\simeq\mathbb F$). Furthermore, one can prove that if $\mathscr A$ is a positively filtered algebra, such that 
\begin{enumerate}
\item[i)] $\mathscr A_0=\mathbb F$, 
\item[ii)] $\mathscr A$ is generated as a ring by $\mathscr A_1$ and 
\item[iii)] $\mathscr A$ is almost-commutative, then
\end{enumerate}
there exists  a Lie algebra $\mathfrak g$ and an ideal $I$ of $\mathcal U(\mathfrak g)$, such that  $\mathscr A\simeq \mathcal U(\mathfrak g)/I$. 

\medskip

\noindent
\emph{Universal enveloping algebra as a Hopf algebra.} The universal enveloping algebra $\mathscr U(\mathfrak g)$ of a Lie algebra $\mathfrak g$ carries the structure of a \emph{Hopf algebra}. We follow \cite{Cartier07,Sweedler}. 

Recall the triple notation $(A,m,i)$ for an associative unital $\mathbb F$-algebra $A$, where the multiplication $m: A \otimes A \rightarrow A$ and the unit map $i: \mathbb F \rightarrow A$ satisfy
\begin{eqnarray*}
	m\circ (m \otimes \mathrm{id})
	=m\circ ( \mathrm{id} \otimes m): A \otimes A \otimes A \rightarrow A
	\quad&&\text{\it associativity}\\ 
	m\circ(i \otimes \mathrm{id})
	= \mathrm{id} =m\circ ( \mathrm{id} \otimes i): A \otimes \mathbb F \simeq  \mathbb F \otimes A \rightarrow A
	\quad &&\text{\it unit property} .
\end{eqnarray*}
If $\tau: A\otimes A\rightarrow A\otimes A$, $\tau(a\otimes b):=b\otimes a$, then $A$ is called \emph{commutative} if $m\circ\tau=m$. 

A \emph{co-algebra} is defined as a triple $(C,\Delta,\epsilon)$, where $C$ is a vector space, and $\Delta:C\rightarrow C\otimes C$, $\epsilon:C\rightarrow\mathbb F$ are two linear maps, the first is called \emph{co-product} and the second is called \emph{co-unit}. Co-product and co-unit satisfy the following properties:
\begin{eqnarray*}
	(\Delta\otimes \mathrm{id})\circ \Delta
	=(\mathrm{id}\otimes \Delta)\circ\Delta: C \rightarrow C \otimes C \otimes C 
	\quad&&\text{\it co-associativity}\\
	(\epsilon\otimes \mathrm{id})\circ\Delta=\mathrm{id}
	=(\mathrm{id}\otimes\epsilon)\circ\Delta: C\rightarrow C
	\quad &&\text{\it co-unit property} .
\end{eqnarray*}
A co-algebra is \emph{co-commutative} if $\tau\circ\Delta=\Delta$. Note that the notions of algebra and co-algebra are \emph{almost} dual to each other. More precisely, the dual of a co-algebra is an algebra whose multiplication and unit maps are obtained by reversing arrows of co-multiplication and co-unit. On the other hand, taking the dual of an algebra and reversing the arrows of the multiplication and of the unit maps, one obtains a co-algebra $(A^\ast,m^\ast,i^\ast)$ if $\text{dim}\,\,A<\infty$ but not if $\text{dim}\,\,A=\infty$. In fact, in this case, reversing the multiplication arrow, one does not obtain a map $m^\ast$ from $A^\ast$ to $A^\ast\otimes A^\ast$, but rather from $A^\ast$ to $(A\otimes A)^\ast$ which contains $A^\ast\otimes A^\ast$ as a proper vector sub-space. Finally, a {\it{bialgebra}} $(H,m,i,\Delta,\epsilon )$ consists of a vector space $H$ endowed with the maps:
\begin{eqnarray*}
	m:H\otimes H\rightarrow H\quad&&\text{\it multiplication}\\
	\Delta:H\rightarrow H\otimes H\quad&&\text{\it co-multiplication}\\
	i:\mathbb F\rightarrow H\quad&&\text{\it unit}\\ 
	\epsilon:H\rightarrow\mathbb F\quad &&\text{\it co-unit},
\end{eqnarray*}
such that $(H,m,i)$ is an algebra and $(H,\Delta,\epsilon)$ is a co-algebra, which are compatible
\begin{eqnarray}
	\Delta\circ m &=& (m \otimes m)\circ ( \mathrm{id} \otimes\tau\otimes \mathrm{id} )
	\circ\Delta\otimes\Delta \label{eq:copr}\\
	\epsilon\otimes\epsilon &=& \epsilon\otimes m.\label{eq:couni}
\end{eqnarray} 
Note that these conditions are equivalent to saying that $(\Delta,\epsilon)$ are algebra morphisms -- equivalently, $(m,i)$ are co-algebra morphisms. A \emph{Hopf algebra} $(H,m,i,\Delta,\epsilon,S)$ is a bialgebra with an {\it{antipode}} $S: H \rightarrow H$, a linear map satisfying:
\[
	m\circ  (\mathrm{id} \otimes S)\circ\Delta 	= i\circ\epsilon = m\circ (S\otimes \mathrm{id} )\circ\Delta.
\]
It is easy to show that $S$ is a co-algebra and algebra anti-homorphism, such that $S\circ i=i$ and $\epsilon\circ S=\epsilon$.  

An element $x\in H$ will be called \emph{primitive} if $\Delta x=x\otimes 1+1\otimes x$, while $g\in H$ will be called \emph{group-like} if $g\neq 0$ and $\Delta g=g\otimes g$. Let $\mathcal P(H)$ and $\mathcal G(H)$ be the sets of primitive respectively group-like elements in the Hopf algebra $(H,m,i,\Delta,\epsilon,S)$. Note that if $g_1,g_2\in\mathcal G(H)$, then $g_1\cdot g_2:=m(g_1,g_2)\in\mathcal G(H)$. If $x_1,x_2\in\mathcal P(H)$, then $[x_1,x_2]:=x_1\cdot  x_2 - x_2\cdot x_1 \in\mathcal P(H)$, i.e., $(\mathcal P(H),[\cdot,\cdot ])$ is a Lie algebra. Furthermore, defining $e:=i(1)$ and $g^{-1}:=S(g)$ for all $g\in\mathcal G(H)$, one can show that $g\cdot e=g=e\cdot g$, and $g^{-1}\cdot g=e=g\cdot g^{-1}$, for all $g\in\mathcal G(H)$. In other words, $(\mathcal G(H),\cdot)$ is a group whose identity element is $e$, such that for each $g\in\mathcal G(H)$, $g^{-1}=S(g)$. 

\begin{proposition}
Let $\mathfrak g$ be a Lie algebra. Its universal enveloping algebra $\mathcal U(\mathfrak g)$ is a co-commutative Hopf algebra.
\end{proposition}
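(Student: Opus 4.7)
The plan is to invoke the universal property of $\mathcal U(\mathfrak g)$ three times, to define the coproduct $\Delta$, the counit $\epsilon$, and the antipode $S$, and then to verify the Hopf algebra axioms by reducing them to identities on generators.

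First I would observe that $\mathcal U(\mathfrak g)\otimes \mathcal U(\mathfrak g)$ is an associative unital algebra (with componentwise multiplication) and that the map $\delta:\mathfrak g\to \mathcal U(\mathfrak g)\otimes \mathcal U(\mathfrak g)$ defined by $\delta(x):=x\otimes 1+1\otimes x$ is a morphism of Lie algebras, because
\[
    \delta(x)\delta(y)-\delta(y)\delta(x)=[x,y]\otimes 1+1\otimes [x,y]=\delta([x,y]),
\]
using that $x\otimes 1$ and $1\otimes y$ commute. By the universal property of Definition~\ref{def:unienal}, $\delta$ extends uniquely to an algebra morphism $\Delta:\mathcal U(\mathfrak g)\to \mathcal U(\mathfrak g)\otimes \mathcal U(\mathfrak g)$. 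Likewise, the zero map $\mathfrak g\to \mathbb F_{\text{Lie}}$ extends to a unital algebra morphism $\epsilon:\mathcal U(\mathfrak g)\to \mathbb F$, and the Lie algebra morphism $\sigma:\mathfrak g\to \mathcal U(\mathfrak g)^{\text{op}}_{\text{Lie}}$ sending $x\mapsto -x$ extends to an algebra morphism $S:\mathcal U(\mathfrak g)\to \mathcal U(\mathfrak g)^{\text{op}}$, that is, to an algebra anti-homomorphism of $\mathcal U(\mathfrak g)$.

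The next step is to check coassociativity, the counit axiom, the bialgebra compatibility, and the antipode axiom. For each of these, both sides are algebra morphisms out of $\mathcal U(\mathfrak g)$ (or into the appropriate tensor algebra), so by uniqueness in the universal property it suffices to verify the identity on generators $x\in\mathfrak g$. For instance,
\[
    (\Delta\otimes \mathrm{id})\Delta(x)= x\otimes 1\otimes 1+1\otimes x\otimes 1+1\otimes 1\otimes x=(\mathrm{id}\otimes\Delta)\Delta(x),
\]
and $(\epsilon\otimes \mathrm{id})\Delta(x)=x=(\mathrm{id}\otimes\epsilon)\Delta(x)$. The bialgebra compatibility \eqref{eq:copr}--\eqref{eq:couni} is automatic from the construction, since $\Delta$ and $\epsilon$ were defined as algebra morphisms. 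For the antipode, on $x\in\mathfrak g$ one computes
\[
    m(\mathrm{id}\otimes S)\Delta(x)=x\cdot 1+1\cdot(-x)=0=i\epsilon(x),
\]
and symmetrically for $m(S\otimes\mathrm{id})\Delta$; extension to arbitrary elements follows because both $m(\mathrm{id}\otimes S)\Delta$ and $i\epsilon$ are elements of the convolution algebra $\mathrm{End}(\mathcal U(\mathfrak g))$ which agree on the generators $\mathfrak g$ and on $1$, hence everywhere by a short induction on the length filtration of Subsection~\ref{subsec:universalAlg}.

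Finally, co-commutativity is immediate: $\tau\circ \Delta$ and $\Delta$ are both algebra morphisms $\mathcal U(\mathfrak g)\to \mathcal U(\mathfrak g)\otimes \mathcal U(\mathfrak g)$, and they agree on $x\in\mathfrak g$ since $\tau(x\otimes 1+1\otimes x)=1\otimes x+x\otimes 1$, so uniqueness in the universal property forces $\tau\circ\Delta=\Delta$ throughout $\mathcal U(\mathfrak g)$. The only mildly delicate point is the antipode axiom, because $S$ is not an algebra morphism but an anti-morphism; I would handle this either by the convolution-algebra argument above or by direct induction on the filtration degree, using that once the axiom holds on $\mathcal U_{n-1}(\mathfrak g)$ and on generators, multiplicativity of $\Delta$ pushes it to $\mathcal U_n(\mathfrak g)$.
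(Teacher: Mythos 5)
Your proof is correct and follows essentially the same route as the paper: define $\Delta$, $\epsilon$ and $S$ by the universal property of $\mathcal U(\mathfrak g)$ and verify all axioms on the generators $\mathfrak g$. The only (cosmetic) difference is that you map directly into $\mathcal U(\mathfrak g)\otimes\mathcal U(\mathfrak g)$ rather than passing through $\mathcal U(\mathfrak g\oplus\mathfrak g)$ and the canonical isomorphism, and you are in fact more careful than the paper about the antipode identity, correctly noting that $m\circ(\mathrm{id}\otimes S)\circ\Delta$ is not an algebra morphism and so requires the convolution/filtration induction rather than a bare appeal to uniqueness.
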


\begin{proof}
To prove the first part of the statement it suffices to define the antipode and a co-algebra structure compatible with the algebra structure of $\mathcal U(\mathfrak g)$. Let $\mathfrak G=\mathfrak g\oplus\mathfrak g$ be endowed with the structure of direct product Lie algebra and let $\Delta:\mathfrak g\rightarrow\mathfrak G$ be the diagonal embedding, i.e., $\Delta (x)=(x,x)$, for all $x\in\mathfrak g$. Then, by the universal property, $\Delta$ extends uniquely to an associative algebra morphism $\Delta:\mathcal U(\mathfrak g)\rightarrow\mathcal U(\mathfrak G)$, which, composed with the canonical isomorphism $\mathcal U(\mathfrak G)\simeq \mathcal U(\mathfrak g)\otimes\mathcal U(\mathfrak g)$, yields a linear map $\Delta:\mathcal U(\mathfrak g)\rightarrow\mathcal U(\mathfrak g)\otimes\mathcal U(\mathfrak g)$, defined by

\begin{eqnarray}
	{\lefteqn{\Delta(x_1\cdots x_n)=x_1\cdots x_n\otimes 1+1\otimes x_1\cdots x_n}} \nonumber\\
	& \qquad +\sum\limits_{k=1}^{n-1}\sum\limits_{\sigma\in\Sigma_{k,n-k}}x_{\sigma(1)}
	\cdots x_{\sigma(k)}\otimes x_{\sigma(k+1)}\cdots x_{\sigma (n)}\label{eq:coprodU},
\end{eqnarray}
where for each $k=1,\dots, n-1$, $\Sigma_{k,n-k}$ is the subgroup of the $(k,n-k)$ shuffles in $\Sigma_n$. Starting now from the trivial map $\mathfrak g\rightarrow 0$ and using again the universal property of $\mathcal U(\mathfrak g)$, one can define the co-unit map $\epsilon:\mathcal U(\mathfrak g)\rightarrow\mathbb F$.
It is again the universal property of the enveloping algebra, that permits to show that $(\mathcal U(\mathfrak g),\Delta,\epsilon)$ is a co-algebra. 

On the other hand, the map $S:\mathfrak g\rightarrow\mathfrak g$, defined by $S(x)=-x$ for all $x\in\mathfrak g$, is a Lie algebra anti-homomorphism, which extends in a unique way to an associative algebra homomorphism $S:\mathcal U(\mathfrak g)\rightarrow\mathcal U(\mathfrak g)$, such that $S(x_{i_1}\cdots x_{i_n})=(-1)^n x_{i_n}\cdots x_{i_1}$ for each monomial $x_{i_1}\cdots x_{i_n}$, and it satisfies the antipode property. Finally, the proof of co-commutativity follows at once from the universal property of $\mathcal U(\mathfrak g)$ and noticing that the maps $(\Delta\otimes \mathrm{id})\circ\Delta$ and $(\mathrm{id}\otimes\Delta)\circ\Delta:\mathcal U(\mathfrak g)\rightarrow\mathcal U(\mathfrak g)\otimes\mathcal U(\mathfrak g)\otimes\mathcal U(\mathfrak g)$ are both obtained from the embeddings of $\mathfrak g$ into $\mathfrak g\oplus\mathfrak G\simeq\mathfrak g\oplus\mathfrak g\oplus\mathfrak g$ respectively $\mathfrak g\rightarrow\mathfrak G\oplus \mathfrak g\simeq\mathfrak g\oplus\mathfrak g\oplus\mathfrak g$.
\end{proof}

Note that every $x\in\mathfrak g=\mathcal U_1(\mathfrak g)$ is a primitive element. Furthermore, it can be shown that if $\xi\in\mathcal U(\mathfrak g)$ is primitive then $\xi\in\mathfrak g$. In other words, one can prove that $\mathcal P(\mathcal U(\mathfrak g))=\mathfrak g$. On the other hand, it is simple to see that in $\mathcal U(\mathfrak g)$ there are no group-like elements of degree greater than zero, i.e., $\mathcal G(\mathcal U(\mathfrak g))=\mathbb F$. To associate non-trivial group-like elements to $\mathcal U(\mathfrak g)$ one needs to consider instead of $\mathcal U(\mathfrak g)$ a suitable \emph{completion} of it. 

\begin{remark}
Since $\mathfrak g=\mathcal P(\mathcal U(\mathfrak g))$, every Lie polynomial is still a primitive element of $\mathcal U(\mathfrak g)$.
\end{remark}

\emph{Complete Hopf algebras.} We follow reference \cite{Quillen}. In what follows all algebras are unital and defined over the field $\mathbb F$. $A$ will be called an \emph{augmented} algebra, if it comes endowed with an algebra morphism $\epsilon: A\rightarrow\mathbb F$ called the \emph{augmentation map}. In this case its kernel $\text{ker}\,\epsilon$ will be called the \emph{augmentation ideal} and it will be denoted by $I$. 

\begin{example}\label{ex:unial}
$A=\mathscr{U}(\mathfrak g)$ is an example of augmented algebra. In fact the co-unit $\epsilon:\mathscr{U}(\mathfrak g)\rightarrow\mathbb F$ is an augmentation map and its kernel, $I=\cup_{k>0}\mathcal U_k(\mathfrak g)$, is the corresponding augmentation ideal.
\end{example}

A \emph{decreasing filtration} of $A$ is a decreasing sequence $A = F_0A \supset F_1A \supset \cdots$ of sub-vector spaces, such that $F_pA \cdot F_qA \subset F_{p+q}A$ and $\text{gr}\,A=\oplus_{n=0}^\infty F_nA/F_{n+1}A$ has a natural structure of a graded algebra. Note that for each $k$, $F_kA$ is a two-side ideal of $A$. We can now define the notion of a \emph{complete augmented algebra}. 

\begin{definition}\label{def:caa} A \emph{complete augmented algebra} is an augmented algebra $A$ endowed with a decreasing filtration $\{F_kA\}_{k\in\mathbb N}$ such that:

\begin{enumerate}
\item[1)] $F_1A=I$,

\item[2)] $\text{gr}\,A$ is generated as an algebra by $\text{gr}_1\,A$,

\item[3)] As an algebra, $A$ is the \emph{inverse limit} $A=\varprojlim\, A/F_nA$.\footnote{Let $(\mathcal I,\leq)$ be a \emph{ directed poset}. Recall that a pair $(\{A_i\}_{i\in \mathcal I},\{f_{ij}\}_{i,j\in \mathcal I})$ is called an \emph{inverse} or \emph{projective system} of sets over $\mathcal I$, if $A_i$ is  a set for each $i\in\mathcal I$, $f_{ij}:A_i\rightarrow A_j$ is a map defined for all $j\leq i$ such that $f_{ij}\circ f_{jk}=f_{ik}:A_i\rightarrow A_k$, every time the corresponding maps are defined and $f_{ii}=\operatorname{id}_{A_i}$. Then the \emph{inverse} or the \emph{projective limit} of the inverse system $(\{A_i\}_{i\in\mathcal I},\{f_{ij}\}_{i,j\in \mathcal I})$ is 
\begin{equation*}
\varprojlim A_i=\{\xi\in\prod_{i\in \mathcal I}A_i\,\vert\, f_{ij}(p_i(\xi))=p_j(\xi),\,\forall j\leq i\},
\end{equation*}
where, for each $i\in \mathcal I$, $p_i:\prod_{i\in \mathcal I} A_i\rightarrow A_i$ is the canonical projection. This definition is easily specialized to define the inverse limit in the category of algebras, co-algebras and Hopf algebras.}

\end{enumerate}
\end{definition}

\begin{example}\label{ex:caa1}
Let $A$ be an augmented algebra. Then $\hat A=\varprojlim A/I^n$ is a complete augmented algebra where, for each $n\geq 0$, $F_n\hat A=\widehat {I^n}=\varprojlim\, I^n/I^k$, $k\geq n$. It is worth to recall that $\hat{A}$ is also called the \emph{$I$-adic completion} of $A$. Note that in this case $F_nA:=I^n$ if $n\geq 1$ and $F_0A=A$ and the inverse system defining the completion is given by the data $(\{A_i\}_{i\in I},\{f_{ij}\}_{i,j\in\mathcal I})$ where $\mathcal I=\mathbb N$, $A_n=A/I^n$ and $f_{ij}:A_j\rightarrow A_i$ is the application that, for all $a\in A$, maps $a\,\text{mod}\, A_j$ to $a\,\text{mod}\, A_i$, for all $j\leq i$.

Since $A/I^n\simeq \hat A/\widehat{I^n}$, one has that $\text{gr}\,\hat A\simeq \text{gr}\,A=\oplus_{n\geq 0}I^n/I^{n+1}$, which implies that $\hat A$ satisfies property $2)$ in the definition above. Property $1)$ is clear from the definition of the filtration of $\hat A$, while Property $3)$ follows from the isomorphism $\hat{\hat A}\simeq\hat A$, for each $\hat A=\varprojlim A_n$, where $\hat{\hat A}=\varprojlim\widehat{A_n}$ and $\widehat{A_n}=\varprojlim A_k$, $k\geq n$.
In particular, taking $A=\mathscr{U}(\mathfrak g)$ and $I=\cup_{k>0}\mathscr{U}_{k}(\mathfrak g)$, see Example \ref{ex:unial}, one can define the complete augmented algebra
\begin{equation}
\hat{\mathscr{U}}(\mathfrak g)=\varprojlim\mathscr{U}(\mathfrak g)/I^n,\label{eq:comunial}
\end{equation}
which will be simply called in the following the \emph{completion} of $\mathscr{U}(\mathfrak g)$.
\end{example}


Let $V=F_0V\supset F_1V\supset F_2V\supset\cdots$ be a filtered vector space and $\pi_n:F_nV\rightarrow \text{gr}_n\,V$ be the canonical surjection. If $W$ is another filtered vector space, then one can define a filtration on $V \otimes W$ declaring that $F_n(V\otimes W)=\sum_{i+j=n}F_iV \otimes F_jV \subset V \otimes W$, for all $n\geq 0$, where one identifies $F_iV \otimes F_jV$ with its image in $V \otimes W$ via the canonical injection. If $V$ and $W$ are complete, i.e. if $V=\varprojlim V/F_nV$ and $W=\varprojlim W/F_nW$, then we denote by $V\hat \otimes W$ the completion of $V\otimes W$ with respect to the filtrations defined above, and we denote with $x \hat \otimes y$ the image of $x\otimes y$ via the canonical morphism between $V \otimes W$ and $V \hat \otimes W$. Note that, since $F_{2n}(V\otimes W)\subset F_nV\otimes W+V\otimes F_nW\subset F_n(V\otimes W)$, one has that 
\[
	V\hat\otimes W=\varprojlim (V_n\otimes W_n),
\]
where, given the filtered vector space $V=F_0 V\supset F_1 V\supset F_2 V\supset \cdots $, $V_n=V/F_n V$.

\begin{definition}
The vector space $V \hat \otimes W$ so defined is called the \emph{complete tensor product} of the complete vector spaces $V$ and $W$.
\end{definition}

\begin{remark}\label{rem:remahopf} A couple of remarks are in order.
\begin{enumerate}
\item Let $V$ and W be two filtered vector spaces. Then the map $\rho: \text{gr}\,V\otimes\text{gr}\,W\rightarrow \text{gr}\,(V\otimes W)$, defined by $\rho(\pi_p x\otimes\pi_ q y)=\pi_{p+q}(x\otimes y)$, is an isomorphism, which, if $V$ and $W$ are complete, induces an isomorphism, still denoted by $\rho$, between $\text{gr}\,V\otimes\text{gr}\,W$ and $\text{gr}\,V\hat\otimes\,\text{gr}\,W$, and which takes $\pi_p x\otimes\pi_q y$ to $\pi_{p+q} (x\hat\otimes y)$, for all $p,q\in\mathbb N$ and for all $x\in V$ and $y\in W$. 

\item If $A$ and $A'$ are two complete augmented algebras, then $F_n(A\otimes A')$ is a filtration of $A\otimes A'$ and the corresponding completed tensor product $A\hat\otimes A'$ becomes a complete augmented algebra. The complete tensor product of complete algebras has the following property. If $A$ and $B$ are augmented algebras then, $\hat A\hat\otimes\hat B=\widehat{A\otimes B}$.
\end{enumerate}
\end{remark}
Finally we can introduce the following concept.

\begin{definition}
A \emph{complete Hopf algebra} $(H,m,i,\Delta,\epsilon,S)$ is a complete augmented algebra $(H,m,i)$, where $\Delta: H\rightarrow H \hat\otimes H$ and $S:H\rightarrow H$ are morphisms of complete augmented algebras, and $\epsilon:H\rightarrow\mathbb F$ is the augmentation map. These morphisms satisfy the same properties as in the usual definition of Hopf algebra, with the usual tensor product replaced by the complete tensor product.
\end{definition}

Note that $(H,m,i,\Delta,\epsilon,S)$ is co-commutative if $\tau \circ \Delta=\Delta$. Furthermore, to every Hopf algebra $(H,\hat m,\hat i,\hat\Delta,\hat\epsilon,\hat S)$ one can associate a complete Hopf algebra by considering $\hat H$ and $\hat\Delta:\hat H\rightarrow \widehat{H\otimes H}\simeq \hat H\hat\otimes\hat H$, see Remark \ref{rem:remahopf} above.

\begin{example}
Let $\mathfrak g$ be a finite dimensional Lie algebra. Then $\hat{\mathcal U}(\mathfrak g)$ carries a structure of complete Hopf algebra, see Example \ref{ex:caa1}.
\end{example}

Given a complete Hopf algebra $(H,\hat m,\hat i,\hat\Delta,\hat\epsilon,\hat S)$, one can define:
\begin{eqnarray*}
	\mathcal P(H)&:=&\{x\in I_H\,\vert\,\Delta x=x\hat\otimes 1+1\hat\otimes x\}\\
	\mathcal G(H)&:=&\{x\in 1+I_H\,\vert\,\Delta x=x\hat\otimes x\},
\end{eqnarray*}
i.e. the set of \emph{primitive}, respectively, of  \emph{group-like} elements of $H$. 

Note that if $A$ is a complete augmented algebra and if $x\in A$, $\operatorname{e}^x=\sum_{n\geq 0}\frac{x^n}{n!}$ belongs to $A$. This follows from Property $3)$ in Definition \ref{def:caa}, noticing that, if $S_n=\sum_{k=0}^n\frac{x^k}{k!}$ for all $n\geq 0$, then the sequence $\{S_n\}_{n\in\mathbb N}$ is \emph{convergent}, since it is \emph{Cauchy}.\footnote{Recall that if $M$ is a $\mathbb Z$-module endowed with a decreasing filtration, $M=M_0\supset M_1\supset M_2\supset\cdots$, then a sequence $(x_k)_{k\in\mathbb N}$ is called a \emph{Cauchy sequence} if for each $r$ there exists $N_r$, such that, if $n,m>N_r$, then $x_n-x_m \in M_r$. This amounts to saying, that if $n,m$ are sufficiently large, then $x_n+M_r=x_m+M_r$. This implies that $(x_k)_{k\in\mathbb N}$ is a coherent sequence, i.e., it belongs to $\hat M=\varprojlim M/M_k$. In other words, every Cauchy sequence is convergent in $\hat M$. These considerations can be extended verbatim to the case of complete augmented algebras.} 

Let $(H,m,i,\Delta,\epsilon,S)$ be a complete Hopf algebra. Then

\begin{proposition}[\cite{Quillen}]
	$x\in\mathcal P(H)\iff e^x\in\mathcal G(H)$.
\end{proposition}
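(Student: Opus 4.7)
The plan is to exploit that $\Delta$ is a morphism of complete augmented algebras, hence continuous with respect to the filtrations, and that $x \hat\otimes 1$ and $1 \hat\otimes x$ commute in $H \hat\otimes H$. Concretely, for any $x \in I_H$ the series $e^x = \sum_{n \geq 0} x^n/n!$ converges in $H$ (as noted just before the statement) and lies in $1 + I_H$. The continuity of $\Delta$ together with its algebra-morphism property implies that it commutes with convergent power series, so one obtains the key identity
\[
\Delta(e^x) \;=\; e^{\Delta x}.
\]

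Next I would verify the purely formal computation
\[
e^{x \hat\otimes 1 + 1 \hat\otimes x} \;=\; e^{x \hat\otimes 1}\cdot e^{1 \hat\otimes x} \;=\; (e^x \hat\otimes 1)(1 \hat\otimes e^x) \;=\; e^x \hat\otimes e^x,
\]
where the first equality uses that $x \hat\otimes 1$ and $1 \hat\otimes x$ commute in the complete tensor product algebra $H \hat\otimes H$, and the remaining equalities just expand the exponential series. With these two identities in hand, the forward implication is immediate: if $\Delta x = x \hat\otimes 1 + 1 \hat\otimes x$, then
\[
\Delta(e^x) = e^{\Delta x} = e^{x \hat\otimes 1 + 1 \hat\otimes x} = e^x \hat\otimes e^x,
\]
so $e^x \in \mathcal G(H)$ (noting again that $e^x \in 1 + I_H$).

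For the converse, assume $x \in I_H$ and $e^x \in \mathcal G(H)$. Then $e^{\Delta x} = \Delta(e^x) = e^x \hat\otimes e^x = e^{x \hat\otimes 1 + 1 \hat\otimes x}$. Since $\Delta x \in I_{H \hat\otimes H}$ and $x \hat\otimes 1 + 1 \hat\otimes x \in I_{H \hat\otimes H}$, both exponentials lie in $1 + I_{H \hat\otimes H}$, and the formal logarithm $\log(1+y) = \sum_{n \geq 1}(-1)^{n+1} y^n/n$ converges on this set and is a two-sided inverse of $\exp$ there. Applying $\log$ to both sides yields $\Delta x = x \hat\otimes 1 + 1 \hat\otimes x$, i.e. $x \in \mathcal P(H)$.

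The main obstacle (really the only non-formal point) is justifying $\Delta(e^x) = e^{\Delta x}$ and the mutual inverse property of $\exp$ and $\log$ in the completed setting. Both reduce to the fact that $\Delta$ is a filtered algebra morphism into $H \hat\otimes H$: the partial sums $S_n = \sum_{k=0}^n x^k/k!$ are Cauchy for the $I$-adic filtration, $\Delta S_n$ is Cauchy in $H \hat\otimes H$ for the induced filtration, and passage to the limit is legitimate by Property $3)$ of Definition \ref{def:caa} together with Remark \ref{rem:remahopf}. Once this continuity is in place, the rest is a direct manipulation of convergent series.
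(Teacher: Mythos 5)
Your proof is correct and follows essentially the same route as the paper's: both rest on the identity $\Delta(e^x)=e^{\Delta x}$ together with $e^{x\hat\otimes 1+1\hat\otimes x}=e^x\hat\otimes e^x$ from the commutation of $x\hat\otimes 1$ and $1\hat\otimes x$. Your version merely makes explicit two points the paper leaves implicit, namely the continuity argument behind $\Delta(e^x)=e^{\Delta x}$ and the use of $\log$ to invert $\exp$ in the converse direction.
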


\begin{proof}
In fact  $x\in\mathcal P(H)\iff \Delta x=x\hat\otimes 1+1\hat\otimes x\iff \operatorname{e}^{\Delta x}=\operatorname{e}^{x\hat\otimes 1+1\hat\otimes x}$ and, since $(x\hat\otimes 1)(1\hat\otimes x)-(1\hat\otimes x)(x\hat\otimes 1)=0$, one has that
\[
	\operatorname{e}^{x\hat\otimes 1+1\hat\otimes x}\\
		=\operatorname{e}^{x\hat\otimes 1}\cdot \operatorname{e}^{1\hat\otimes x}
		=(\operatorname{e}^x\hat\otimes 1)(1\hat\otimes \operatorname{e}^x)
		=\operatorname{e}^x\hat\otimes \operatorname{e}^x,
\]
which implies the statement since $\Delta \operatorname{e}^x=\operatorname{e}^{\Delta x}$.
\end{proof}

\begin{corollary}
The exponential map $\exp:\mathcal P(H)\rightarrow\mathcal G(H)$, $\exp: x\rightarrow \operatorname{e}^x$, defines an isomorphism of sets, whose inverse is the \emph{logarithmic series}, defined by  $\log (1+x)=\sum_{n\geq 1}(-1)^{n-1}\frac{x^n}{n}$, $\forall x\in I_H$.
\end{corollary}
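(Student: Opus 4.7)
The plan is to reduce the corollary to the preceding Proposition by establishing that the logarithmic series converges on $1+I_H$, lands in $\mathcal{P}(H)$ when evaluated on $\mathcal{G}(H)$, and is inverse to $\exp$ via the formal power series identities.

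First I would verify that $\log(1+y)=\sum_{n\geq 1}(-1)^{n-1}\frac{y^n}{n}$ is well-defined for every $y\in I_H$. This is done by the same argument used for the exponential series: since $y\in I_H=F_1H$, the filtration property $F_pH\cdot F_qH\subset F_{p+q}H$ gives $y^n\in F_nH$, so the partial sums $T_N(y):=\sum_{n=1}^N(-1)^{n-1}y^n/n$ form a Cauchy sequence with respect to the filtration, and hence converge in $H=\varprojlim H/F_nH$ by Property~3) of Definition~\ref{def:caa}.

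Next I would establish the two formal inversion identities $\log(\exp x)=x$ and $\exp(\log(1+y))=1+y$ for $x,y\in I_H$. These identities hold tautologically as formal power series in one variable over $\mathbb{F}$, and they transfer to $H$ because substitution of an element of $I_H$ into a power series with no constant term is compatible with the $I_H$-adic filtration: for each fixed degree $n$, only finitely many terms of $\log(\exp x)$ (respectively of $\exp(\log(1+y))$) contribute to the class modulo $F_nH$, so the verification modulo $F_nH$ reduces to the well-known formal identity in $\mathbb{F}[[t]]$. Passing to the inverse limit yields the identity in $H$.

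Finally, I would assemble the bijection. The Proposition shows that $\exp$ carries $\mathcal{P}(H)$ into $\mathcal{G}(H)$. Conversely, if $g\in\mathcal{G}(H)$, then applying the counit to $\Delta g=g\hat\otimes g$ gives $\epsilon(g)=1$, so $g-1\in I_H$ and $x:=\log(g)\in I_H$ is well-defined by the first paragraph. The formal identity of the second paragraph gives $\exp(x)=g$, and since $\exp(x)=g\in\mathcal{G}(H)$, the Proposition (used in the reverse direction) forces $x\in\mathcal{P}(H)$. Together with $\log\circ\exp=\mathrm{id}_{\mathcal{P}(H)}$ and $\exp\circ\log=\mathrm{id}_{\mathcal{G}(H)}$, this proves that $\exp$ and $\log$ are mutually inverse bijections between $\mathcal{P}(H)$ and $\mathcal{G}(H)$.

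The main obstacle is the second paragraph, namely justifying that the classical formal power series identities between $\exp$ and $\log$ remain valid after substitution of elements of $I_H$ into the non-commutative complete algebra $H$. The point is that although $H$ is non-commutative in general, only powers of a single element $x$ (or $y$) appear, so all computations reduce to commutative manipulations in the subalgebra $\mathbb{F}[[x]]\subset H$ generated topologically by $x$, where the identities hold by the classical calculus of formal power series; the filtration completeness then ensures that term-by-term verification at each level $F_nH$ suffices.
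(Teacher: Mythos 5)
Your proposal is correct and matches the argument the paper leaves implicit: the corollary is stated there without proof as an immediate consequence of the preceding proposition, and your write-up supplies exactly the expected details (Cauchy convergence of the $\log$ series in the filtration, the formal inversion identities $\log\circ\exp=\mathrm{id}$ and $\exp\circ\log=\mathrm{id}$ checked modulo each $F_nH$, and the reverse direction of the proposition to place $\log g$ in $\mathcal P(H)$). The only superfluous step is deducing $g-1\in I_H$ from the counit, since the paper's definition of $\mathcal G(H)$ already requires $g\in 1+I_H$.
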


\begin{example}
Let $\mathfrak g$ be a finite dimensional Lie algebra and let $\hat{\mathcal U}(\mathfrak g)$ be the corresponding complete universal enveloping algebra, see Example \ref{ex:caa1}. Then, given $\xi\in\hat{\mathcal U}(\mathfrak g)$, $\operatorname{e}^\xi$ is a group-like element \emph{if and only if} $\xi\in\mathfrak g$. Moreover, from the previous corollary, one knows that if $x\in\cup_{k\geq 1}\hat{\mathcal U}_k(\mathfrak g)$ and $y=1+x$ such that $\Delta y=y\hat\otimes y$, then there exists $z\in\mathcal P(\hat{\mathcal U}(\mathfrak g))$ such that $y=\operatorname{e}^z$, see Example \ref{ex:caa1}.
\end{example}

We conclude this part by noticing that on every complete Hopf algebra, both the Lie algebra of primitive elements and the group of  group-like elements inherit a filtration. More precisely one has the

\begin{proposition}[\cite{Quillen}] If for all $k\geq 0$
\begin{eqnarray*}
	F_k\mathcal G(H)&=&\{x\in\mathcal G(H)\,\vert\,x-1\in F_kH\}\\
	F_k\mathcal P(H)&=&\mathcal P(H)\cap F_kH
\end{eqnarray*}
then $\{F_k\mathcal G(H)\}$ and $F_k\mathcal P(H)$ are filtrations of $\mathcal G(H)$ respectively $\mathcal P(H)$. Moreover:
\begin{enumerate}
\item The exponential map induces a set bijection $\text{gr}\,\mathcal P(H)\rightarrow\text{gr}\,\mathcal G(H)$.
\item 
\begin{eqnarray*}
	\mathcal P(H)&\simeq & \varprojlim \mathcal P(H)/F_k\mathcal P(H)\\
	\mathcal G(H)&\simeq & \varprojlim\mathcal G(H)/F_k\mathcal G(H).
\end{eqnarray*}
\end{enumerate}
\end{proposition}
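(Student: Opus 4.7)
The plan is to check each claim directly, leveraging the compatibility of the multiplication with the filtration $F_\bullet H$ and the continuity of the comultiplication $\Delta$ on $H$.

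\textbf{Filtration properties.} First I would verify that $\{F_k\mathcal{P}(H)\}$ is a decreasing filtration of $\mathcal{P}(H)$ as a Lie algebra: decreasingness is inherited from $F_kH$, and for $x \in F_p\mathcal{P}(H)$, $y \in F_q\mathcal{P}(H)$ the bracket $[x,y] = xy - yx$ lies in $F_{p+q}H \cap \mathcal{P}(H) = F_{p+q}\mathcal{P}(H)$ since $F_pH \cdot F_qH \subset F_{p+q}H$. For $\{F_k\mathcal{G}(H)\}$, the key step is closure under products and inverses. If $g = 1+u$, $h = 1+v$ with $u,v \in F_kH$, then $gh - 1 = u + v + uv \in F_kH$. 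For inverses, note that $g^{-1} = \sum_{n\geq 0}(-u)^n$ converges in $H$ by completeness, and $g^{-1} - 1 = -u + u^2 - \cdots \in F_kH$ because each term after the first lies in $F_{nk}H \subset F_kH$.

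\textbf{Bijection on graded pieces.} Since every primitive element has augmentation zero, $\mathcal{P}(H) \subset F_1H$, so only $k \geq 1$ is interesting. For $x \in F_k\mathcal{P}(H)$ with $k \geq 1$, a direct computation gives
\[
\exp(x) - 1 = x + \sum_{n\geq 2}\frac{x^n}{n!},
\]
where the tail lies in $F_{2k}H \subset F_{k+1}H$. Hence $\exp(x) \in F_k\mathcal{G}(H)$ and $\exp(x) \equiv 1 + x \pmod{F_{k+1}H}$. To show the assignment $[x] \mapsto [\exp(x)]$ from $\mathrm{gr}_k\,\mathcal{P}(H)$ to $\mathrm{gr}_k\,\mathcal{G}(H)$ is well defined and bijective, I would use the Baker--Campbell--Hausdorff series: if $x - y \in F_{k+1}\mathcal{P}(H)$, then $\mathrm{BCH}(x,-y) = (x-y) + \frac{1}{2}[x,-y] + \cdots$ lies in $F_{k+1}\mathcal{P}(H)$ because $[x,-y] \in F_{2k}H \subset F_{k+1}H$ for $k \geq 1$, and then $\exp(x)\exp(y)^{-1} = \exp(\mathrm{BCH}(x,-y)) \in F_{k+1}\mathcal{G}(H)$. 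Injectivity and surjectivity follow symmetrically from the analogous congruence $\log(1+y) \equiv y \pmod{F_{k+1}H}$ and the preceding corollary, which says $\exp$ and $\log$ are mutually inverse bijections between $\mathcal{P}(H)$ and $\mathcal{G}(H)$.

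\textbf{Completeness.} For both $\mathcal{P}(H)$ and $\mathcal{G}(H)$ I would argue that these are closed subsets of the complete space $H$, so any coherent sequence in the inverse system of quotients can be lifted to a Cauchy sequence in $H$, whose limit automatically belongs back to the subset. Concretely, given $(\bar{x}_k) \in \varprojlim \mathcal{P}(H)/F_k\mathcal{P}(H)$, choose lifts $x_k \in \mathcal{P}(H)$ with $x_{k+1} - x_k \in F_k\mathcal{P}(H)$; this is a Cauchy sequence in $H$ and converges to some $x \in H$. Since $\Delta$ is a morphism of complete augmented algebras, hence continuous, passing to the limit in $\Delta x_k = x_k \hat\otimes 1 + 1\hat\otimes x_k$ yields $x \in \mathcal{P}(H)$. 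The same argument applied to a coherent sequence $(\bar{g}_k) \in \varprojlim \mathcal{G}(H)/F_k\mathcal{G}(H)$ with lifts $g_k = 1 + y_k$ produces a limit $g = 1 + y$ with $\Delta g = g\hat\otimes g$; since $g \in 1 + I_H$ it is automatically nonzero, hence group-like.

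\textbf{Main obstacle.} The only nontrivial point is the well-definedness and injectivity of the graded exponential map, which pivots on the BCH estimate $[F_kH, F_kH] \subset F_{2k}H \subset F_{k+1}H$ for $k \geq 1$; the case $k=0$ is degenerate and requires the observation that $\mathcal{P}(H) \subset F_1H$, so that the filtration effectively begins at index one. Everything else reduces to continuity arguments for $\Delta$ and convergence of standard power series in the complete algebra $H$.
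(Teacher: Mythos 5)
The paper does not actually prove this proposition --- it is stated with a bare citation to Quillen --- so there is no in-text argument to compare yours against; your write-up supplies the standard proof, and it is essentially correct. The three ingredients you use (multiplicativity of the filtration, the congruences $\exp(x)\equiv 1+x$ and $\log(1+u)\equiv u$ modulo $F_{2k}H\subset F_{k+1}H$ for $k\geq 1$, and lifting coherent sequences through the continuous maps $\Delta$ and the quotient projections) are exactly what the claim needs, and your observation that $\mathcal P(H)\subset I_H=F_1H$ correctly disposes of the degenerate index $k=0$. Two points are worth tightening. First, to call $\{F_k\mathcal G(H)\}$ a filtration of the \emph{group} (and to make $\operatorname{gr}_k\mathcal G(H)=F_k\mathcal G(H)/F_{k+1}\mathcal G(H)$ a quotient rather than just a coset space) you should also record the commutator estimate: for $g=1+u$, $h=1+v$ with $u\in F_pH$, $v\in F_qH$ one has $ghg^{-1}h^{-1}-1=(uv-vu)g^{-1}h^{-1}\in F_{p+q}H$, which in particular gives normality of $F_{k+1}\mathcal G(H)$ in $F_k\mathcal G(H)$; this costs one line with the tools you already have. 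Second, your completeness argument only establishes surjectivity of the canonical maps $\mathcal P(H)\to\varprojlim\mathcal P(H)/F_k\mathcal P(H)$ and $\mathcal G(H)\to\varprojlim\mathcal G(H)/F_k\mathcal G(H)$; injectivity requires separatedness, $\bigcap_k F_kH=0$, which holds because $H=\varprojlim H/F_nH$, and should be said explicitly. The appeal to BCH for well-definedness of the graded exponential is legitimate (its convergence follows from the same Cauchy argument the paper uses in Example \ref{ex:bchi}) but is heavier than necessary: comparing $\exp(x)-\exp(y)$ termwise modulo $F_{k+1}H$, together with the identity $gh^{-1}-1=(g-h)h^{-1}$, gives the same conclusion without invoking the BCH series.
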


\begin{example}
\label{ex:bchi}
If $H=\hat{\mathcal U}(\mathfrak g)$, the previous proposition implies that, for all $x,y\in\mathfrak g$, $\operatorname{BCH}(x,y)\in\mathcal P\big(\hat{\mathcal U}(\mathfrak g)\big)$, i.e., $\operatorname{BCH}(x,y)$ is \emph{convergent} for all $x,y\in\mathfrak g$. The proof of this statement is based on two observations. First, $\operatorname{BCH}(x,y)$ is a Lie series in $x,y$ that, seen as an element of $\hat{\mathcal U}(\mathfrak g)$ can be written as
\begin{equation}
	\operatorname{BCH}(x,y)=\sum_{n=0}^\infty z_n(x,y),
\end{equation}
where, for each $n\geq 0$, $z_n(x,y)$ is the \emph{non-commutative homogeneous polynomial} of degree $n$ in $x,y$, obtained from the corresponding Lie polynomial in $\operatorname{BCH}(x,y)$ using the relation $[x,y]=xy-yx$. Second, the sequence $\{S_n\}_{n\geq 0}$, where $S_n=\sum_{k=0}^nz_k(x,y)$ is \emph{Cauchy}.
\end{example}


\section{Pre- and post--Lie algebras}
\label{sec:lieapospre}

In this section we will introduce the definitions and the main properties of a pre- and post-Lie algebra, stressing the relevance of these notions in the theory of smooth manifolds and Lie groups.

An algebra $(A,\cdot)$ is called {\it{Lie admissible}} if the bracket $[\cdot,\cdot]:A\otimes A\rightarrow A$ defined by anti-symmetrization, $[x,y]:=x\cdot y-y\cdot x$, for all $x,y\in A$, is a Lie bracket, i.e., if it satisfies the Jacobi identity. For example every associative algebra is Lie admissible.  Given $(A,\cdot)$, let 
\begin{equation}
\label{eq:associator}
	a_\cdot(x,y,z):=(x\cdot y)\cdot z-x\cdot (y\cdot z),\quad \forall x,y,z\in A
\end{equation}
be the \emph{associator} defined for the product $\cdot$. Note that $(A,\cdot)$ is associative if and only if $a_\cdot(x,y,z)=0$, for all $x,y,z\in A$. In the next two subsections the notions of pre- and post-Lie algebras are introduced. Such algebras are rather natural from the viewpoint of geometry. Moreover, later we will see that they are closely related to solutions of classical Yang--Baxter equations.


\subsection{Pre-Lie algebra}
\label{ssect:pre-lie}

Weakening the condition $a_\cdot(x,y,z)=0$, one arrives at a class of Lie admissible algebras, called pre-Lie algebras, which is more general than that of associative algebras. 
For a nice and general introduction to pre-Lie algebras see \cite{Burde} and \cite{Manchon} and the references therein.

\begin{definition}\label{def:pre-Lie}
$(A,\cdot)$ is a \emph{left pre-Lie algebra} if, for all $x,y,z \in A$
\begin{equation}
	a_{\cdot}(x,y,z)=a_\cdot(y,x,z).
	\label{eq:lpre-Lie}
\end{equation}
\end{definition}

Note that together with the notion of left pre-Lie algebra one can introduce that of a \emph{right pre-Lie algebra} where condition \eqref{eq:lpre-Lie} is traded for $a_\cdot(x,y,z)=a_\cdot(x,z,y)$, for all $x,y,z\in A$. The notions of right and left pre-Lie algebras are equivalent. Indeed, if $(A,\cdot)$ is a left (right) pre-Lie algebra, then $(A,\cdot^{\text{op}})$ is a right (left) pre-Lie algebra, where $x \cdot^{\text{op}} y = y \cdot x$. For this reason, from now on, we will focus on the case of left pre-Lie algebras, which will be called simply pre-Lie algebras. 

Let $(A,\cdot)$ be a pre-Lie algebra and let $\nabla: A\rightarrow\operatorname{End}(A)$ be the morphism defined by $\nabla(x):=\nabla_x: A\rightarrow A,\,\nabla_xy=x\cdot y$. Then, the pre-Lie condition implies that 
\[
	[\nabla_x,\nabla_y]=\nabla_{[x,y]},\quad \forall x,y\in A,
\]
that is, $\nabla:A\rightarrow\operatorname{End}(A)$ is a morphism of Lie algebras, where the Lie brackets of $A$ and of $\operatorname{End}(A)$ are defined by skew-symmetrizing the pre-Lie product of $A$, respectively, the associative product of $\operatorname{End}(A)$. It is worth to recall that the Lie algebra structure on $A$ defined by skew-symmetrizing the pre-Lie product is called \emph{subordinate} to it, or equivalently, that the pre-Lie algebra structure is \emph{compatible} with the Lie algebra structure so defined. Furthermore, defining for $x,y \in A$ the expression $\mathrm{T}(x,y)=\nabla_xy-\nabla_yx-[x,y]$, it is obvious from the definition that $\mathrm{T}(x,y)=0$. From these observations, as it was already remarked in the Introduction, a source of examples of pre-Lie algebras can be found looking at locally flat manifolds, i.e. manifolds endowed with a linear flat and torsion free connection, see for example \cite{CK,Vinberg,DM,Medina} and references therein.
It is worth to note that a $n$-dimensional manifold $M$ admits a (linear) torsion-free and flat connection \emph{if and only if} it admits an \emph{affine structure}, i.e., a (maximal) atlas whose transition functions are constant and take values in $\operatorname{GL}_n(\mathbb F)\ltimes\mathbb F^n$. In fact, given such a $\nabla$, for all $m\in M$ one can find an open neighborhood $m\in U$ and $X_1,\dots,X_n\in\mathfrak X_M(U)$ a local frame for $TM$ such that $\nabla_{X_i}X_j=0$ for all $i,j=1,\dots, n$. Then, if $\alpha_1,\dots,\alpha_n$ is the \emph{dual} local frame, one has that $d\alpha_i=0$ for all $i$. Indeed, one verifies that 
\[
	d\alpha_i(X_j,X_k)=X_j\alpha_i(X_k)-X_k\alpha_i(X_j)-\alpha_i([X_j,X_k])
				    =-\alpha_i([X_j,X_k])=0,
\]
since $\alpha_i(X_j)=\delta_{ij}$, and $\alpha_i([X_j,X_k])=0$ due to the fact that $[X_j,X_k]=\nabla_{X_j}X_k - \nabla_{X_k}X_j -\mathrm{T}_\nabla(X_j,X_k)=0$. Then, on a neighborhood $V$ of $m \in M$, eventually contained in $U$, one can find $x_1,\dots,x_n\in C_M^\infty(V)$, such that $dx_i=\alpha_i$, for all $i=1,\dots,n$. The local functions $x_1,\dots,x_n$ so defined form a system of local coordinates on (a neighborhood of $M$ eventually smaller than) $V$. In this way one defines a system of local coordinates on $M$ such that, if $(V,x_1,\dots,x_n)$ and $(W,y_1,\dots,y_n)$ are two overlapping local charts, $dy_i=\sum_{k=1}^n T^k_idx_k$, where $T^k_i$, $k,i=1,\dots,n$, are the transition functions between the two local charts. Then $0=\nabla dy_i=\sum_{k=1}^ndT^k_i\wedge dx_k$, which implies that $dT^k_i=0$, for all $i,k=1,\dots,n$. From this it follows that the functions $T^k_i$ are (locally) constant, i.e., $T^k_i=\frac{\partial y_i}{\partial x_k}\in\mathbb F$ for all $i,k=1,\dots,n$, which implies that $y_i=\sum_{k=1}^nT^k_ix_k+C_i$, $C_i\in\mathbb F$, proving the statement. To prove that to every affine structure corresponds a flat and torsion-free linear connection one should follow backward all the steps of the argument just presented. A class of examples of manifolds endowed with an affine structure is presented in the following example.

\begin{example}[Invariant affine structures on Lie groups, see \cite{DM,Medina}]\label{ex:simplecticLieG} First, recall that given a vector field $X$ on a smooth manifold $M$, one can define the \emph{Lie derivative} $\mathscr L_X$ and the \emph{interior product} $i_X$, which are derivations of the full tensor algebra of $M$. Once restricted to the exterior algebra defined by $T^\ast M$, they become derivations of degree $0$ and degree $-1$, respectively. They are related by the formula $\mathscr L_X=i_X\circ d+d\circ i_X,$ where $d$ is Cartan's differential. In particular, given a differential $k$-form $\eta\in\Omega^k(M)$, then $\mathscr L_X \eta \in \Omega^k(M)$ and for all $m\in M$	
\[
	(\mathscr L_X\eta)_m=\left.\frac{d}{dt}\right\vert_{t=0}(\varphi_{X,t}^\ast\eta)_m,
\]
where $\{\varphi_{X,t}\}_{t\in\mathbb R}$ is the \emph{local $1$-parameter group of diffeomorphisms} defined by $X$.

A \emph{symplectic form} on a manifold $M$ is a  $2$-form which is \emph{closed} and \emph{non-degenerate}. The pair $(M,\omega)$ is called a \emph{symplectic manifold}. Given a symplectic manifold $(M,\omega)$ and Lie group $G$ acting on $M$ via $\varphi:G\times M\rightarrow M$, $\omega$ will be called $G$-invariant if $\varphi_g^\ast\omega=\omega$, for all $g\in G$. In particular,  a \emph{symplectic Lie group} is a pair $(G,\omega)$ consisting of a Lie group and a \emph{left-invariant symplectic form}, i.e., a symplectic form invariant with respect to \emph{left-translations}. Let $(G,\omega)$ be a symplectic Lie group and let $x,y$ be elements in the Lie algebra $\mathfrak g$ of $G$. Then, 
$\mathscr{L}_{X_x}(i_{X_y}\omega)$ is a a left-invariant $1$-form on $G$, to which corresponds the unique left-invariant vector field $X_z$, such that $-i_{X_z}\omega=\mathscr{L}_{X_x}i_{X_y}\omega.$ Note that, since $\mathscr{L}_{X_x}i_{X_y}\omega=i_{X_x}di_{X_y}\omega$, for each $f\in C^\infty(G)$ and for all $x,y\in\mathfrak g$, 
\[
	\mathscr{L}_{fX_x}i_{X_y}\omega
	=f\mathscr{L}_{X_x}i_{X_y}\omega\quad\text{and}\quad\mathscr{L}_{X_x}i_{fX_y}\omega
	=\langle df,X_x\rangle i_{X_y}\omega+f\mathscr{L}_{X_x}i_{X_y}\omega.
\]
In other words, defining $\nabla_{X_x}X_y$ as the unique left-invariant vector field such that
\begin{equation}
	-i_{\nabla_{X_x}X_y}\omega=\mathscr{L}_{X_x}i_{X_y}\omega,\label{eq:defcon}
\end{equation}
for all $X_x,X_y$ left-invariant vector fields, one sees that $\nabla$ admits a unique extension to a $G$-invariant linear connection on $G$. If one denotes still with $\nabla$ this connection, then $\nabla$ is \emph{flat} and \emph{torsion-free}. To prove this statement it suffices to show that $T_\nabla(X_x,X_y)=0$ and $R_\nabla(X_x,X_y)=0$ for all $x,y\in\mathfrak g$. Let us compute
\begin{eqnarray*}
	&&\mathscr{L}_{X_x}i_{X_y}\omega-\mathscr{L}_{X_y}i_{X_x}\omega-i_{[X_x,X_y]}\omega\\
	&=&\mathscr{L}_{X_x}i_{X_y}\omega-\mathscr{L}_{X_y}i_{X_x}\omega-\mathscr{L}_{X_x}i_{X_y}
	\omega+i_{X_y}\mathscr{L}_{X_x}\omega\\
	&=&-i_{X_y}di_{X_x}\omega+i_{X_y}di_{X_x}\omega\\
	&=&0,
\end{eqnarray*}
where we used that $d\omega=0$ and that $\mathscr{L}_X\alpha=di_X\alpha+i_Xd\alpha$, for all forms $\alpha$ and all vector fields $X$. Since $T_\nabla(X_x,X_y)$ is the unique left-invariant vector field such that $-i_{T_\nabla(X_x,X_y)}\omega=\mathscr{L}_{X_x}i_{X_y}\omega-\mathscr{L}_{X_y}i_{X_x}\omega-i_{[X_x,X_y]}\omega$, the non-degeneracy of $\omega$ forces $T_\nabla(X_x,X_y)=0$. Let us now observe that if $x,y,z\in\mathfrak g$ then $\nabla_{X_x}\nabla_{X_y}X_z$ and  $\nabla_{[X_x,X_y]}X_z$ are the unique left-invariant vector fields such that 
\[
	-i_{\nabla_{X_x}\nabla_{X_y}X_z}\omega=i_{X_x}d\big(i_{X_y}d(i_{X_z}\omega)\big)
\]
and, respectively, 
\[
	-i_{\nabla_{[X_x,X_y]}X_z}\omega=i_{[X_x,X_y]}di_{X_z}\omega.
\]
One sees that
\[
	i_{X_x}d\big(i_{X_y}d(i_{X_z}\omega)\big)-i_{X_y}d\big(i_{X_x}d(i_{X_z}\omega)\big)
	=i_{[X_x,X_y]}di_{X_z}\omega,\quad \forall x,y,z\,\mathfrak g,
\]
which again, by the non-degeneracy of $\omega$, is equivalent to
\[
	\nabla_{X_x}\nabla_{X_y}X_z-\nabla_{X_x}\nabla_{X_y}X_z
	=\nabla_{[X_x,X_y]}X_z,\quad \forall x,y,z\in\,\mathfrak g,
\]
proving the flatness of $\nabla$. In other words we have shown that

\begin{theorem}[\cite{DM}]
Every symplectic Lie group $(G,\omega)$ admits an affine structure.
\end{theorem}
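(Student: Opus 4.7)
The plan is to construct an explicit flat, torsion-free linear connection on $G$ from the symplectic datum, and then invoke the equivalence (already recalled in the excerpt) between the existence of such a connection and the existence of an affine structure. All the key computations are essentially assembled in the paragraphs preceding the statement, so the proof will really be a matter of organizing them and pointing out where left-invariance and non-degeneracy of $\omega$ are used.

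First I would define $\nabla$ on left-invariant vector fields by formula \eqref{eq:defcon}: given $x,y\in\mathfrak g$, let $\nabla_{X_x}X_y$ be the unique left-invariant vector field satisfying $-i_{\nabla_{X_x}X_y}\omega=\mathscr L_{X_x}i_{X_y}\omega$, which exists by non-degeneracy of $\omega$ and is well-defined because $\mathscr L_{X_x}i_{X_y}\omega=i_{X_x}d\,i_{X_y}\omega$ is left-invariant (since both $X_x,X_y$ and $\omega$ are). Next I would extend $\nabla$ to a genuine linear connection on $TG$. The two Leibniz-type identities displayed just before \eqref{eq:defcon} show that the prescription is $C^\infty(G)$-linear in the first slot and a derivation in the second, so by the standard argument it extends uniquely to a $G$-invariant linear connection $\nabla$ on the whole tangent bundle.

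The next step, and the conceptual heart of the argument, is to check torsion-freeness and flatness on left-invariant vector fields (by invariance this is enough). For torsion, I would compute $\mathscr L_{X_x}i_{X_y}\omega-\mathscr L_{X_y}i_{X_x}\omega-i_{[X_x,X_y]}\omega$ using Cartan's magic formula $\mathscr L_X=di_X+i_Xd$, the identity $i_{[X,Y]}=\mathscr L_Xi_Y-i_Y\mathscr L_X$, and the closedness $d\omega=0$; this expression collapses to $0$, and non-degeneracy of $\omega$ forces $T_\nabla(X_x,X_y)=0$. For flatness the computation is analogous: writing $-i_{\nabla_{X_x}\nabla_{X_y}X_z}\omega=i_{X_x}d(i_{X_y}d(i_{X_z}\omega))$ and the similar expression for $\nabla_{[X_x,X_y]}X_z$, and using $d\omega=0$ together with Cartan's formula, one checks
\[
i_{X_x}d\bigl(i_{X_y}d(i_{X_z}\omega)\bigr)-i_{X_y}d\bigl(i_{X_x}d(i_{X_z}\omega)\bigr)=i_{[X_x,X_y]}d\,i_{X_z}\omega,
\]
which, by non-degeneracy of $\omega$, is exactly $R_\nabla(X_x,X_y)X_z=0$.

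Finally I would conclude by quoting the equivalence established in the discussion just before this example: a manifold $M$ admits an affine structure if and only if it carries a flat, torsion-free linear connection. Since we have produced such a connection on $G$, this gives the affine structure and finishes the proof. I expect the only subtle point to be bookkeeping: making sure the extension of $\nabla$ from left-invariant fields to arbitrary fields is actually well-defined as a connection (checking $C^\infty$-linearity in the first argument and the Leibniz rule in the second), but these are exactly the two displayed identities and require no new ideas.
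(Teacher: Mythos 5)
Your proposal follows the paper's own argument essentially verbatim: define $\nabla$ on left-invariant fields by $-i_{\nabla_{X_x}X_y}\omega=\mathscr L_{X_x}i_{X_y}\omega$, extend it via the two displayed Leibniz identities, verify $\mathrm{T}_\nabla=0$ and $\mathrm{R}_\nabla=0$ using Cartan's formula, $d\omega=0$ and non-degeneracy, and conclude by the equivalence between flat torsion-free connections and affine structures. The argument is correct and matches the paper's proof.
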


In particular, since for all $x,y\in\mathfrak g$ there exists a (unique) $z\in\mathfrak g$ such that $\nabla_{X_x}X_y=X_z$, the underlying vector space of the Lie algebra $\mathfrak g$ results being endowed with a product $\cdot:\mathfrak g\otimes\mathfrak g\rightarrow\mathfrak g$ defined by 
\begin{equation}
	x\cdot y=z, \quad \forall\,x,y,z \quad\text{s.t.}\quad X_z
	=\nabla_{X_x}X_y.\label{eq:pre-Lies}
\end{equation}
Since $\nabla$ is flat and torsion-free, it is easy to show that $\cdot$ is a \emph{pre-Lie} product on the vector space underlying $\mathfrak g$ and that, for all $x,y\in\mathfrak g$, $x\cdot y-y\cdot x=[x,y]$. In other words

\begin{corollary}
The Lie algebra of a symplectic Lie group is subordinate to the pre-Lie product defined in \eqref{eq:pre-Lies}.
\end{corollary}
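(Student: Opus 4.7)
The corollary is an immediate consequence of the theorem together with the standard characterization of pre-Lie structures coming from flat, torsion-free connections. The plan is therefore to verify two identities directly from the definition of the product $\cdot$ in \eqref{eq:pre-Lies}: first, that $x\cdot y - y\cdot x = [x,y]$ (subordination of the Lie bracket), and second, that the associator of $\cdot$ is symmetric in its first two arguments (left pre-Lie identity). Both will follow by translating statements about the connection $\nabla$ on left-invariant vector fields into statements about elements of $\mathfrak{g}$ via the isomorphism $x\mapsto X_x$.

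For the first identity, I would start from the torsion-free property $T_\nabla(X_x,X_y) = \nabla_{X_x}X_y - \nabla_{X_y}X_x - [X_x,X_y] = 0$, established in the proof of the preceding theorem. Since all three terms are left-invariant, evaluating the identity $\nabla_{X_x}X_y - \nabla_{X_y}X_x = [X_x,X_y] = X_{[x,y]}$ and using \eqref{eq:pre-Lies} gives $X_{x\cdot y} - X_{y\cdot x} = X_{[x,y]}$, hence $x\cdot y - y\cdot x = [x,y]$ in $\mathfrak{g}$.

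For the pre-Lie identity, I would compute both sides of $a_\cdot(x,y,z) = a_\cdot(y,x,z)$ by iterating the definition \eqref{eq:pre-Lies}: one has $X_{(x\cdot y)\cdot z} = \nabla_{\nabla_{X_x}X_y}X_z$ and $X_{x\cdot(y\cdot z)} = \nabla_{X_x}\nabla_{X_y}X_z$, so
\[
	X_{a_\cdot(x,y,z) - a_\cdot(y,x,z)} = \nabla_{\nabla_{X_x}X_y - \nabla_{X_y}X_x}X_z - \big(\nabla_{X_x}\nabla_{X_y} - \nabla_{X_y}\nabla_{X_x}\big)X_z .
\]
Substituting the torsion-free identity in the first term turns it into $\nabla_{[X_x,X_y]}X_z$, and the full expression becomes $-R_\nabla(X_x,X_y)X_z$, which vanishes by the flatness of $\nabla$. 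Thus $a_\cdot(x,y,z) = a_\cdot(y,x,z)$, as required by Definition~\ref{def:pre-Lie}.

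There is no genuine obstacle here; the only mild subtlety is keeping track of the fact that although $\nabla_{X_x}X_y$ is defined \emph{a priori} as a general vector field, the theorem guarantees it is again left-invariant, which is what allows one to read off an element of $\mathfrak{g}$ from each $\nabla_{X_x}X_y$ and to rewrite all connection computations purely at the level of $\mathfrak{g}$.
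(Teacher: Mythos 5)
Your proof is correct and is precisely the argument the paper leaves implicit: the paper simply asserts that flatness and torsion-freeness of $\nabla$ make \eqref{eq:pre-Lies} a pre-Lie product with $x\cdot y - y\cdot x = [x,y]$, and your two computations (torsion-free $\Rightarrow$ subordination, flatness plus torsion-freeness $\Rightarrow$ symmetry of the associator in the first two arguments) are the standard way to fill that in. No issues.
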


Finally, since $d\omega=0$, $\omega_e \in \mathscr Z^2(\mathfrak g,\mathbb F)$, where $\mathscr Z^2(\mathfrak g,\mathbb F)$ is the group of $2$-cocycles of $\mathfrak g$ with values in the trivial $\mathfrak g$-module $\mathbb F$, with respect to the cohomology of Cartan--Eilenberg of $\mathfrak g$ with coefficients in the trivial $\mathfrak g$-module $\mathbb F$. See for example~\cite{Fuks}. Hence, $\omega_e\in\operatorname{Hom}_{\mathbb F}(\Lambda^2\mathfrak g,\mathbb F)$ such that
\[
	\omega_e(x,[y,z])+\omega_e(z,[x,y])+\omega_e(y,[z,x])=0,\quad \forall x,y,z\in\mathfrak g,
\]
and since $\omega$ is non-degenerate, $\omega_e$ is also non-degenerate. On the other hand, if $\eta\in \mathscr Z^2(\mathfrak g,\mathbb F)$ is non-degenerate, it defines a unique left-invariant symplectic form $\omega_\eta$ on $G$ via the formula: 
\[
	{\omega_\eta}_g=(L_g)^\ast_e\eta,\quad \forall g\in G.
\]  
In other words, the left-invariant symplectic forms on $G$ are in one-to-one correspondence with the non-degenerate elements of $\mathscr Z^2(\mathfrak g,\mathbb F)$. See also Subsection \ref{ss:YBPP} for a more general approach to this kind of structures.
\end{example}


\subsection{Post-Lie algebra}
\label{ssect:post-lie}

The second class of algebras playing an central role in the present work is introduced in the following definition. 

\begin{definition}[\cite{Vallette,LMK1}] \label{def:post-Lie} 
Let $(\mathfrak g, [\cdot,\cdot])$ be a Lie algebra, and let $\triangleright  : {\mathfrak g} \otimes {\mathfrak g} \rightarrow \mathfrak g$ be a binary product such that for all $x,y,z \in \mathfrak g$
\begin{equation}
\label{post-Lie1}
	x \triangleright [y,z] = [x\triangleright y , z] + [y , x \triangleright z],
\end{equation}
and
\begin{equation}
\label{post-Lie2}
	[x,y] \triangleright z = {\rm{a}}_{\triangleright  }(x,y,z) - {\rm{a}}_{\triangleright  }(y,x,z).
\end{equation}
Then $(\mathfrak{g}, [\cdot,\cdot], \triangleright)$ is called a \emph{left post-Lie algebra}. 
\end{definition}

Relation \eqref{post-Lie1} implies that for every left post-Lie algebra the natural linear map $d_ \triangleright: \mathfrak g \rightarrow\operatorname{End}_{\mathbb F}(\mathfrak g)$ defined by $d_ \triangleright(x)(y) \rightarrow x \triangleright y$ takes values in the derivations of the Lie algebra $(\mathfrak g,[\cdot,\cdot])$.

Together with the notion of left post-Lie algebra one can introduce that of \emph{right post-Lie algebra} $(\mathfrak g,[\cdot,\cdot],\triangleleft)$. Also in this case $(\mathfrak g,[\cdot,\cdot])$ is a Lie algebra and $\triangleleft:\mathfrak g\otimes\mathfrak g\rightarrow\mathfrak g$ is a binary product such that for each $x\in\mathfrak g$, $d_ \triangleleft(x)(y)=x \triangleleft y$ is a derivation of $(\mathfrak g,[\cdot,\cdot])$ and
the analogue of \eqref{post-Lie2} is 
\[
	[x,y] \triangleleft z = {\rm{a}}_{\triangleleft}(y,x,z) - {\rm{a}}_{\triangleleft}(x,y,z),\quad \forall x,y,z\in\mathfrak g.
\]

\begin{proposition}[\cite{LMK1}]\label{pro:postL1}
If $(\mathfrak{g}, [\cdot,\cdot], \triangleright)$ is a left post-Lie algebra, then $(\mathfrak g,[\cdot,\cdot],\triangleleft)$, where
\[
	x \triangleleft y := x \triangleright y -[x,y]
\]
is a right post-Lie algebra.	
\end{proposition}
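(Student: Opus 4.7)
Two identities must be checked: first, that $d_\triangleleft(x)(y) := x\triangleleft y$ is a derivation of $(\mathfrak g,[\cdot,\cdot])$, i.e.,
\[
	x\triangleleft [y,z] = [x\triangleleft y, z] + [y, x\triangleleft z],
\]
and second, the right post-Lie identity
\[
	[x,y]\triangleleft z = a_\triangleleft(y,x,z) - a_\triangleleft(x,y,z).
\]
Both should follow from the defining relations \eqref{post-Lie1}, \eqref{post-Lie2} of the left post-Lie structure together with the Jacobi identity for $[\cdot,\cdot]$ and bilinearity.

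For the derivation property, my plan is simply to expand the left-hand side using $x\triangleleft v = x\triangleright v - [x,v]$, then apply \eqref{post-Lie1} to $x\triangleright [y,z]$. On the right-hand side, I expand both $\triangleleft$ products and collect terms. The common $[x\triangleright y,z]$ and $[y,x\triangleright z]$ pieces match immediately, and the equality of the remaining pure $[\cdot,\cdot]$ brackets boils down to $[x,[y,z]] = [[x,y],z] + [y,[x,z]]$, which is the Jacobi identity rewritten.

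The associator identity is the more delicate step and is where I expect the main bookkeeping obstacle. The strategy is to expand $a_\triangleleft(x,y,z)$ by substituting $u\triangleleft v = u\triangleright v - [u,v]$ into each occurrence, obtaining
\[
	a_\triangleleft(x,y,z) = a_\triangleright(x,y,z) - [x,y]\triangleright z + [[x,y],z] + [y,x\triangleright z] + [x,y\triangleright z] - [x,[y,z]],
\]
where \eqref{post-Lie1} (applied to $x\triangleright[y,z]$) is used to simplify the mixed terms. Then I form $a_\triangleleft(y,x,z) - a_\triangleleft(x,y,z)$: the skew-symmetric part $a_\triangleright(y,x,z)-a_\triangleright(x,y,z)$ equals $-[x,y]\triangleright z$ by \eqref{post-Lie2}; the $[x,y]\triangleright z$ and $[[x,y],z]$ contributions combine into $[x,y]\triangleright z - [[x,y],z] = [x,y]\triangleleft z$ plus extra pure-bracket residues; the symmetric terms $[y,x\triangleright z]+[x,y\triangleright z]$ cancel; and the remaining residue $[x,[y,z]] - [y,[x,z]] - [[x,y],z]$ vanishes by Jacobi.

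The only genuine obstacle is keeping track of signs and matching identical pure-bracket terms across the two associator expansions; conceptually nothing beyond \eqref{post-Lie1}, \eqref{post-Lie2} and Jacobi is needed, so the proof is essentially a careful algebraic verification.
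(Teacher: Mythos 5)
Your proposal is correct and follows essentially the same route as the paper: both verify the derivation property by expanding $\triangleleft$ and invoking \eqref{post-Lie1} together with the Jacobi identity in the form $[x,[y,z]]=[[x,y],z]+[y,[x,z]]$, and both establish the associator identity by expanding $a_\triangleleft$ in terms of $a_\triangleright$ and pure brackets and then applying \eqref{post-Lie2} and Jacobi. Your intermediate expansion of $a_\triangleleft(x,y,z)$ and the final residue cancellation both check out, so only the routine sign bookkeeping remains to be written out.
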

\begin{proof}
First, we show that 
\allowdisplaybreaks
\begin{eqnarray*}
	x\triangleleft [y,z]&=&x\triangleright [y,z]-[x,[y,z]]\\
				&=&[x\triangleright y,z]+[y,x\triangleright z]-[[x,y],z]-[y,[x,z]]\\
				&=&[x\triangleright y-[x,y],z]+[y,x\triangleright z-[x,z]]\\
				&=&[x\triangleleft y,z]+[y,x\triangleleft z].
\end{eqnarray*}
From
\begin{equation}
	[x,y]\triangleleft z=[x,y]\triangleright z-[[x,y],z],\label{eq:a}
\end{equation}
and
\allowdisplaybreaks
\begin{eqnarray}
	(y\triangleleft x)\triangleleft z
	&=&(y\triangleright x)\triangleright z-[y\triangleright x,z]-
	[y,x]\triangleright z+[[y,x],z] \label{eq:b}\\
	y\triangleleft (x\triangleleft z)
	&=&y\triangleright (x\triangleright z)-[y,x\triangleright z]-y\triangleright [x,z]+[y,[x,z]].\label{eq:c}
\end{eqnarray}
one deduces that
\[
	a_{\triangleleft}(y,x,z)-a_{\triangleleft}(x,y,z)=[x,y]\triangleright z-[[x,y],z],
\]
which is what we needed to show, see formula \eqref{eq:a}.
\end{proof}

Moreover, though post-Lie algebras are not Lie-admissible, one can prove the following proposition. 
  
\begin{proposition}[\cite{LMK1}]
\label{prop:post-lie}
Let $(\mathfrak g, [\cdot, \cdot], \triangleright)$ be a left post-Lie algebra. The bracket
\begin{equation}
\label{post-Lie3}
	\llbracket x,y \rrbracket := x \triangleright y - y \triangleright x - [x,y]
\end{equation}
satisfies the Jacobi identity for all $x, y \in \mathfrak g$, and it defines on $\mathfrak g$ the structure of a Lie algebra. 
\end{proposition}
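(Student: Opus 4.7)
The bracket $\llbracket \cdot,\cdot\rrbracket$ is manifestly skew-symmetric, so what remains is the Jacobi identity
\[
	J(x,y,z) := \sum_{\circlearrowleft}\llbracket \llbracket x,y \rrbracket ,z\rrbracket = 0,
\]
where $\sum_{\circlearrowleft}$ denotes cyclic summation over $(x,y,z)$. The plan is to expand each nested bracket using the definition \eqref{post-Lie3}, use the two post-Lie axioms to rewrite all occurrences of $[x,y]\triangleright z$ and $z\triangleright [x,y]$, and then verify that the resulting expression is annihilated by cyclic summation, using at the very end the Jacobi identity for $[\cdot,\cdot]$.

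Concretely, I would first expand
\[
	\llbracket \llbracket x,y\rrbracket,z\rrbracket
	= \llbracket x,y\rrbracket\triangleright z - z\triangleright \llbracket x,y\rrbracket
	- [\llbracket x,y\rrbracket,z].
\]
Substituting $\llbracket x,y\rrbracket=x\triangleright y - y\triangleright x - [x,y]$ produces three classes of terms: iterated triangle products $(a\triangleright b)\triangleright c$ and $a\triangleright(b\triangleright c)$, mixed terms of the form $[a\triangleright b,c]$ and $[a,b\triangleright c]$, and the purely Lie term $[[x,y],z]$. The next step is to apply axiom \eqref{post-Lie2} to replace $[x,y]\triangleright z$ by the weighted associator $a_\triangleright(x,y,z)-a_\triangleright(y,x,z)$, so that every occurrence of $[\cdot,\cdot]\triangleright\,\cdot$ is converted into iterated triangle products; and to apply \eqref{post-Lie1} to expand $z\triangleright[x,y]=[z\triangleright x,y]+[x,z\triangleright y]$, removing the remaining mixed term.

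After these substitutions, the sum $J(x,y,z)$ decomposes into three cyclic sums that I would handle separately. The iterated triangle terms collapse to a cyclic sum of the form
\[
	\sum_{\circlearrowleft}\bigl(x\triangleright(y\triangleright z)-y\triangleright(x\triangleright z)-z\triangleright(x\triangleright y)+z\triangleright(y\triangleright x)\bigr),
\]
which vanishes because the first and third summands, and the second and fourth, pair off under relabelling. The mixed bracket terms arrange themselves into three pairs of the form $[a,b\triangleright c]+[b\triangleright c,a]$, each of which is zero by skew-symmetry of $[\cdot,\cdot]$. Finally, the purely Lie contribution is $\sum_{\circlearrowleft}[[x,y],z]$, which vanishes by the Jacobi identity in $\mathfrak g$.

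The main obstacle is not conceptual but bookkeeping: one must track signs carefully when expanding $\llbracket\llbracket x,y\rrbracket,z\rrbracket$ and ensure that the replacement rules from \eqref{post-Lie1} and \eqref{post-Lie2} are applied consistently before cyclic summation. Once the terms are sorted into the three classes above, each class is eliminated by a short symmetry argument, and no further structural input is required beyond the two post-Lie axioms and the Jacobi identity for $[\cdot,\cdot]$.
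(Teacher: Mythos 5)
Your plan is correct and coincides with the paper's proof, which consists of the single remark that the claim ``follows from a direct computation using the identities \eqref{post-Lie1} and \eqref{post-Lie2}'': your expansion --- applying \eqref{post-Lie2} so that the terms $(x\triangleright y)\triangleright z$ and $(y\triangleright x)\triangleright z$ cancel and $\llbracket x,y\rrbracket\triangleright z$ reduces to $x\triangleright(y\triangleright z)-y\triangleright(x\triangleright z)$, expanding $z\triangleright[x,y]$ via \eqref{post-Lie1}, and then disposing of the three classes of terms by cyclic relabelling, skew-symmetry, and the Jacobi identity for $[\cdot,\cdot]$ --- is exactly that computation carried out. One small bookkeeping correction: the cyclic sum contains twelve mixed terms, which cancel in six pairs, three of them identically and three via skew-symmetry of $[\cdot,\cdot]$, rather than in three pairs as you state; this does not affect the validity of the argument.
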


\begin{proof}
It follows from a direct computation using the identities \eqref{post-Lie1} and \eqref{post-Lie2}.
\end{proof}

In particular, as consequence of the previous result one has 

\begin{corollary} Given a left post-Lie algebra $(\mathfrak g,[\cdot,\cdot],\triangleleft)$, the product $\succ:\mathfrak g\otimes\mathfrak g\rightarrow\mathfrak g$, defined by
\[
	x \succ y := x \triangleright y + \frac{1}{2}[x,y],\quad \forall x,y\in\mathfrak g
\] 
defines on $\mathfrak g$ the structure of Lie admissible algebra.	
\end{corollary}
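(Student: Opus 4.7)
The plan is to verify the defining condition of Lie admissibility: that the antisymmetrization $[x,y]_{\succ} := x \succ y - y \succ x$ satisfies the Jacobi identity. A direct expansion using the definition of $\succ$ gives
\[
    [x,y]_{\succ} = \bigl(x \triangleright y + \tfrac{1}{2}[x,y]\bigr) - \bigl(y \triangleright x + \tfrac{1}{2}[y,x]\bigr) = x \triangleright y - y \triangleright x + [x,y],
\]
which should be compared with the bracket $\llbracket\cdot,\cdot\rrbracket$ of \eqref{post-Lie3}. The heart of the argument is then Proposition \ref{prop:post-lie}, which has already established that $\llbracket\cdot,\cdot\rrbracket$ is a Lie bracket; the present corollary reduces, up to a trivial rearrangement involving $[\cdot,\cdot]$, to that same proposition.

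To carry out the Jacobi check directly, I would expand $\sum_{\circlearrowleft}[[x,y]_{\succ},z]_{\succ}$ and sort the resulting terms into three groups: (i) pure $\triangleright$-terms of the form $(a \triangleright b) \triangleright c$ and $a \triangleright (b \triangleright c)$, (ii) mixed terms involving one occurrence of $\triangleright$ and one of $[\cdot,\cdot]$, and (iii) pure bracket terms in $[\cdot,\cdot]$. The operator identity $[L_{\triangleright}(x), L_{\triangleright}(y)] = L_{\triangleright}(\llbracket x,y\rrbracket)$, where $L_{\triangleright}(x)(y) := x \triangleright y$, is a direct reformulation of \eqref{post-Lie2} and handles group (i); axiom \eqref{post-Lie1} (the derivation property $x \triangleright [y,z] = [x \triangleright y, z] + [y, x \triangleright z]$) handles group (ii); and the Jacobi identity for the ambient Lie bracket $[\cdot,\cdot]$ settles group (iii).

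The main obstacle is the combinatorial bookkeeping when collecting and cancelling terms after cyclic relabeling: numerous $\triangleright$ and $[\cdot,\cdot]$ contributions appear, and keeping track of signs across cyclic permutations is the principal source of potential error. Working at the operator level via $L_{\triangleright}$, rather than pointwise, streamlines the verification considerably and effectively reduces the statement to a restatement of Proposition \ref{prop:post-lie}.
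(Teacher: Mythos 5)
Your computation of the skew-symmetrization is correct as far as it goes, but the conclusion you draw from it is exactly where the argument breaks down. You find
\[
	x\succ y - y\succ x \;=\; x\triangleright y - y\triangleright x + [x,y],
\]
and then assert that this "should be compared with" the bracket \eqref{post-Lie3} and that the corollary reduces to Proposition \ref{prop:post-lie} "up to a trivial rearrangement involving $[\cdot,\cdot]$". But \eqref{post-Lie3} reads $\llbracket x,y\rrbracket = x\triangleright y - y\triangleright x \mathbf{-} [x,y]$, so what you have obtained is $\llbracket x,y\rrbracket + 2[x,y]$, and the discrepancy is not cosmetic: a sum of two Lie brackets is a Lie bracket only if the brackets are compatible. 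If you actually carry out the direct Jacobi check you sketch, then after using \eqref{post-Lie1} for the mixed terms, \eqref{post-Lie2} (your operator identity $[L_\triangleright(x),L_\triangleright(y)]=L_\triangleright(\llbracket x,y\rrbracket)$) for the pure $\triangleright$-terms, and the ambient Jacobi identity for the pure bracket terms, you are left with the residual $2\sum_{\circlearrowleft}[x,y]\triangleright z$, which does not vanish for a general left post-Lie algebra. Concretely, take $\mathfrak g=\mathfrak{sl}_2$ with $x\triangleright y:=[\pi_+x,y]$, where $\pi_+$ is the projection onto the subalgebra $\langle h,e\rangle$ along $\langle f\rangle$; this is a left post-Lie product by Theorem \ref{thm:post-Lie1}. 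Then $x\triangleright y-y\triangleright x+[x,y]=[x,y]_R+2[x,y]$ has Jacobiator $2\sum_{\circlearrowleft}[x,y]\triangleright z=4h\neq0$ on the triple $(e,f,h)$, so with the literal reading $x\succ y=x\triangleright y+\tfrac12[x,y]$ the product $\succ$ is \emph{not} Lie admissible.

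The statement is salvaged --- and this is the reading under which the paper's one-line deduction from Proposition \ref{prop:post-lie} is valid --- by taking the product in the displayed formula to be the right post-Lie product $\triangleleft$ that actually appears in the triple $(\mathfrak g,[\cdot,\cdot],\triangleleft)$ of the statement, i.e.\ $x\succ y:=x\triangleleft y+\tfrac12[x,y]=x\triangleright y-\tfrac12[x,y]$, using $x\triangleleft y=x\triangleright y-[x,y]$ from Proposition \ref{pro:postL1}. With this sign one gets $x\succ y-y\succ x=x\triangleright y-y\triangleright x-[x,y]=\llbracket x,y\rrbracket$ on the nose, and Lie admissibility is then immediate from Proposition \ref{prop:post-lie}; no further Jacobi computation is needed. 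The gap in your proposal is that you absorbed a genuine sign mismatch into a "trivial rearrangement": the extra $2[x,y]$ is precisely the obstruction, and the step that discards it is false.
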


Clearly, both the proposition and the corollary can be easily adapted to the case of right post-Lie algebra.

\begin{remark}\label{rem:notrem} A few remarks are in order. 

\begin{enumerate}
\item From now on, given a post-Lie algebra $(\mathfrak g,\triangleright,[\cdot,\cdot])$, we will denote by $\mathfrak g$ the Lie algebra with bracket $[\cdot,\cdot]$ and by $\overline{\mathfrak g}$ the Lie algebra with bracket $\llbracket\cdot,\cdot\rrbracket$.

\item Pre- and post-Lie algebras are important in the theory of numerical methods for differential equations. We refer the reader to \cite{Cartier11, ChaLiv, EFLMK,LMK1, Manchon} for background and details.

\item It is worth noting that if $(\mathfrak g,\triangleright,[\cdot,\cdot])$ is an \emph{abelian} left post-Lie algebra, i.e., $[\cdot,\cdot] \equiv 0$, then it reduces to the pre-Lie algebra $(\mathfrak g,\triangleright)$, whose underlying Lie algebra is $(\mathfrak g,\llbracket\cdot,\cdot\rrbracket)$, see \eqref{post-Lie2} and Definition \ref{def:pre-Lie}.
\end{enumerate}
\end{remark}
As for the case of pre-Lie algebras, differential geometry is a natural place to look for examples of post-Lie algebras, see for example \cite{BD,BDV,ZBG}. This is based on the well known result, see \cite{KobayashiNomizu} for example, saying that if $\nabla$ is a linear connection on $M$ then
\begin{proposition}\footnote{Formula \eqref{eq:bianchi} is known as Bianchi's $1$st identity. Among many other identities fulfilled by the covariant derivatives of the torsion and curvature of a linear connection, the so-called Bianchi's $2$nd identity is worth to recall:
\[
	\sum_{\circlearrowleft}\big((\nabla_X \mathrm{R})(Y,Z)+\mathrm{R}(\mathrm{T}(X,Y),Z)\big)
	=0,\quad \forall X,Y,Z\in\mathfrak X_M .
\]
}
\begin{equation}
	\sum_{\circlearrowleft}(\mathrm{R}(X,Y)Z-\mathrm{T}(\mathrm{T}(X,Y),Z)-(\nabla_X\mathrm{T})(Y,Z))=0,\label{eq:bianchi}
\end{equation}
for all $X,Y,Z\in\mathfrak X_M$
\end{proposition}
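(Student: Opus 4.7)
The plan is to expand every term in the sum according to the defining formulas for $\mathrm{T}$, $\mathrm{R}$, and the covariant derivative $\nabla_X \mathrm{T}$, then rearrange using cyclic relabeling, the skew-symmetry of $\mathrm{T}$, and the Jacobi identity for $[\cdot,\cdot]$. All operations are purely algebraic in $\mathfrak{X}_M$, so no local coordinate computation is required. The heart of the calculation is to reduce $\sum_\circlearrowleft (\nabla_X \mathrm{T})(Y,Z)$ to $\sum_\circlearrowleft \mathrm{R}(X,Y)Z - \sum_\circlearrowleft \mathrm{T}(\mathrm{T}(X,Y),Z)$.

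Concretely, starting from $(\nabla_X \mathrm{T})(Y,Z) = \nabla_X(\mathrm{T}(Y,Z)) - \mathrm{T}(\nabla_X Y,Z) - \mathrm{T}(Y,\nabla_X Z)$ and expanding $\mathrm{T}(Y,Z)=\nabla_Y Z-\nabla_Z Y-[Y,Z]$, I would first observe that cyclic summation gives
\[
\sum_\circlearrowleft \nabla_X \mathrm{T}(Y,Z) = \sum_\circlearrowleft [\nabla_X,\nabla_Y]Z - \sum_\circlearrowleft \nabla_X[Y,Z],
\]
using the cyclic relabeling identity $\sum_\circlearrowleft \nabla_X \nabla_Z Y = \sum_\circlearrowleft \nabla_Y \nabla_X Z$.

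Next, for the two remaining pieces, I would exploit the skew-symmetry of $\mathrm{T}$ together with another cyclic relabeling to combine
\[
\sum_\circlearrowleft \mathrm{T}(\nabla_X Y,Z) + \sum_\circlearrowleft \mathrm{T}(Y,\nabla_X Z) = \sum_\circlearrowleft \mathrm{T}(\nabla_X Y - \nabla_Y X, Z),
\]
and then substitute $\nabla_X Y - \nabla_Y X = \mathrm{T}(X,Y) + [X,Y]$ to extract the $\sum_\circlearrowleft \mathrm{T}(\mathrm{T}(X,Y),Z)$ contribution plus an extra piece $\sum_\circlearrowleft \mathrm{T}([X,Y],Z)$. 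Expanding this last piece once more by definition of $\mathrm{T}$, the term $\sum_\circlearrowleft [[X,Y],Z]$ vanishes by Jacobi, and $\sum_\circlearrowleft \nabla_Z[X,Y] = \sum_\circlearrowleft \nabla_X[Y,Z]$ by cyclic relabeling, so it simplifies to $\sum_\circlearrowleft \nabla_{[X,Y]}Z - \sum_\circlearrowleft \nabla_X[Y,Z]$.

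Assembling all three contributions, the $\nabla_X[Y,Z]$ terms cancel, and what remains is
\[
\sum_\circlearrowleft (\nabla_X \mathrm{T})(Y,Z) = \sum_\circlearrowleft \bigl([\nabla_X,\nabla_Y] - \nabla_{[X,Y]}\bigr)Z - \sum_\circlearrowleft \mathrm{T}(\mathrm{T}(X,Y),Z) = \sum_\circlearrowleft \mathrm{R}(X,Y)Z - \sum_\circlearrowleft \mathrm{T}(\mathrm{T}(X,Y),Z),
\]
which is equivalent to \eqref{eq:bianchi}. The only real obstacle is bookkeeping: ensuring each cyclic sum is correctly relabeled and that the two separate cancellations (Jacobi for $[[X,Y],Z]$ and the $\nabla_X[Y,Z]$ cancellation between the two contributions) are applied in the right order.
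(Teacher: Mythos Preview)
Your proof is correct. The cyclic relabeling identities you use are all valid, the application of Jacobi is in the right place, and the cancellation of the $\nabla_X[Y,Z]$ terms between the two contributions goes through as you describe.

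The paper takes a different route: it first observes that each of the three terms in \eqref{eq:bianchi} is tensorial in $X,Y,Z$, so it suffices to verify the identity on a local coordinate frame $\partial_i,\partial_j,\partial_k$ where all Lie brackets vanish. In that frame the computation becomes much shorter, since every term containing $[X,Y]$, $[Y,Z]$, or $[Z,X]$ disappears from the outset and no appeal to the Jacobi identity is needed. Your approach, by contrast, works directly with arbitrary vector fields and is entirely algebraic; it is more self-contained (no tensoriality argument or choice of frame) at the price of heavier bookkeeping. Both are standard; yours has the minor advantage of making transparent exactly how the Jacobi identity for $[\cdot,\cdot]$ enters the Bianchi identity.
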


\begin{proof}
Since all the terms in \eqref{eq:bianchi} are tensors, it suffices to prove it for $X=\partial_i$, $Y=\partial_j$ and $Z=\partial_k$ where $\partial_i,\partial_j$ and $\partial_k$ are elements of a local frame. The formula follows now by a direct computation, noticing that $[\partial_i,\partial_j]=[\partial_i,\partial_k]=[\partial_j,\partial_k]=0$ and that $(\nabla_X \mathrm{T})(Y,Z)=\nabla_X\mathrm{T}(Y,Z)-\mathrm{T}(\nabla_XY,Z)-\mathrm{T}(Y,\nabla_XZ)$, for all $X,Y,Z\in\mathfrak X_M$.
\end{proof}

Then, if $\nabla$ is flat and has constant torsion, this formula implies that $[\cdot,\cdot]_\mathrm{T}: \mathfrak X_M\times\mathfrak X_M\rightarrow\mathfrak X_M$, defined by $[X,Y]_\mathrm{T}={\mathrm T}(X,Y)$, for all $X,Y\in\mathfrak X_M$ is a Lie bracket on $\mathfrak X_M$. In particular, defining $X\triangleright Y:=\nabla_XY$ for all $X,Y\in\mathfrak X_M$, then
\begin{equation*}
	X\triangleright [Y,Z]_\mathrm{T}=\nabla_X\mathrm{T}(Y,Z)
	=\mathrm{T}(\nabla_XY,Z)+\mathrm{T}(Y,\nabla_XZ)
	=[X\triangleright Y,Z]_\mathrm{T}+[Y,X\triangleright Z]_\mathrm{T}\label{eq:01}
\end{equation*}
and
\begin{eqnarray*}
	[X,Y]_\mathrm{T}\triangleright Z
	&=&\nabla_{\mathrm{T}(X,Y)}Z=\nabla_{\nabla_XY}Z-\nabla_{\nabla_YX}Z-\nabla_{[X,Y]}Z\\
	&=& \nabla_{\nabla_XY}Z-\nabla_{\nabla_YX}Z-\nabla_{[X,Y]}Z\\
	&=& \nabla_{\nabla_XY}Z-\nabla_{\nabla_YX}Z-\nabla_X\nabla_YZ+\nabla_Y\nabla_XZ\\
	&=& (X\triangleright Y)\triangleright Z-(Y\triangleright X)\triangleright Z
		-X\triangleright(Y\triangleright X)+Y\triangleright (X\triangleright Z)\\
	&=& a_{\triangleright}(X,Y,Z)-a_{\triangleright}(Y,X,Z),
\end{eqnarray*}
for all $X,Y,Z\in\mathfrak X_M$. In the second equality we used $\mathrm{R}_\nabla=0$. Moreover
\begin{eqnarray*}
	&&[X,Y]_\mathrm{T}=\mathrm{T}(X,Y)
	=\nabla_XY-\nabla_YX-[X,Y]
	=X\triangleright Y-Y\triangleright X-[X,Y].
\end{eqnarray*}
Summarizing, under the assumptions on the linear connection $\nabla$, one has that:
\begin{eqnarray*}
	&&X\triangleright [Y,Z]_\mathrm{T}
	=[X\triangleright Y,Z]_\mathrm{T}+[Y,X\triangleright Z]_\mathrm{T}\\ 
	&&[X,Y]_\mathrm{T}\triangleright Z=a_{\triangleright}(X,Y,Z)-a_{\triangleright}(Y,X,Z)\\
	&&[X,Y]=X\triangleright Y-Y\triangleright X-[X,Y]_\mathrm{T},
\end{eqnarray*}
for all $X,Y, Z\in\mathfrak X_M$. In other words

\begin{proposition}{\rm{\cite{LMK1}}}\label{pro:ge}
If $\nabla$ is a flat linear connection on the manifold $M$, with constant-torsion, then $(\mathfrak X_M,\triangleright,[\cdot,\cdot]_T)$ is a left post-Lie algebra.
\end{proposition}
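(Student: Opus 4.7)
The plan is to verify that the triple $(\mathfrak X_M, \triangleright, [\cdot,\cdot]_{\mathrm{T}})$ satisfies Definition \ref{def:post-Lie}, which amounts to checking three things: that $[\cdot,\cdot]_{\mathrm{T}}$ is a Lie bracket, and that the two post-Lie axioms \eqref{post-Lie1} and \eqref{post-Lie2} hold. All three checks reduce to standard connection-calculus identities, using crucially that $\mathrm{R}=0$ and $\nabla\mathrm{T}=0$.

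First I would establish that $[X,Y]_{\mathrm{T}}:=\mathrm{T}(X,Y)$ defines a Lie bracket. Skew-symmetry is immediate from the skew-symmetry of $\mathrm{T}$. For the Jacobi identity, I invoke Bianchi's first identity \eqref{eq:bianchi}: since $\mathrm{R}=0$ and $\nabla_X\mathrm{T}=0$ for all $X$, equation \eqref{eq:bianchi} collapses to
\[
\sum_{\circlearrowleft}\mathrm{T}(\mathrm{T}(X,Y),Z)=0,
\]
which is precisely the Jacobi identity for $[\cdot,\cdot]_{\mathrm{T}}$.

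Next I would verify axiom \eqref{post-Lie1}, i.e.\ that $\nabla_X$ is a derivation of $[\cdot,\cdot]_{\mathrm{T}}$. Starting from the Leibniz rule for $\nabla$ on the $(1,2)$-tensor $\mathrm{T}$,
\[
\nabla_X \mathrm{T}(Y,Z) = (\nabla_X\mathrm{T})(Y,Z)+\mathrm{T}(\nabla_X Y, Z)+\mathrm{T}(Y,\nabla_X Z),
\]
the hypothesis $\nabla\mathrm{T}=0$ kills the first summand, yielding exactly $X\triangleright[Y,Z]_{\mathrm{T}}=[X\triangleright Y,Z]_{\mathrm{T}}+[Y,X\triangleright Z]_{\mathrm{T}}$. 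For axiom \eqref{post-Lie2}, I would substitute the definition $\mathrm{T}(X,Y)=\nabla_X Y-\nabla_Y X-[X,Y]$ into $\nabla_{\mathrm{T}(X,Y)}Z$ and expand: three of the four resulting terms rearrange into the associator difference $a_\triangleright(X,Y,Z)-a_\triangleright(Y,X,Z)$ plus a term $-\nabla_{[X,Y]}Z$, and the missing piece $\nabla_X\nabla_Y Z-\nabla_Y\nabla_X Z-\nabla_{[X,Y]}Z=\mathrm{R}(X,Y)Z$ vanishes by flatness. This is the computation already displayed in the paragraph preceding the proposition; I would simply record it cleanly.

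The main obstacle is bookkeeping rather than conceptual: one must carefully track the six terms arising from expanding $[X,Y]_{\mathrm{T}}\triangleright Z$ and recognize exactly where flatness is used. A secondary subtlety is that \eqref{eq:bianchi} is stated for vector fields in a local frame of commuting coordinate fields, so one should note that tensoriality of both sides extends the Jacobi identity from coordinate vector fields to arbitrary $X,Y,Z\in\mathfrak X_M$. Finally, the compatibility relation \eqref{def:post-LieRel}, namely $[X,Y]=X\triangleright Y-Y\triangleright X-[X,Y]_{\mathrm{T}}$, is just a rewriting of the definition of torsion \eqref{eq:torsion} and requires no further argument.
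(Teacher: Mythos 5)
Your proof is correct and follows essentially the same route as the paper: the Jacobi identity for $[\cdot,\cdot]_{\mathrm{T}}$ from Bianchi's first identity under $\mathrm{R}=0$ and $\nabla\mathrm{T}=0$, axiom \eqref{post-Lie1} from the Leibniz rule for $\nabla\mathrm{T}$, and axiom \eqref{post-Lie2} by expanding $\nabla_{\mathrm{T}(X,Y)}Z$ and using flatness to trade $\nabla_{[X,Y]}Z$ for $\nabla_X\nabla_Y Z-\nabla_Y\nabla_X Z$. Your remark on extending the coordinate-frame Bianchi identity to arbitrary vector fields by tensoriality is a point the paper also makes, so nothing is missing.
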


\begin{remark} A few remarks are in order.
\begin{enumerate}
	\item Note that in the previous proposition the Lie-Jacobi bracket between vector fields, plays the role of the Lie bracket $\llbracket\cdot,\cdot\rrbracket$ in the post-Lie structure, while the role of the bracket $[\cdot,\cdot]$ is taken by $[\cdot,\cdot]_\mathrm{T}$, i.e., the one defined by the torsion tensor.
	
	\item If one had defined $[X,Y]_\mathrm{T}=-\mathrm{T}(X,Y)$ and $X\triangleleft Y=\nabla_XY$, which is the same product one has in the previous proposition, then $(\mathfrak X_M,\triangleleft,[\cdot,\cdot]_\mathrm{T})$ is a right post-Lie algebra.
\end{enumerate}
\end{remark}

At this point, it is worth recalling a classical result from differential geometry due to Cartan and Schouten. See references \cite{C-S} and \cite{PostGeo}. Let $G$ be a Lie group and $\mathfrak g$ its corresponding Lie algebra. A linear connection $\nabla$ on $G$ is called left-invariant if for all left-invariant vector fields, $X,Y$, $\nabla_XY$ is a left-invariant vector field.  Then

\begin{proposition}
There is a one-to-one correspondence between the set of  left-invariant connections on $G$ and the set $\operatorname{Hom}_{\mathbb F}(\mathfrak g\otimes\mathfrak g,\mathfrak g)$.
\end{proposition}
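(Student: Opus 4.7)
The plan is to exploit the fact that the $\mathbb{F}$-vector space $\mathfrak X(G)^G$ of left-invariant vector fields is isomorphic to $\mathfrak g$ via $x\mapsto X_x$, and that $\mathfrak X(G)^G$ constitutes a global frame for $TG$. Thus every smooth vector field $Y\in\mathfrak X_G$ can be written \emph{uniquely} as $Y=\sum_i f_i X_{e_i}$, where $\{e_1,\dots,e_n\}$ is a basis of $\mathfrak g$ and $f_i\in C^\infty(G)$. This will let me read off a left-invariant connection from any bilinear data in $\operatorname{Hom}_{\mathbb F}(\mathfrak g\otimes\mathfrak g,\mathfrak g)$.

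First, I would construct the map from left-invariant connections to $\operatorname{Hom}_{\mathbb F}(\mathfrak g\otimes\mathfrak g,\mathfrak g)$. Given a left-invariant $\nabla$, by definition $\nabla_{X_x}X_y\in\mathfrak X(G)^G$ for all $x,y\in\mathfrak g$, so there exists a unique $\alpha_\nabla(x,y)\in\mathfrak g$ such that $\nabla_{X_x}X_y=X_{\alpha_\nabla(x,y)}$. The map $\alpha_\nabla:\mathfrak g\otimes\mathfrak g\rightarrow\mathfrak g$ is $\mathbb F$-bilinear: bilinearity in the second slot is immediate from the bilinearity of $\nabla$, while bilinearity in the first slot follows because constant scalars commute with the $C^\infty(G)$-linearity of $\nabla$ in its first argument, and because $X_{\lambda x+\mu x'}=\lambda X_x+\mu X_{x'}$.

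Conversely, given $\alpha\in\operatorname{Hom}_{\mathbb F}(\mathfrak g\otimes\mathfrak g,\mathfrak g)$, I would define $\nabla^\alpha$ on the global frame by $\nabla^\alpha_{X_x}X_y:=X_{\alpha(x,y)}$ and extend it to all vector fields by forcing $C^\infty(G)$-linearity in the first argument and the Leibniz rule in the second. Concretely, for $X=\sum_i f_iX_{e_i}$ and $Y=\sum_j g_jX_{e_j}$, set
\[
	\nabla^\alpha_XY:=\sum_{i,j}f_i\bigl(X_{e_i}(g_j)X_{e_j}+g_jX_{\alpha(e_i,e_j)}\bigr).
\]
Uniqueness of the decomposition in the global left-invariant frame makes this well-defined; a routine check shows that $\nabla^\alpha$ is $C^\infty(G)$-linear in $X$, satisfies the Leibniz rule in $Y$, and is left-invariant (since $(L_g)_{\ast}$ preserves left-invariant vector fields and the defining formula is independent of the point).

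Finally I would verify that the two assignments $\nabla\mapsto\alpha_\nabla$ and $\alpha\mapsto\nabla^\alpha$ are mutually inverse. In one direction, $\nabla^{\alpha_\nabla}$ and $\nabla$ agree on the left-invariant frame by construction and hence on all vector fields by $C^\infty(G)$-linearity and the Leibniz rule. In the other, $\alpha_{\nabla^\alpha}(x,y)=\alpha(x,y)$ is immediate from the defining formula. The only mildly delicate step is checking that the extension by the Leibniz rule is consistent when writing a single vector field in terms of the frame in different ways, but the uniqueness of the expansion in the global trivialization $TG\simeq G\times\mathfrak g$ removes this potential obstruction; this is the main technical point, and once it is settled the rest of the argument is bookkeeping.
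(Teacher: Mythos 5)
Your proof is correct. The paper in fact states this proposition without proof, citing it as a classical result of Cartan--Schouten, so there is no argument to compare against; your construction --- reading off $\alpha_\nabla(x,y)$ from $\nabla_{X_x}X_y=X_{\alpha_\nabla(x,y)}$ via the global left-invariant frame, extending $\alpha$ back to a connection by $C^\infty(G)$-linearity in the first slot and the Leibniz rule in the second, and using the uniqueness of the expansion in the trivialization $TG\simeq G\times\mathfrak g$ to settle well-definedness and the fact that a connection is determined by its values on a frame --- is exactly the standard argument and correctly fills the gap.
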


Given a left-invariant connection, $\nabla$, let $\alpha\in\operatorname{Hom}_\mathbb F(\mathfrak g\otimes\mathfrak g,\mathfrak g)$ be the corresponding bilinear form, and let $s$ and $a$ be the symmetric, respectively skew-symmetric summands of $\alpha$, i.e., $s=\frac{\alpha+\sigma\alpha}{2}$ and $a=\frac{\alpha-\sigma\alpha}{2}$, where $\sigma\alpha(x,y):=\alpha(y,x)$ for all $x,y\in\mathfrak g$.

\begin{corollary}
The connection $\nabla$ is torsion-free if and only if $a(\cdot,\cdot)=\frac{1}{2}[\cdot,\cdot],$
where $[\cdot,\cdot]$ is the Lie bracket on $\mathfrak g$.
\end{corollary}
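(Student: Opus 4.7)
The plan is to evaluate the torsion tensor on left-invariant vector fields, use the decomposition $\alpha = s+a$, and then invoke tensoriality together with left-invariance to pass from left-invariant fields to arbitrary vector fields.

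Concretely, first I would unpack the correspondence of the preceding proposition: a left-invariant connection $\nabla$ is uniquely determined by the bilinear map $\alpha \in \operatorname{Hom}_{\mathbb F}(\mathfrak g \otimes \mathfrak g, \mathfrak g)$ defined via $\nabla_{X_x} X_y = X_{\alpha(x,y)}$ for all $x,y \in \mathfrak g$, where $X_x, X_y$ denote the left-invariant vector fields associated to $x,y$. Since the space of left-invariant vector fields is closed under the Jacobi--Lie bracket and satisfies $[X_x, X_y] = X_{[x,y]}$, the torsion of $\nabla$ applied to left-invariant vector fields is again left-invariant and reads
\[
	\mathrm{T}(X_x, X_y) = \nabla_{X_x} X_y - \nabla_{X_y} X_x - [X_x, X_y] = X_{\alpha(x,y) - \alpha(y,x) - [x,y]}.
\]

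Next, I would decompose $\alpha = s + a$ with $s(x,y) = s(y,x)$ and $a(x,y) = -a(y,x)$. Then $\alpha(x,y) - \alpha(y,x) = 2a(x,y)$, so
\[
	\mathrm{T}(X_x, X_y) = X_{2a(x,y) - [x,y]}.
\]
Hence the torsion vanishes on every pair of left-invariant vector fields if and only if $2a(x,y) = [x,y]$ for all $x,y \in \mathfrak g$, which is precisely the condition $a(\cdot,\cdot) = \tfrac{1}{2}[\cdot,\cdot]$.

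Finally, I need to upgrade the conclusion from left-invariant vector fields to all of $\mathfrak X_G$. This is immediate since $\mathrm{T}$ is a $C^\infty(G)$-bilinear tensor and at each $g \in G$ the left-invariant vector fields span $T_g G$; so $\mathrm{T}$ vanishes identically on $G$ if and only if it vanishes on all pairs of left-invariant vector fields. I do not anticipate any real obstacle: the content is entirely a computation in the fibre $T_e G \simeq \mathfrak g$, and the only place requiring some care is the symmetric/skew-symmetric split, which is a simple algebraic manipulation.
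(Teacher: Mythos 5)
Your proof is correct and is exactly the computation the paper leaves implicit (the corollary is stated without proof there): evaluating $\mathrm{T}$ on left-invariant fields gives $X_{2a(x,y)-[x,y]}$, and tensoriality plus the pointwise spanning of $T_gG$ by left-invariant fields completes the argument. No gaps.
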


A left-invariant connection $\nabla$ is called a \emph{Cartan connection} if there exists a one-to-one correspondence between the set of the geodesics of $\nabla$ going through the unit $e$ and the $1$-parameter subgroups of $G$.  

\begin{theorem}
A left-invariant connection $\nabla$ on $G$ is a Cartan connection if and only if the symmetric part of the bilinear form $\alpha$ corresponding to $\nabla$ is zero. In other words, Cartan's connections on $G$ are in one-to-one correspondence with $\operatorname{Hom}(\Lambda^2\mathfrak g,\mathfrak g)$.
\end{theorem}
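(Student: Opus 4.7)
The plan is to translate the geodesic equation along a $1$-parameter subgroup into a purely algebraic condition on $\alpha$, and then exploit characteristic zero to convert it into the vanishing of the symmetric part.

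First I would fix $x \in \mathfrak g$ and consider the $1$-parameter subgroup $\gamma_x(t) := \exp(tx)$. Its velocity satisfies $\dot\gamma_x(t) = (L_{\gamma_x(t)})_{\ast,e} x = X_x(\gamma_x(t))$, so along $\gamma_x$ the velocity field is the restriction of the left-invariant field $X_x$. By left-invariance of $\nabla$ and the correspondence $\nabla_{X_y}X_z = X_{\alpha(y,z)}$, this gives
\[
   \nabla_{\dot\gamma_x(t)}\dot\gamma_x = (\nabla_{X_x}X_x)(\gamma_x(t)) = X_{\alpha(x,x)}(\gamma_x(t)).
\]
Since left-invariant fields are determined by their value at $e$, one concludes that $\gamma_x$ is a geodesic of $\nabla$ if and only if $\alpha(x,x)=0$. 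Hence \emph{every} $1$-parameter subgroup of $G$ is a $\nabla$-geodesic through $e$ if and only if $\alpha(x,x)=0$ for all $x \in \mathfrak g$.

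Next I would polarize. Writing $\alpha = s + a$ with $s$ symmetric and $a$ skew-symmetric, one has $\alpha(x,x) = s(x,x)$, so the condition above becomes $s(x,x)=0$ for every $x \in \mathfrak g$. Since $\operatorname{char}(\mathbb F)=0$ and $s$ is symmetric bilinear, the polarization identity $2 s(x,y) = s(x+y,x+y) - s(x,x) - s(y,y)$ shows this is equivalent to $s \equiv 0$.

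It remains to upgrade the statement ``all $1$-parameter subgroups are geodesics'' into the bijective correspondence required by the definition of Cartan connection. Suppose $s=0$. The assignment $x \mapsto \gamma_x$ is a bijection between $\mathfrak g = T_e G$ and the set of $1$-parameter subgroups of $G$, with $\gamma_x$ having initial velocity $x$ at $e$. On the other hand, by the standard existence and uniqueness theorem for the geodesic equation, the map sending a geodesic $\gamma$ through $e$ to its initial velocity $\dot\gamma(0) \in T_eG$ is a bijection onto $\mathfrak g$. Composing, and using that each $\gamma_x$ is a geodesic (by the computation above), we obtain a bijection between $1$-parameter subgroups and geodesics through $e$; thus $\nabla$ is a Cartan connection. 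Conversely, if $\nabla$ is a Cartan connection, then in particular every $1$-parameter subgroup is a geodesic, so $\alpha(x,x) = 0$ for all $x$ and hence $s=0$ by polarization.

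Finally, once the symmetric part is forced to vanish, the bilinear form $\alpha$ becomes skew-symmetric, i.e.\ an element of $\operatorname{Hom}_{\mathbb F}(\Lambda^2 \mathfrak g, \mathfrak g)$, and together with the proposition recalled just before the statement (left-invariant connections $\leftrightarrow$ $\operatorname{Hom}_{\mathbb F}(\mathfrak g \otimes \mathfrak g, \mathfrak g)$) this yields the claimed one-to-one correspondence. The main technical point is the first step, namely identifying $\nabla_{\dot\gamma_x}\dot\gamma_x$ with the left-invariant field $X_{\alpha(x,x)}$ restricted to $\gamma_x$; once this is established, the rest is polarization plus uniqueness of geodesics.
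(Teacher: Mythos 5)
Your argument is correct. Note that the paper itself offers no proof of this statement: it is quoted as a classical result of Cartan and Schouten with references, so there is nothing to compare against except the standard argument, which is exactly what you supply. The key step — that the velocity field of the one-parameter subgroup $\gamma_x(t)=\exp(tx)$ is the restriction of the left-invariant field $X_x$, whence $\nabla_{\dot\gamma_x}\dot\gamma_x = X_{\alpha(x,x)}\circ\gamma_x$, which vanishes identically iff $\alpha(x,x)=0$ — is right, and the reduction to $s\equiv 0$ by polarization in characteristic zero is clean. Your upgrade from ``every one-parameter subgroup is a geodesic'' to the bijective correspondence via uniqueness of geodesics with prescribed initial velocity is also the correct reading of the (somewhat loosely stated) definition of Cartan connection: a bare set bijection would be vacuous, so the intended meaning is that the geodesics through $e$ are precisely the one-parameter subgroups, which your argument establishes.
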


 Let  $\lambda\in\mathbb F$ and define $\alpha_\lambda:\mathfrak g\otimes\mathfrak g\rightarrow\mathfrak g$ by $\alpha_\lambda(x,y)=\lambda[x,y]$ for all $x,y\in\mathfrak g$. Then the curvature and the torsion of the left-invariant connection defined by $\alpha_\lambda$ are 
\begin{eqnarray*}
	\mathrm{R}_\lambda(X,Y)Z&=&(\lambda^2-\lambda)[[X,Y],Z]\\
 	\mathrm{T}_\lambda(X,Y)  &=&(2\lambda-1)[X,Y],
\end{eqnarray*}
for all $X,Y,Z\in\mathfrak X_G$. In particular, the Cartan connection defined by $\alpha_\lambda(\cdot,\cdot)=\lambda[\cdot,\cdot]$ is flat if and only if $\lambda=1$ or $\lambda=0$.
Then, going back to our main topic, one finds 
 
\begin{corollary}
The Cartan connections defined by $\nabla_XY=[X,Y]$ and $\nabla_XY=0$, for all $X,Y\in\mathfrak X_G$ define a (left) post-Lie algebra structure on $\mathfrak X_G$.
\end{corollary}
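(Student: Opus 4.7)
The strategy is to reduce the statement to Proposition \ref{pro:ge} by verifying that each of the two Cartan connections $\nabla^{(0)}_X Y := 0$ and $\nabla^{(1)}_X Y := [X,Y]$ satisfies the hypotheses of that proposition, namely flatness and constant torsion. The geometric content has already been done in the preceding material; what remains is a bookkeeping exercise using the explicit formulas for $\mathrm{R}_\lambda$ and $\mathrm{T}_\lambda$ recalled just before the statement.

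First, I would invoke the formulas displayed immediately above, namely $\mathrm{R}_\lambda(X,Y)Z = (\lambda^2-\lambda)[[X,Y],Z]$ and $\mathrm{T}_\lambda(X,Y) = (2\lambda-1)[X,Y]$. Flatness is then immediate: the coefficient $\lambda^2-\lambda$ vanishes precisely for $\lambda\in\{0,1\}$, so both connections in the statement satisfy $\mathrm{R}=0$. This disposes of the flatness condition without any further computation.

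Next, I would check the constant torsion condition $\nabla \mathrm{T}=0$ for each value of $\lambda$ separately. For $\lambda=0$ the connection is identically zero on left-invariant fields, hence $(\nabla_X \mathrm{T})(Y,Z) = \nabla_X \mathrm{T}(Y,Z) - \mathrm{T}(\nabla_X Y, Z) - \mathrm{T}(Y, \nabla_X Z)$ vanishes term by term when evaluated on a left-invariant frame; tensoriality of $\nabla\mathrm{T}$ then promotes this to arbitrary $X,Y,Z\in\mathfrak X_G$. For $\lambda=1$, writing out the same covariant-derivation formula with $\nabla_X Y = [X,Y]$ and $\mathrm{T}(X,Y)=[X,Y]$ reduces $\nabla_X\mathrm{T}=0$ to the identity $[X,[Y,Z]] - [[X,Y],Z] - [Y,[X,Z]] = 0$, which is just the Jacobi identity on $\mathfrak g$ (and extends to $\mathfrak X_G$ by tensoriality).

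Having established both flatness and constant torsion in each case, I would conclude by quoting Proposition \ref{pro:ge} directly: it produces a left post-Lie algebra $(\mathfrak X_G, \triangleright, [\cdot,\cdot]_\mathrm{T})$ where $X \triangleright Y := \nabla_X Y$ and $[X,Y]_\mathrm{T} := \mathrm{T}(X,Y)$. For $\lambda=0$ this gives the post-Lie structure with $X\triangleright Y = 0$ and $[X,Y]_\mathrm{T}=-[X,Y]$, while for $\lambda=1$ it gives $X\triangleright Y = [X,Y]$ with $[X,Y]_\mathrm{T}=[X,Y]$. The only step that requires any genuine calculation is checking $\nabla\mathrm{T}=0$ for $\lambda=1$, and there the potential obstacle -- showing the triple-bracket expression vanishes -- is resolved immediately by Jacobi; so I do not anticipate any substantive difficulty beyond careful invocation of tensoriality to pass from left-invariant frames to arbitrary vector fields.
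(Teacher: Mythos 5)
Your proposal is correct and follows essentially the same route as the paper: flatness is read off from the formula $\mathrm{R}_\lambda(X,Y)Z=(\lambda^2-\lambda)[[X,Y],Z]$ at $\lambda\in\{0,1\}$, constant torsion is checked trivially for $\lambda=0$ and via the Jacobi identity for $\lambda=1$, and the conclusion is drawn from Proposition \ref{pro:ge}. The paper's proof is exactly this, so no further comment is needed.
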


\begin{proof}
Note that the two cases correspond to $\lambda=1$, $\lambda=0$, respectively. For $\lambda=1$ one has $\mathrm{T}(\cdot,\cdot)=[\cdot,\cdot]$, while for $\lambda=0$ one has $\mathrm{T}(\cdot,\cdot)=-[\cdot,\cdot]$. On the other hand,
\begin{equation*}
(\nabla_X\mathrm{T})(Y,Z)=\nabla_X(\mathrm{T}(Y,Z))-\mathrm{T}(\nabla_XY,Z)-\mathrm{T}(Y,\nabla_XZ).
\end{equation*}
Then, when $\lambda=0$, one has 
$(\nabla_X\mathrm{T})(Y,Z)=0$, since $\nabla_XY=0$ for all $X,Y$ and when $\lambda=1$, one has
\begin{eqnarray*}
	(\nabla_X\mathrm{T})(Y,Z)&=&\nabla_X(\mathrm{T}(Y,Z))-\mathrm{T}(\nabla_XY,Z)-\mathrm{T}(Y,\nabla_XZ)\\
	&=&\nabla_X([Y,Z])-[[X,Y],Z]-[Y,[X,Z]]\\
	&=&[X,[Y,Z]]-[[X,Y],Z]-[Y,[X,Z]]
	=0,
\end{eqnarray*}
thanks to the Jacobi identity. Then the statement follows from Proposition \ref{pro:ge}, observing that $[\cdot,\cdot]_\mathrm{T}=[\cdot,\cdot]$.
\end{proof}


\section{Poisson structures and $r$-matrices}
\label{sec:PoisYB}

This section has two main goals. First to introduce the theory of classical $r$-matrices and, second, the one of isospectral flows. To this end, we will introduce the reader to the theory of the classical integrable systems, where both classical $r$-matrices and isospectral flows play a central role. Classical $r$-matrices will be used to produce examples of pre and post-Lie algebras, while isospectral flows will be studied in the last section from the point of view of post-Lie algebra. We will also discuss in some details the factorization of (suitable) elements of a Lie group whose Lie algebra is endowed with a classical $r$-matrix. The analogue of this construction, applied to the group-like elements of (the $I$-adic completion of) the universal enveloping algebra of a finite dimensional Lie algebra, will be discuss at the end of these notes.


\subsection{Poisson manifolds and isospectral flows}
\label{ssec:PoisIso}

We start recalling the definition of a Poisson algebra. A commutative and associative algebra $(A,\cdot)$ is called a \emph{Poisson algebra} if $A$ is endowed with a skew-symmetric bi-derivation $\{\cdot,\cdot\}:A\times A\rightarrow A$ which fulfils the Jacobi identity. The bi-derivation $\{\cdot,\cdot\}$ is called a \emph{Poisson bracket}. In particular, a smooth manifold $P$ is called a \emph{Poisson manifold} if its algebra of smooth functions $C^\infty(P)$ is a Poisson algebra. For an extensive review of the theory of Poisson algebras and Poisson manifolds we refer the reader to the monograph \cite{Poisson}.

\begin{example}[Symplectic structures vs. Poisson structures]\label{ex:symstr}
Let $M$ be a smooth manifold and let $\omega\in\Omega^2(M)$ be a \emph{non-degenerate} $2$-form, i.e., a smooth section of $\Lambda^2T^\ast M$, the second exterior power of $T^\ast M$, such that $\omega_m:T_m^\ast M\times T_m^\ast M\rightarrow\mathbb F$  is non-degenerate, for all $m\in M$. Then, $\omega$ defines an isomorphism between $\Omega^1(M)$ and $\mathfrak X(M)$ which associates to each $f\in C^{\infty}(M)$ its \emph{Hamiltonian vector field} $X_f$, defined by the condition $df=-\omega(X_f,\cdot)$. In this way, for each $f,g\in C^\infty(M)$ one can define $\{f,g\}_\omega=\omega(X_f,X_g)$, which is a skew-symmetric bi-derivation of $C^\infty(M)$. Furthermore, $\{\cdot,\cdot\}_\omega$ is a Poisson bracket if and only if $d\omega=0$. In this case $\omega$ is called a \emph{symplectic form} and the Poisson tensor corresponding to $\{\cdot,\cdot\}_\omega$ is the \emph{inverse} of $\omega$.
\end{example}

Let $\mathfrak{g}$ be a finite dimensional Lie algebra and $\mathfrak{g}^\ast$ its dual vector space. Let $\mathbb F[\mathfrak{g}]$ be the algebra of polynomial functions on $\mathfrak{g}^\ast$. For $x,y\in\mathfrak{g}$, let
\begin{equation}\label{eq:linpoi}
	\{x,y\}(\alpha):=\langle\alpha,[x,y]\rangle,\quad \forall\alpha\in\mathfrak{g}^\ast.
\end{equation}
Since $\mathbb F[\mathfrak{g}]$ is generated in degree one by $\mathfrak{g}$, the bracket in \eqref{eq:linpoi} admits a unique extension to a skew-symmetric bi-derivation of $\mathbb F[\mathfrak{g}]$. This bracket satisfies the Jacobi identity, and therefore yields a Poisson bracket on the algebra of polynomial functions on $\mathfrak{g}^\ast$. On a basis $x_1,\dots,x_n$ of $\mathfrak g$, \eqref{eq:linpoi} reads $ \{x_i,x_j\}=\sum_{k=1}^nC_{ij}^kx_k$, for $i,j=1,\dots,n,$ where  $\{C^k_{ij}\}_{i,j,k=1,\dots,n}$, are the \emph{structure constants} of $\mathfrak{g}$, defined by $[x_i,x_j]=C^k_{ij}x_k$, for $i,j=1,\dots,n$. This implies that $\mathfrak{g}$, seen as the vector sub-space of $\mathbb F[\mathfrak{g}]$ of  linear functions on $\mathfrak g^\ast$, is closed with respect to \eqref{eq:linpoi}, i.e., it implies that the Poisson bracket of two linear functions is still a linear function. For this reason, the Poisson bracket induced on $\mathbb F[\mathfrak{g}]$ by \eqref{eq:linpoi} is called a \emph{linear Poisson bracket}. From the discussion above it follows that giving a Lie bracket on $\mathfrak{g}$ is equivalent to giving a linear Poisson structure on $\mathfrak g^\ast$. Let us recall that, given a Poisson manifold $(P,\{\cdot, \cdot\})$ and a smooth function $H:P\rightarrow\mathbb F$, one can define the Hamiltonian vector field $X_H\in \mathfrak X(P)$ whose Hamiltonian is $H$:
\[
	X_H(m)=\{H,\cdot\}(m),\quad \forall m\in P,
\]
or, equivalently, $X_H(m)=\Pi(dH,\cdot)(m)$, for $ m\in P$.

Now let $H\in C^{\infty}(\mathfrak{g}^\ast)$. Then the Hamiltonian vector field $X_H$ with respect to the linear Poisson bracket defined on $\mathfrak{g}^\ast$ is given by:
\begin{equation}\label{eq:hamv}
	X_H(\alpha)=-\operatorname{ad}^\sharp_{dH_\alpha}(\alpha),\quad \forall \alpha\in\mathfrak{g}^\ast,
\end{equation}
where $\operatorname{ad}^\sharp$ is the co-adjoint representation of $\mathfrak{g}$. The Hamiltonian equations, corresponding to the integral curves of the vector field $X_H$ in \eqref{eq:hamv} can be written as
\begin{equation}\label{eq:hame}
	\dot{\alpha}=-\operatorname{ad}^\sharp_{dH_\alpha}(\alpha).
\end{equation}

Recall now that $f\in C^{\infty}(P)$ is called a \emph{Casimir} of $(P,\{\cdot,\cdot\})$ if $\{f,g\}=0$, for all $g\in C^\infty(P)$, which forces $X_f$ to be the zero-vector field, i.e., $X_f$ is such that $X_f(m)=0$ for all $m\in P$.
Moreover, this condition implies that Casimir functions are constant along the leaves of the \emph{symplectic foliation} associated to $(P,\{\cdot,\cdot\})$. \footnote{The symplectic foliation of a Poisson manifold $(P,\{\cdot,\cdot\})$ is the generalized distribution in the sense of Sussmann defined on $P$ by the Hamiltonian vector fields. Each leaf of this distribution is an immersed symplectic manifold, i.e., it is an immersed submanifold of $P$ which carries a symplectic structure. See Example \ref{ex:symstr}, which is defined by the restriction to the leaf of the Poisson structure.}
In the particular case of a linear Poisson structure, if $G$ is assumed to be connected, one can prove that $f \in C^{\infty}(\mathfrak g^\ast)$ is a Casimir if and only if $f$ is $G$-invariant, i.e., if and only if $\operatorname{Ad}_g^\sharp f = f$, for all $g \in G$. In this case, $f$ is a Casimir if and only if for all $\alpha \in \mathfrak g^\ast$
\begin{equation}\label{eq:casin}
	\operatorname{ad}^{\sharp}_{df_\alpha}(\alpha)=0.
\end{equation}

The vector space of the Casimirs of the Poisson manifold $(P,\{\cdot,\cdot\})$ is denoted by $\text{Cas}(P,\{\cdot,\cdot\})$. It is a commutative Poisson sub-algebra (actually the center) of $(C^\infty(P),\{\cdot,\cdot\})$.


\subsubsection{Lax-type equations} 

Let $H\in C^{\infty}(\mathfrak{g})$. Then $dH_{\alpha}\in\mathfrak{g}$, for all $\alpha\in\mathfrak{g}^\ast$, and $dH$ is a (smooth) map between $\mathfrak{g}^\ast$ and $\mathfrak{g}$. Suppose now that $\mathfrak g$ is a quadratic Lie algebra, i.e., a Lie algebra endowed with a non-degenerate, symmetric bilinear form $B:\mathfrak g\otimes\mathfrak g\rightarrow\mathbb F$, which is invariant with respect to the adjoint action of $\mathfrak g$, i.e., $B(\operatorname{ad}_xy,z)+B(y,\operatorname{ad}_xz)=0$, for all $x,y,z$ in $\mathfrak g$. Then, if $x_\alpha\in\mathfrak g$ is the (unique) vector such that $B(x_\alpha,y)=\langle\alpha,y\rangle$, for all $y\in\mathfrak g$, the integral curves of the Hamiltonian vector field $X_H$ correspond to the integral curves of the vector field defined on $\mathfrak g$ by the system of ordinary differential equations:
\begin{equation}\label{eq:isf} 
	\dot{x}_\alpha=[x_\alpha,dH_\alpha],
\end{equation}
where the bracket on the right-hand side of \eqref{eq:isf} is the Lie bracket of $\mathfrak g$.  Evolution equations of type \eqref{eq:isf} are known as \emph{Lax type equations} or \emph{isospectral flow equations}, see for example \cite{Lax}, since, if $\mathfrak g$ is a matrix Lie algebra\footnote{Note that this is not a restriction, since, by the Ado's theorem, every finite dimensional Lie algebra admits a faithful finite dimensional representation.}, then writing $F_k=\frac{\operatorname{tr}x_\alpha^k}{k}$, one has
\[
	\frac{dF_k}{dt}=\:\operatorname{tr}\Big(\:\frac{dx_\alpha^k}{dt}\Big)
	=k\:\operatorname{tr}\Big(\frac{dx_\alpha}{dt}x_\alpha^{k-1}\Big)
	=k\:\operatorname{tr}([x_\alpha,dH_\alpha]x_\alpha^{k-1})=0,
\]  
implying that, generically, the eigenvalues of $x_\alpha$ are conserved quantities along the flow defined by \eqref{eq:isf}.


\subsection{$r$-matrices, factorization in Lie groups and integrability}
\label{ss:Rfact}

We follow references \cite{Faybusovich,STS1, STS2}. Let $\mathfrak{g}$ be a finite dimensional Lie algebra over $\mathbb F$, and let $R\in\operatorname{End}_{\mathbb F}(\mathfrak{g})$. Then the bracket
\begin{equation} \label{doubleLiebracket}
	[x,y]_R := \frac{1}{2}([Rx,y]+[x,Ry]),\quad \forall x,y\in\mathfrak{g}
\end{equation}
is skew-symmetric. Moreover, if:
\begin{equation}\label{eq:B}
	B(x,y):=R([Rx,y]+[x,Ry])-[Rx,Ry],
\end{equation}
then $[\cdot,\cdot]_R$ satisfies the Jacobi identity \emph{if and only if}:
\begin{equation}\label{eq:idB}
	[B(x,y),z]+[B(z,x),y]+[B(y,z),x]=0,\quad \forall x,y,z\in\mathfrak{g}.
\end{equation}
In fact, the Jacobi identity for $[\cdot,\cdot]_R$ is equivalent to:
\begin{equation}
	\sum_{\circlearrowleft}\big( \big[R([Rx,y]+[x,Ry]),z\big]+\big[[Rx,y]+[x,Ry],Rz\big] \big)= 0
	\label{eq:NI}
\end{equation}
where $\sum_{\circlearrowleft}$ denote cyclic permutations of $(x,y,z)$. On the left-hand side of the previous equation, the following three-terms sum appears:
\[
	\big[[Rx,y]+[x,Ry],Rz\big]+\big[[Rz,x]+[z,Rx],Ry\big]+\big[[Ry,z]+[y,Rz],Rx\big]
\]
which, using the Jacobi identity for the bracket $[\cdot,\cdot]$, becomes
\[
	-[[Rx,Ry],z]-[[Rz,Rx],y]-[[Ry,Rz],x].
\]
From the previous computation it follows that if, for some $\theta\in\mathbb F$, $B(x,y)=\theta [x,y]$, which amounts to the following identity 
\begin{equation}\label{eq:impc}
	[Rx,Ry]=R([Rx,y]+[x,Ry])-\theta [x,y],
\end{equation}
for all $x,y\in\mathfrak{g}$, then identity (\ref{eq:idB}) will be fulfilled.

\begin{definition}[Classical $r$-matrix and modified CYBE]
\label{def:r-matrix}
Equation (\ref{eq:impc}) is called {\it modified Classical Yang--Baxter Equation} (mCYBE). Its solution is called {\it classical $r$-Matrix}. For $\theta=0$, equation (\ref{eq:impc}) reduces to what is called {\it classical Yang--Baxter Equation} (CYBE). The Lie algebra with classical $r$-matrix, $(\mathfrak{g},R)$, defines a {\it{double Lie algebra}}. The Lie algebra with bracket $[\cdot,\cdot]_R$ defined in \eqref{doubleLiebracket} is denoted $\mathfrak{g}_R$.
\end{definition}

\begin{remark}
Note that on the underlying vector space $\mathfrak g$ of a double Lie algebra $(\mathfrak{g},R)$ are defined two Lie brackets, the original one, $[\cdot,\cdot]$, and the Lie bracket $[\cdot,\cdot]_R$ defined in \eqref{doubleLiebracket}. Correspondingly we have two linear Poisson structures, i.e., the bracket $\{\cdot,\cdot\}$, defined in \eqref{eq:linpoi}, and the bracket $\{\cdot,\cdot\}_R$, defined by
\[
	\{f,g\}_R(\alpha) :=\langle\alpha,[df_\alpha,dg_\alpha]_R\rangle
				=\frac{1}{2}\langle\alpha,([Rdf_\alpha,dg_\alpha]+[df_\alpha,R dg_\alpha])\rangle.
\]

Furthermore, the two Lie algebra structures yield two co-adjoint actions, $\operatorname{ad}^\sharp$ and $\operatorname{ad}^{\sharp,R}$, defined for all $x,y\in\mathfrak g$ and $\alpha\in{\mathfrak g}^{\ast}$ by 
\begin{eqnarray*}
	\operatorname{ad}^{\sharp}_{x}(\alpha)(y) &:=&-\langle\alpha,[x,y]\rangle\\ 
	\operatorname{ad}^{\sharp,R}_x(\alpha)(y)&:=&-\frac{1}{2}\langle\alpha,[Rx,y]+[x,Ry]\rangle.
\end{eqnarray*}
The definition of $\operatorname{ad}^{\sharp, R}$, together with a simple calculation shows that 
\begin{equation}\label{eq:adR}
	\{f,g\}_R(\alpha)=\frac{1}{2}\big(\operatorname{ad}^{\sharp}_{dg_{\alpha}}(\alpha)(Rdf_{\alpha})
			- \operatorname{ad}^{\sharp}_{df_{\alpha}}(\alpha)(Rdg_{\alpha})\big).
\end{equation}
\end{remark}

Let $R$ be a solution of the mCYBE with $\theta=1$ and let $R_\pm\in\operatorname{End}_{\mathbb F}(\mathfrak g)$ be two maps defined by:
\begin{equation}\label{eq:Rpm}
	R_\pm := \frac{1}{2}(R\pm\operatorname{id}_{\mathfrak g}),
\end{equation}

\begin{proposition}\label{prop:prel} 
\begin{enumerate}
\item The maps $R_\pm:\mathfrak g\rightarrow\mathfrak g$ are homomorphisms of Lie algebras from $\mathfrak g_R$ to $\mathfrak g$, so that  $\mathfrak g_\pm=\text{im}\,R_\pm$ are Lie sub-algebras of $\mathfrak g$. 

\item Let $\mathfrak k_\pm=\text{ker}\,R_\mp$. Then $\mathfrak k_\pm\subset\mathfrak g_\pm$ are ideals, and denoting by $\overline{w}$ the class of the element $w$, the map $C_R:\mathfrak g_+/\mathfrak k_+\rightarrow\mathfrak g_-/\mathfrak k_-$, defined by $C_R\big(\,\overline{(R+\operatorname{id}_{\mathfrak g})x}\,\big)=\overline{(R-\operatorname{id}_{\mathfrak g})x}$, is an isomorphism of Lie algebras. Note that, with a slight abuse of language, we include also the case when $\mathfrak g_\pm/\mathfrak k_\pm$ are the zero-Lie algebras. 

\item Let $\Delta:\mathfrak g\rightarrow\mathfrak g\oplus\mathfrak g$ be the diagonal morphism, and let $i_R=(R_+,R_-)\circ\Delta:\mathfrak g\rightarrow\mathfrak g\oplus\mathfrak g$, defined by $i_R x=(R_+x,R_-x)$, for all $x\in\mathfrak g$. Then, if $\mathfrak g\oplus\mathfrak g$ is endowed with the direct-product Lie algebra structure, $i_R$ is an injective Lie algebra homomorphism of Lie algebras whose image consists of all pairs $(x,y)\in\mathfrak g\oplus\mathfrak g$ such that $C_R\overline x=C_R\overline y$.

\item Each element $x\in\mathfrak g$ has a unique decomposition as $x=x_+-x_-$, where $(x_+,x_-)=i_R x$.
\end{enumerate}

\end{proposition}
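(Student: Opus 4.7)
The plan is to unpack the definition $R_\pm = \tfrac12(R\pm\operatorname{id}_\mathfrak{g})$ and then argue all four parts from two elementary identities: the relation $R_+ - R_- = \operatorname{id}_\mathfrak{g}$ and the commutation
\[
R_+ R_- \;=\; R_- R_+ \;=\; \tfrac14(R^2 - \operatorname{id}_\mathfrak{g}),
\]
together with the mCYBE with $\theta=1$.

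For (1), I would expand $R_\pm[x,y]_R = \tfrac12(R\pm\operatorname{id})\tfrac12([Rx,y]+[x,Ry])$, substitute the mCYBE in the form $R([Rx,y]+[x,Ry]) = [Rx,Ry]+[x,y]$, and reassemble:
\[
R_\pm[x,y]_R = \tfrac14\bigl([Rx,Ry] + [x,y] \pm [Rx,y] \pm [x,Ry]\bigr) = [R_\pm x, R_\pm y].
\]
Hence $R_\pm:\mathfrak{g}_R\to\mathfrak{g}$ are Lie homomorphisms and $\mathfrak{g}_\pm := \operatorname{im}R_\pm$ are Lie subalgebras of $\mathfrak{g}$. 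For (2), the identity $R_+-R_-=\operatorname{id}$ immediately gives that any $x\in\ker R_\mp=\mathfrak{k}_\pm$ satisfies $x = \pm R_\pm x \in \mathfrak{g}_\pm$. To see $\mathfrak{k}_+$ is an ideal in $\mathfrak{g}_+$, pick $x\in\mathfrak{k}_+$ and $y=R_+z\in\mathfrak{g}_+$; using (1) and the commutation,
\[
R_-[x,y] = R_-[R_+x,R_+z] = R_-R_+[x,z]_R = R_+R_-[x,z]_R = R_+[R_-x,R_-z]=0,
\]
so $[x,y]\in\mathfrak{k}_+$; the argument for $\mathfrak{k}_-\subset\mathfrak{g}_-$ is identical. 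For $C_R$, well-definedness reduces to the same commutation: if $R_+(x-x')\in\mathfrak{k}_+$ then $R_-R_+(x-x')=0$, hence $R_+R_-(x-x')=0$, i.e.\ $R_-(x-x')\in\mathfrak{k}_-$. That $C_R$ is a Lie homomorphism is inherited from (1), and surjectivity is clear from the definition; injectivity follows from the same symmetric argument in reverse.

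For (3) and (4), the map $i_R=(R_+,R_-)\circ\Delta$ is a Lie algebra morphism from $\mathfrak{g}_R$ into $\mathfrak{g}\oplus\mathfrak{g}$ by (1) (applied componentwise); injectivity is immediate since $x=R_+x-R_-x$, so $i_Rx=0$ forces $x=0$. The image characterization is essentially a tautology: for $(a,b)=i_Rx=(R_+x,R_-x)$ one has $a\in\mathfrak{g}_+$, $b\in\mathfrak{g}_-$ and $C_R\overline{a}=C_R\overline{R_+x}=\overline{R_-x}=\overline{b}$ by definition of $C_R$; conversely, any pair meeting these conditions admits a preimage by adjusting representatives modulo $\mathfrak{k}_\pm$. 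Part (4) then follows: existence is the trivial identity $x=R_+x-R_-x$, and if $(x_+,x_-)=i_Ry$ with $x=x_+-x_-$, then $y=R_+y-R_-y=x_+-x_-=x$, so $x_\pm=R_\pm x$ are uniquely determined.

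The only real friction I foresee is the bookkeeping in part (2)—showing simultaneously that $\mathfrak{k}_\pm\subset\mathfrak{g}_\pm$, that these subspaces are ideals, and that $C_R$ is well-defined and bijective—all of which rely on carefully combining the Lie-homomorphism property from (1) with the commutation $R_+R_-=R_-R_+$. Once these are in place, (3) and (4) are essentially formal consequences.
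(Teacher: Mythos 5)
Your proof is correct. For item~1 it is essentially the paper's own argument: the paper records the identity $[R_\pm x,R_\pm y]=R_\pm\big([R_\pm x,y]+[x,R_\pm y]\mp [x,y]\big)$ and notes that it gives $R_\pm[x,y]_R=[R_\pm x,R_\pm y]$, which is exactly the computation you carry out by substituting the mCYBE into the expansion of $R_\pm[x,y]_R$. The paper then stops -- it explicitly sketches only item~1 -- so your treatment of items~2--4 is genuinely supplementary. Your two workhorses there, the relation $R_+-R_-=\operatorname{id}_{\mathfrak g}$ (giving $x=\pm R_\pm x$ on $\mathfrak k_\pm$) and the commutation $R_+R_-=R_-R_+$ (both being polynomials in $R$), do carry the whole of item~2 correctly, including well-definedness and bijectivity of $C_R$. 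Two small remarks: in item~3 you have silently, and sensibly, read the image condition as $C_R\overline{x}=\overline{y}$ (with $x\in\mathfrak g_+$, $y\in\mathfrak g_-$) rather than the literal $C_R\overline x=C_R\overline y$ printed in the statement, which is the standard formulation; and your converse inclusion ``adjusting representatives modulo $\mathfrak k_\pm$'' deserves one line of detail -- given $a=R_+x$ and $b$ with $\overline b=\overline{R_-x}$, the element $z=x-(b-R_-x)$ satisfies $i_Rz=(a,b)$, using $R_+w=0\Rightarrow R_-w=-w$. With that line added the argument is complete.
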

\begin{proof}
We will sketch only the proof of item 1. To this end, observe that, for all $x,y\in\mathfrak g$, the following identities hold:
\begin{equation}\label{eq:identity}
	[R_\pm x,R_\pm y]=R_\pm\big([R_\pm x,y]+[x,R_\pm y]\mp [x,y]\big).
\end{equation}
Via a simple computation they yield the equality $R_\pm[x,y]_R=[R_\pm x,R_\pm y]$, for $x,y\in\mathfrak g$.
\end{proof}

The map $C_R$ is called the \emph{Cayley transform} of $R$. In the following example we will introduce an important class of solutions of the mCYBE (with $\theta=1$).

\begin{example}\label{aks} Let $\mathfrak{g}_+,\mathfrak{g}_-$ be two Lie subalgebras of $\mathfrak{g}$ such that $\mathfrak{g}=\mathfrak{g}_+\oplus\mathfrak{g}_-$, and let $i_{\pm}:\mathfrak{g}_{\pm}\rightarrow\mathfrak g\oplus\mathfrak g$, $i_+:x\rightarrow (x,0)$ and $i_-:x\rightarrow (0,x)$ the two canonical embeddings. In particular $\mathfrak g_+$ and $\mathfrak g_-$, as Lie sub-algebras of $\mathfrak g$, centralizing each other, i.e., $[\mathfrak{g}_+,\mathfrak g_-]=0$. Finally, let $\pi_\pm:\mathfrak{g}\rightarrow \mathfrak{g}$ be the corresponding projections, and define
\begin{equation}
	R:=\pi_+-\pi_-.\label{eq:eP}
\end{equation}
First note that 
\begin{equation}\label{eq:pir}
	R+2\pi_-=\operatorname{id}_\mathfrak g = 2\pi_+-R .
\end{equation}
Now let us show that $R$ defined in (\ref{eq:eP}) satisfies \eqref{eq:impc}, i.e., the mCYBE with $\theta=1$. To this end it suffices to observe that $[x,y]_R=[x_+,y_+]-[x_-,y_-]$, which implies
\[
	R\big([Rx,y]+[x,Ry\big])-[Rx,Ry]=[x_+,y_+]+[x_-,y_-]+[x_-,y_+]+[x_+,y_-]=[x,y].
\]
Since $R$ satisfies the mCYBE, $\pi_\pm:=\pm R_\pm$ satisfy the following identity:
\begin{equation*}
	[\pi_\pm x,\pi_\pm y]=\pi_\pm\big([\pi_\pm x,y]+[x,\pi_\pm y]-[x,y]\big),\quad \forall x,y\in\mathfrak g,
\end{equation*}
and they are homomorphisms of Lie algebras from $\mathfrak g_R$ to $\mathfrak g$.
\end{example}



\subsection{Factorization in Lie groups.}

Let $R$ be a solution of the mCYBE and let $G_\pm\subset G$ be the unique (up to isomorphism), connected and simple Lie groups whose Lie algebras are $\mathfrak g_\pm=R_\pm\mathfrak g$.

\begin{theorem}[\cite{STS1,STS2,STS4,Faybusovich}] \label{thm:factorizationtheorem}
Then, every element $g'$ in a suitable neighborhood of the identity element of $G$ admits a factorization as
\begin{equation}
	g'=h_1h_2^{-1},\label{eq:fac}
\end{equation}
where $h_1\in G_+$ and $h_2\in G_-$.
\end{theorem}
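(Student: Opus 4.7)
The plan is to integrate the injective Lie algebra morphism $i_R=(R_+,R_-)\circ\Delta:\mathfrak g_R\rightarrow\mathfrak g\oplus\mathfrak g$ from Proposition~\ref{prop:prel} to the Lie group level, then compose with the map $(h_1,h_2)\mapsto h_1 h_2^{-1}$, and finally invoke the inverse function theorem. The key arithmetic observation driving the whole argument is the trivial identity $R_+-R_-=\operatorname{id}_{\mathfrak g}$, which is immediate from \eqref{eq:Rpm}.

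First, I would let $\tilde G_R$ denote the simply connected Lie group with Lie algebra $\mathfrak g_R$, and use the standard Lie group--Lie algebra correspondence to lift $i_R$ to a smooth group homomorphism
\[
	I_R:\tilde G_R\rightarrow G\times G .
\]
Since $i_R(\mathfrak g_R)\subset\mathfrak g_+\oplus\mathfrak g_-$, the image of $I_R$ sits inside $G_+\times G_-$, so one can write $I_R(g)=(h_1(g),h_2(g))$ with $h_\pm(g)\in G_\pm$. Composing with the smooth map $\pi:G\times G\rightarrow G$, $\pi(h_1,h_2):=h_1 h_2^{-1}$, yields
\[
	\varphi:=\pi\circ I_R:\tilde G_R\rightarrow G,
\]
which, by construction, automatically produces the candidate factorization $\varphi(g)=h_1(g)\,h_2(g)^{-1}$, with $h_\pm(g)\in G_\pm$.

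The heart of the argument is the computation of the differential $\varphi_{\ast,e}$ at the identity. Under the natural identifications $T_e\tilde G_R\simeq\mathfrak g_R=\mathfrak g$ (as vector spaces) and $T_e G\simeq\mathfrak g$, and using that $\pi_{\ast,(e,e)}(X,Y)=X-Y$ (the differential of the inversion at $e\in G$ being $-\operatorname{id}_{\mathfrak g}$), the chain rule gives
\[
	\varphi_{\ast,e}(x)=R_+ x-R_- x=x,\quad \forall x\in\mathfrak g,
\]
so that $\varphi_{\ast,e}=\operatorname{id}_{\mathfrak g}$ is a linear isomorphism. The inverse function theorem then furnishes open neighborhoods $U\subset\tilde G_R$ of $e$ and $V\subset G$ of $e$ such that $\varphi\vert_U:U\rightarrow V$ is a diffeomorphism, so that every $g'\in V$ is of the form $g'=\varphi(g)=h_1(g)\,h_2(g)^{-1}$ for a unique $g\in U$, which is precisely the claimed factorization \eqref{eq:fac}.

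I expect the main obstacle to be essentially bookkeeping rather than conceptual: one must carefully distinguish the two Lie algebra structures $[\cdot,\cdot]$ and $[\cdot,\cdot]_R$ on the same underlying vector space $\mathfrak g$, and keep track of how the tangent identification $T_e\tilde G_R\simeq\mathfrak g$ interacts with the differentials of multiplication and inversion on $G\times G$. Once the single algebraic identity $R_+-R_-=\operatorname{id}_{\mathfrak g}$ has been put to use, the theorem reduces to a direct application of the inverse function theorem.
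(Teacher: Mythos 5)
Your proposal is correct and follows essentially the same route as the paper: integrate the Lie algebra morphism $(R_+,R_-)\circ\Delta:\mathfrak g_R\to\mathfrak g\oplus\mathfrak g$ to the group level, compose with $(h_1,h_2)\mapsto h_1h_2^{-1}$, observe that the differential at the identity is $R_+-R_-=\operatorname{id}_{\mathfrak g}$, and conclude by the inverse function theorem. The only cosmetic difference is that you package the diagonal map and $(r_+,r_-)$ into a single homomorphism $I_R$, whereas the paper keeps them as separate factors of $\psi=j\circ(r_+,r_-)\circ\delta$.
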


\begin{proof}
Recall that, as vector spaces, $\mathfrak g_R=\mathfrak g$. Let $\Delta:\mathfrak g_R\rightarrow \mathfrak g_R\oplus\mathfrak g_R$ be the diagonal map, i.e., $\Delta(x)=(x,x)$ for all $x \in \mathfrak g_R$. Let $i:\mathfrak g\oplus\mathfrak g\rightarrow \mathfrak g$ the linear map defined by $i(x,y)=x-y,$ for all $x,y \in\mathfrak g$.

Consider the linear map defined by:
\begin{equation}
\begin{CD}
 \mathfrak g_R@>{\Delta}>>\mathfrak g_R\oplus\mathfrak g_R@>(R_+,R_-)>>\mathfrak g\oplus\mathfrak g@>{i}>>\mathfrak g.\label{eq:comp}
\end{CD}
\end{equation}
Then $i\circ (R_+,R_-)\circ\Delta(x)=x$, for all $x\in\mathfrak g$. Since $\Delta:\mathfrak g_R\rightarrow\mathfrak g_R \oplus \mathfrak g_R$ and $(R_+,R_-):\mathfrak g_R\oplus\mathfrak g_R\rightarrow\mathfrak g\oplus\mathfrak g$ are homomorphism of Lie algebras they integrate to homomorphisms of Lie groups, which will be denoted as $\delta:G_R\rightarrow G_R\times G_R$ and $(r_+,r_-):G_R\times G_R\rightarrow G\times G$, respectively. In particular, for each $g\in G_R$, $g_\pm=r_\pm g$. Furthermore, note that, even though $i:\mathfrak g\oplus\mathfrak g\rightarrow\mathfrak g$ is not a homorphism of Lie algebras, it is the differential (at the identity) of the map $j:G\times G\rightarrow G$, defined by $j(g,h)=gh^{-1}$. Then the map defined in \eqref{eq:comp} is the differential (at the identity $e\in G_R$) of the map $\psi:G_R\rightarrow G$ defined by
\begin{equation}
	\psi=j\circ (r_+,r_-)\circ\delta.\label{eq:map} 
\end{equation}
Observe now that since $\psi_{\ast,e}=\operatorname{id}$, the map $\psi$ is a local diffeomorphism, i.e., there exist neighborhoods $V$ and $U$ of the identities elements of both $G_R$ and $G$ such that $\psi\vert_V:V\rightarrow U$ is a diffeomorphism. Moreover, just applying the definition of the map $\psi$ given in formula \eqref{eq:map}, one has that 
\begin{equation}
	\psi(g)=g_+g_-^{-1},\label{eq:formexp} 
\end{equation}
for all $g\in G_R$. Taking now any element $g'$ in a suitable neighborhood of the identity of $G$ (eventually contained in $U$), then 
\[
	g'=\psi(\psi^{-1}g')=(\psi^{-1}(g'))_+(\psi^{-1}(g'))_-^{-1}.
\]
The statement now follows taking $h_1=(\psi^{-1}(g'))_+$ and $h_2=(\psi^{-1}(g'))_-$.
\end{proof}

To simplify statement and notation, let us suppose that the map $\psi$ is a global diffeomorphism. As remarked above, the map $\psi$ is not a homomorphism of Lie groups. On the other hand one can prove that

\begin{corollary} [\cite{STS1,STS2,STS4,Faybusovich}]
The map $\ast:G\times G\rightarrow G$ defined by
\begin{equation}
	g\ast h=(\psi^{-1}(g))_+h(\psi^{-1}(g))_-^{-1}\label{eq:newpro}
 \end{equation}
 for all $g,h\in G$, defines a new structure of Lie group on the underlying manifold of the Lie group $G$. Let $G_\ast$ be the Lie group whose product is $\ast$ and whose underlying manifold is $G$. Then $\psi:G_R\rightarrow G_\ast$ is an isomorphism of Lie groups.
\end{corollary}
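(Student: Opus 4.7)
The plan is to endow $G$ with a Lie group structure by transport of structure along the diffeomorphism $\psi:G_R\to G$, and then to verify that the resulting multiplication coincides with the explicit formula \eqref{eq:newpro}. Under the simplifying hypothesis that $\psi$ is a global diffeomorphism, I would set
\[
g\cdot_\star h := \psi\bigl(\psi^{-1}(g)\cdot_R\psi^{-1}(h)\bigr),\qquad g,h\in G,
\]
where $\cdot_R$ denotes the product in $G_R$. This assignment is smooth, associative, and unital with unit $\psi(e_R)=e_+e_-^{-1}=e$; inversion is the pullback of the one in $G_R$. Thus $(G,\cdot_\star)$ is a Lie group, and by construction $\psi:G_R\to(G,\cdot_\star)$ is a Lie group morphism, hence an isomorphism as $\psi$ is a diffeomorphism.

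The only computation to be done is the identification of $\cdot_\star$ with the $\ast$ of \eqref{eq:newpro}. By Proposition \ref{prop:prel} the maps $R_\pm:\mathfrak g_R\to\mathfrak g$ are Lie algebra homomorphisms, and integrating them yields Lie group homomorphisms $r_\pm:G_R\to G$. Writing $x=\psi^{-1}(g)$, $y=\psi^{-1}(h)$, and $x_\pm=r_\pm(x)$, $y_\pm=r_\pm(y)$, the homomorphism property gives $r_\pm(x\cdot_R y)=x_\pm y_\pm$, so that
\[
\psi(x\cdot_R y) = (x\cdot_R y)_+\,(x\cdot_R y)_-^{-1} = x_+y_+y_-^{-1}x_-^{-1} = x_+\bigl(y_+y_-^{-1}\bigr)x_-^{-1} = x_+\,h\,x_-^{-1}.
\]
Since $x_\pm=(\psi^{-1}(g))_\pm$, the right-hand side is exactly $(\psi^{-1}(g))_+\,h\,(\psi^{-1}(g))_-^{-1}$, which is the formula \eqref{eq:newpro}. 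Hence $g\cdot_\star h=g\ast h$, and the transported structure is the announced one.

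Beyond this bookkeeping there is no genuine obstacle; the only subtlety is the global-versus-local distinction. When $\psi$ is merely a local diffeomorphism on neighbourhoods $V\subset G_R$ and $U\subset G$ of the identities, the same computation produces a local Lie group structure on $U$, and $\psi|_V:V\to U$ becomes an isomorphism of local Lie groups, which is the correct interpretation of the corollary without the simplifying global hypothesis made at the start of the argument.
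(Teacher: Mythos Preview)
Your proposal is correct and matches the paper's approach. The paper does not give an explicit proof of the corollary (it is attributed to the cited references), but immediately afterwards states as a separate proposition the identity $g\ast h=\psi(\psi^{-1}(g)\psi^{-1}(h))$; your argument is precisely the verification of this identity together with the observation that transport of structure along $\psi$ yields the Lie group axioms for free, so your organization simply reverses the order in which the paper presents the two ingredients.
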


Note that the product $\ast$ defined in the previous corollary can be obtained as the \emph{push-forward} via $\psi$ of the product defined on $G_R$. More precisely one can prove that

\begin{proposition}[\cite{STS1,STS2,STS4,Faybusovich}]
For all $g,h\in G$, one has that
\[
	g\ast h=\psi(\psi^{-1}(g)\psi^{-1}(h)).  
\]
\end{proposition}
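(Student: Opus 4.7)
The plan is to unwind both sides of the claimed equality using the explicit formula $\psi(x) = x_{+} x_{-}^{-1}$ from \eqref{eq:formexp} and the fact that the projections $r_{\pm}: G_R \to G$ are Lie group homomorphisms, as was established in the proof of Theorem~\ref{thm:factorizationtheorem}.

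More concretely, I would begin by setting $\tilde g := \psi^{-1}(g)$ and $\tilde h := \psi^{-1}(h)$, viewed as elements of $G_R$. Using \eqref{eq:formexp}, the right-hand side of the claim is
\[
\psi(\tilde g\,\tilde h) = (\tilde g\,\tilde h)_{+} \, (\tilde g\,\tilde h)_{-}^{-1}.
\]
The key input is that $R_{+}$ and $R_{-}$ are Lie algebra homomorphisms from $\mathfrak g_R$ to $\mathfrak g$ (Proposition \ref{prop:prel}, item 1), so they integrate to Lie group homomorphisms $r_{\pm}: G_R \to G$. Hence $(\tilde g\,\tilde h)_{\pm} = \tilde g_{\pm}\,\tilde h_{\pm}$ and
\[
\psi(\tilde g\,\tilde h) = \tilde g_{+}\,\tilde h_{+}\,\tilde h_{-}^{-1}\,\tilde g_{-}^{-1}.
\]

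For the left-hand side, I would apply \eqref{eq:newpro} and then rewrite $h$ itself using $h = \psi(\tilde h) = \tilde h_{+}\,\tilde h_{-}^{-1}$:
\[
g \ast h = \tilde g_{+}\, h\, \tilde g_{-}^{-1} = \tilde g_{+}\,\tilde h_{+}\,\tilde h_{-}^{-1}\,\tilde g_{-}^{-1}.
\]
Comparison of the two expressions yields the desired identity. No deeper argument seems to be required, because the corollary has already isolated all the nontrivial structural facts (existence of $\psi^{-1}$ on a suitable neighborhood, smoothness, and the Lie group property of $\ast$).

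The only mildly delicate point is making sure the manipulations take place where $\psi^{-1}$ is defined; assuming, as the paper does for simplicity, that $\psi$ is a global diffeomorphism, this is automatic. Otherwise one would restrict $g, h$ (and the product $\tilde g\,\tilde h$ in $G_R$) to lie in the open set $V \subset G_R$ on which $\psi\vert_V : V \to U$ is a diffeomorphism, which is possible in a neighborhood of the identity. I do not foresee any genuine obstacle; the proof is essentially a bookkeeping exercise showing that the push-forward formula is consistent with the explicit definition \eqref{eq:newpro}, and it in fact re-derives the isomorphism statement of the preceding corollary.
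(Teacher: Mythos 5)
Your proof is correct, and it is the natural (indeed essentially the only) argument: the paper itself states this proposition without proof, merely citing the references, and your computation — expanding both sides via $\psi(x)=x_+x_-^{-1}$ and using that $r_\pm:G_R\to G$ are group homomorphisms so that $(\tilde g\,\tilde h)_\pm=\tilde g_\pm\tilde h_\pm$ — is exactly what is needed. Your remark about restricting to the neighborhood where $\psi^{-1}$ is defined (or assuming $\psi$ is a global diffeomorphism, as the paper does) correctly handles the only delicate point.
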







\noindent
\emph{Applications to dynamics and integrability.} 
Classical $r$-matrices and double Lie algebras play an important role in the theory of classical integrable systems, both finite and infinite dimensional. The relevance of these object to this theory stems from the next result. 

\begin{theorem}[\cite{STS1,STS2,STS4}]\label{thm:STS1}
Let $\mathfrak g$ be a finite dimensional Lie algebra and $R$ a solution of the mCYBE. Let $G$ be the connected and simply-connected Lie group corresponding to $\mathfrak g$. Let $\{\cdot,\cdot\}$ and $\{\cdot,\cdot\}_R$ be the linear Poisson brackets defined on $\mathfrak g$. Then:
\begin{enumerate}
\item The elements of $\text{Cas}({\mathfrak g}^{\ast},\{\cdot,\cdot\})$ Poisson commute with respect to the Poisson bracket $\{\cdot,\cdot\}_R$.

\item For every $f \in \text{Cas}({\mathfrak g}^{\ast},\{\cdot,\cdot\})$, the Hamiltonian vector field $X_f^R$, defined by $\{\cdot,\cdot\}_R$, equals:
\begin{equation}
	X_f^R(\alpha)=-\frac{1}{2}\operatorname{ad}^{\sharp}_{Rdf_{\alpha}}\alpha.\label{eq:HamSTS1}
\end{equation}

\item If $(\mathfrak g,(\cdot\,\vert\,\cdot))$ is a quadratic Lie algebra, then to $X_f^R$ corresponds a vector field $\widetilde{X_f^R}$ on $\mathfrak g$, defined by the following Lax (type) equation:
\begin{equation} 
	\widetilde{X_f^R}(x)=\frac{1}{2}[x,Rdf_x].\label{eq:HamSTS2}
\end{equation}
The vector field $\widetilde{X_f^{\operatorname R}}$ is obtained using the diffeomorphism between $\mathfrak g$ and $\mathfrak g^{\ast}$ induced by the bilinear form $(\cdot\,\vert\,\cdot)$.
\end{enumerate}
\end{theorem}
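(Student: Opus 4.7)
My plan is to handle the three items essentially by applying formula \eqref{eq:adR} from the remark, and then invoke the Casimir characterization \eqref{eq:casin}. Since the Lie bracket $[\cdot,\cdot]_R$ depends linearly on $R$, the associated linear Poisson structure $\{\cdot,\cdot\}_R$ can be expressed in terms of the original co-adjoint action, so that all three claims reduce to relatively direct algebraic manipulations.

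For item $1$, I would take $f,g\in\operatorname{Cas}(\mathfrak g^\ast,\{\cdot,\cdot\})$ and evaluate $\{f,g\}_R(\alpha)$ using \eqref{eq:adR}, which expresses it as
\[
\tfrac{1}{2}\bigl(\operatorname{ad}^{\sharp}_{dg_\alpha}(\alpha)(Rdf_\alpha)-\operatorname{ad}^{\sharp}_{df_\alpha}(\alpha)(Rdg_\alpha)\bigr).
\]
The Casimir condition \eqref{eq:casin} forces $\operatorname{ad}^{\sharp}_{df_\alpha}(\alpha)=\operatorname{ad}^{\sharp}_{dg_\alpha}(\alpha)=0$, so both summands vanish and the Poisson bracket is zero for every $\alpha\in\mathfrak g^\ast$.

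For item $2$, I would repeat the same computation with $f\in\operatorname{Cas}$ but $g\in C^\infty(\mathfrak g^\ast)$ arbitrary. Again by \eqref{eq:casin} the second summand in \eqref{eq:adR} vanishes, leaving
\[
\{f,g\}_R(\alpha)=\tfrac{1}{2}\operatorname{ad}^{\sharp}_{dg_\alpha}(\alpha)(Rdf_\alpha)
=-\tfrac{1}{2}\langle\alpha,[dg_\alpha,Rdf_\alpha]\rangle
=-\tfrac{1}{2}\bigl(\operatorname{ad}^{\sharp}_{Rdf_\alpha}\alpha\bigr)(dg_\alpha).
\]
Comparing this with $X_f^R(g)(\alpha)=\langle X_f^R(\alpha),dg_\alpha\rangle$ and using that $g$ was arbitrary gives \eqref{eq:HamSTS1}.

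For item $3$, I would use the invariant bilinear form $B=(\cdot\,|\,\cdot)$ to identify $\mathfrak g\simeq\mathfrak g^\ast$ via $x\mapsto B(x,\cdot)$. The invariance of $B$ means that under this identification $-\operatorname{ad}^{\sharp}_{y}\alpha$ corresponds to $[x,y]$, where $x\in\mathfrak g$ is the dual of $\alpha$; this is precisely the mechanism that converted the Hamiltonian equation \eqref{eq:hame} into the Lax form \eqref{eq:isf} in Subsection~\ref{ssec:PoisIso}. Applying the same transport to the formula obtained in item $2$, with $y=Rdf_\alpha$, immediately yields $\widetilde{X_f^R}(x)=\tfrac{1}{2}[x,Rdf_x]$, establishing \eqref{eq:HamSTS2}. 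The only mild subtlety is to ensure that differentials transport correctly under the identification, i.e.\ that $df_\alpha\in\mathfrak g$ corresponds naturally to $df_x$ once $f$ is viewed as a function on $\mathfrak g$; this is standard and requires no computation beyond the chain rule.
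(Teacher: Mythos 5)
Your proposal is correct and follows essentially the same route as the paper: item 1 via \eqref{eq:adR} together with the Casimir characterization \eqref{eq:casin}, item 2 by the same cancellation of the term $\langle\alpha,[df_\alpha,R\,\cdot\,]\rangle$ (the paper pairs $X_f^R(\alpha)$ against an arbitrary $x\in\mathfrak g$ rather than against $dg_\alpha$, which is equivalent), and item 3 by transporting with the $\operatorname{ad}$-invariance of $(\cdot\,\vert\,\cdot)$. No gaps.
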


\begin{proof}
The proof of the first statement of the theorem follows from \eqref{eq:casin} and \eqref{eq:adR}. The second statement follows by a direct computation
\begin{eqnarray}
	\langle X_f^R(\alpha),x\rangle
	&=&-\langle\operatorname{ad}^{\sharp,R}_{df_{\alpha}}\alpha,x\rangle
	=\langle\alpha,\operatorname{ad}^{R}_{df_{\alpha}}x\rangle\nonumber\\
	&=&\frac{1}{2}\langle\alpha,[Rdf_{\alpha},x]\rangle+\frac{1}{2}{\langle\alpha,[df_{\alpha},Rx]}\rangle 
	=-\frac{1}{2}\langle\operatorname{ad}^{\sharp}_{Rdf_{\alpha}}\alpha,x\rangle\nonumber,
\end{eqnarray}
where we used that ${\langle\alpha,[df_{\alpha},Rx]}\rangle =0$, see (\ref{eq:casin}). This proves formula (\ref{eq:HamSTS1}). Finally, using the non-degenerate, bilinear $\operatorname{ad}$-invariant form $(\cdot\,\vert\,\cdot)$ defined on $\mathfrak g$ we can write for $\alpha\in\mathfrak g^{\ast}$ and $y \in\mathfrak g$
\[
	\langle X_f^R(\alpha),y\rangle	=\frac{1}{2}\langle\alpha,[Rdf_{\alpha},y]\rangle
							=\frac{1}{2}\big(x_{\alpha}\,\vert\,[Rdf_{\alpha},y]\big).
\]
In the previous formula, $x_{\alpha}$ is the element in $\mathfrak g$ corresponding to $\alpha \in \mathfrak g^{\ast}$ via the isomorphism between $\mathfrak g^{\ast}$ and $\mathfrak g$ induced by $(\cdot\,\vert\,\cdot)$. Using now the $\operatorname{ad}$-invariance of $(\cdot\,\vert\,\cdot)$, the last term of the previous formula can be written as:
\[
	\frac{1}{2}\big([x_{\alpha},Rdf_{\alpha}]\,\vert\,y\big)
\]
implying that:
\[
	\langle X_f^R(\alpha),y\rangle=\frac{1}{2}\big([x_{\alpha},Rdf_{\alpha}]\,\vert\,y\big).
\]
Using again the isomorphism defined by $(\cdot\,\vert\,\cdot)$, we can write:
\[ 
	\widetilde{X_f^R}(x_{\alpha})=\frac{1}{2}[x_{\alpha},Rdf_{\alpha}],
\]
which proves the last part of the theorem.
\end{proof}

\begin{remark}
The Hamiltonian equations corresponding to $X_f^R$ have the following form:
\begin{equation}
	\dot{\alpha}=-\frac{1}{2}{\operatorname{ad}}^{\sharp}_{R(df_{\alpha})}\alpha.\label{eq:lhe}
\end{equation}
Moreover, the Hamiltonian vector field $X_f$ corresponding to a Casimir $f \in C^{\infty}({\mathfrak g}^{\ast})$ is identically zero.
\end{remark}

Finally, using the notations of Theorem \ref{thm:STS1}, where $G_\pm$ are the connected and simply-connected Lie groups whose Lie algebras are $\mathfrak g_\pm=\text{im}\,R_\pm$, see Theorem \ref{thm:factorizationtheorem}, one can prove that

\begin{theorem}[\cite{STS2,STS4}]\label{thm:STS2}
Let $f\in\text{Cas}(\mathfrak g,\{\cdot,\cdot\})$ and let $t \rightarrow g_\pm(t)$ be two smooth curves in $G_{\pm}$ solving the factorization problem
\[
	\exp (tdf_\alpha)=g_+(t)g_-(t)^{-1},\quad g_\pm(0)=e.
\]
The integral curve $\alpha=\alpha(t)$ of the vector field \eqref{eq:HamSTS1}, solving \eqref{eq:lhe} with $\alpha(0)=\alpha$, is
\begin{equation}
	\alpha(t)=\operatorname{Ad}^\sharp_{g_+^{-1}(t)}\alpha=\operatorname{Ad}^\sharp_{g_-^{-1}(t)}\alpha.
\end{equation}

\end{theorem}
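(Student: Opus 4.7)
The plan is to show directly that $\alpha(t):=\operatorname{Ad}^\sharp_{g_+^{-1}(t)}\alpha$ satisfies the Hamiltonian equation \eqref{eq:lhe} with initial condition $\alpha(0)=\alpha$, whence uniqueness of solutions of the ODE forces $\alpha(t)$ to be the integral curve in question. First I would reconcile the two candidate expressions: from $\exp(tdf_\alpha) = g_+(t)g_-(t)^{-1}$ one obtains $g_+(t)^{-1} = g_-(t)^{-1}\exp(-tdf_\alpha)$, so
\[
\operatorname{Ad}^\sharp_{g_+^{-1}(t)}\alpha = \operatorname{Ad}^\sharp_{g_-^{-1}(t)}\operatorname{Ad}^\sharp_{\exp(-tdf_\alpha)}\alpha = \operatorname{Ad}^\sharp_{g_-^{-1}(t)}\alpha,
\]
where the second equality uses $\operatorname{Ad}^\sharp_{\exp(-tdf_\alpha)}\alpha = \exp(-t\operatorname{ad}^\sharp_{df_\alpha})\alpha = \alpha$, a direct consequence of the Casimir identity \eqref{eq:casin}. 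The initial condition follows immediately from $g_\pm(0) = e$.

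For the ODE I would next differentiate the factorization relation $\exp(tdf_\alpha) = g_+(t)g_-(t)^{-1}$ and left-multiply by $g_+(t)^{-1}$, obtaining
\[
\operatorname{Ad}_{g_+^{-1}(t)}df_\alpha = g_+(t)^{-1}\dot g_+(t) - g_-(t)^{-1}\dot g_-(t).
\]
Assuming $G$ connected, the Casimir condition is equivalent to $\operatorname{Ad}^\sharp$-invariance of $f$; differentiating $f(\operatorname{Ad}^\sharp_g\beta) = f(\beta)$ in $\beta$ yields the equivariance $df_{\operatorname{Ad}^\sharp_g\beta} = \operatorname{Ad}_g\, df_\beta$, so the left-hand side above equals $df_{\alpha(t)} = R_+df_{\alpha(t)} - R_-df_{\alpha(t)}$. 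Since $g_\pm(t) \in G_\pm$, the summands $g_\pm(t)^{-1}\dot g_\pm(t)$ lie in $\mathfrak g_\pm = \operatorname{im} R_\pm$, and the unique decomposition of Proposition \ref{prop:prel}(4) forces
\[
g_+(t)^{-1}\dot g_+(t) = R_+df_{\alpha(t)}, \qquad g_-(t)^{-1}\dot g_-(t) = R_-df_{\alpha(t)}.
\]

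Finally, differentiating $\alpha(t) = \operatorname{Ad}^\sharp_{g_+^{-1}(t)}\alpha$ via the standard formula for the coadjoint action gives $\dot\alpha(t) = -\operatorname{ad}^\sharp_{g_+(t)^{-1}\dot g_+(t)}\alpha(t) = -\operatorname{ad}^\sharp_{R_+df_{\alpha(t)}}\alpha(t)$. Writing $R_+ = \tfrac{1}{2}(R + \operatorname{id})$ and invoking the Casimir identity $\operatorname{ad}^\sharp_{df_{\alpha(t)}}\alpha(t) = 0$ at the point $\alpha(t)$, this simplifies to $-\tfrac{1}{2}\operatorname{ad}^\sharp_{Rdf_{\alpha(t)}}\alpha(t)$, which is exactly \eqref{eq:lhe}. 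The delicate point in this plan is the uniqueness step used to identify $g_\pm^{-1}\dot g_\pm$ with $R_\pm df_{\alpha(t)}$: in full generality $\mathfrak g_+ \cap \mathfrak g_-$ need not be trivial, so a decomposition $x = x_+ - x_-$ with $x_\pm \in \mathfrak g_\pm$ is not automatically unique as elements of $\mathfrak g$. Under the standard working hypothesis $\mathfrak g = \mathfrak g_+ \oplus \mathfrak g_-$, as in Example \ref{aks}, the issue disappears; otherwise one derives the analogous identity starting from $\alpha(t) = \operatorname{Ad}^\sharp_{g_-^{-1}(t)}\alpha$, averages the two, and absorbs the residual element of $\mathfrak g_+ \cap \mathfrak g_-$ into the kernel of $\operatorname{ad}^\sharp_{(\cdot)}\alpha(t)$ via the Casimir identity once more.
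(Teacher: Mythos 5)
The paper does not actually prove Theorem \ref{thm:STS2}; it is quoted from \cite{STS2,STS4} with no in-text argument, so there is nothing internal to compare your proof against. Judged on its own merits, your argument is the standard one and is correct in the regime where the theorem holds, namely when the factorization is essentially unique (e.g.\ $\mathfrak g=\mathfrak g_+\oplus\mathfrak g_-$ as in Example \ref{aks}, or more generally $\mathfrak g_+\cap\mathfrak g_-=0$). Each step checks out: $\operatorname{Ad}^\sharp_{\exp(-t\,df_\alpha)}\alpha=\alpha$ because $(\operatorname{ad}^\sharp_{df_\alpha})^k\alpha=0$ for all $k\ge 1$ by \eqref{eq:casin}; the equivariance $df_{\operatorname{Ad}^\sharp_g\beta}=\operatorname{Ad}_g\,df_\beta$ follows from differentiating the $\operatorname{Ad}^\sharp$-invariance of $f$; the logarithmic derivative of the factorization gives $\operatorname{Ad}_{g_+^{-1}}df_\alpha=g_+^{-1}\dot g_+-g_-^{-1}\dot g_-$; and the final passage from $-\operatorname{ad}^\sharp_{R_+df_{\alpha(t)}}\alpha(t)$ to $-\tfrac12\operatorname{ad}^\sharp_{Rdf_{\alpha(t)}}\alpha(t)$ uses \eqref{eq:casin} at the point $\alpha(t)$, which is legitimate since $f$ is a Casimir everywhere.

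The one point I would push back on is your proposed repair of the degenerate case. Set $w(t):=g_+^{-1}\dot g_+-R_+df_{\alpha(t)}=g_-^{-1}\dot g_--R_-df_{\alpha(t)}\in\mathfrak g_+\cap\mathfrak g_-$. Averaging the two expressions for $\dot\alpha$ yields $\dot\alpha=-\tfrac12\operatorname{ad}^\sharp_{Rdf_{\alpha(t)}}\alpha(t)-\operatorname{ad}^\sharp_{w(t)}\alpha(t)$: the residual term does not cancel, and the Casimir identity only annihilates $\operatorname{ad}^\sharp_{df_{\alpha(t)}}\alpha(t)$, not $\operatorname{ad}^\sharp_{w(t)}\alpha(t)$ for an arbitrary $w(t)$ in the intersection. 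No argument can close this gap, because the statement itself fails in that generality: if $k(t)$ is any curve in $G_+\cap G_-$ with $k(0)=e$, then $g_\pm(t)k(t)$ also solves the factorization problem, while $\operatorname{Ad}^\sharp_{k(t)^{-1}g_+(t)^{-1}}\alpha$ generically differs from $\operatorname{Ad}^\sharp_{g_+(t)^{-1}}\alpha$ (think of the standard $r$-matrix on a semisimple Lie algebra, where $\mathfrak g_\pm$ are the two Borel subalgebras and $\mathfrak g_+\cap\mathfrak g_-$ is the Cartan subalgebra). So the theorem, as quoted, implicitly presupposes the uniqueness you invoke; your main argument is the right proof under that hypothesis, and only the final sentence of your plan should be dropped or replaced by an explicit assumption.
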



\subsection{Pre- and post-Lie algebras from classical $r$-matrices}
\label{ss:YBPP}

Let $R \in \operatorname{End}_{\mathbb F}(\mathfrak g)$ be a solution of the CYBE. The next result is well-known.

\begin{proposition}[\cite{GuoBaiNi}]
The binary product $\cdot :\mathfrak g\otimes\mathfrak g\rightarrow\mathfrak g$ defined by
\begin{equation}
	x\cdot y=[Rx,y],\quad \forall x,y\in\mathfrak g\label{eq:preR}
\end{equation}
defines a left pre-Lie algebra on $\mathfrak g$.
\end{proposition}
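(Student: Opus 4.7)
The plan is to verify the defining identity $a_\cdot(x,y,z)=a_\cdot(y,x,z)$ of Definition \ref{def:pre-Lie} directly, using only the CYBE (equation \eqref{eq:impc} with $\theta=0$) and the Jacobi identity for the original bracket of $\mathfrak g$. So the strategy is a single, short computation of the two associators with respect to $x\cdot y:=[Rx,y]$, and then a comparison.

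First I would expand both sides. By definition,
\[
a_\cdot(x,y,z)=(x\cdot y)\cdot z-x\cdot(y\cdot z)=[R[Rx,y],z]-[Rx,[Ry,z]],
\]
and symmetrically
\[
a_\cdot(y,x,z)=[R[Ry,x],z]-[Ry,[Rx,z]].
\]
Subtracting, and using the skew-symmetry of the underlying bracket to write $-R[Ry,x]=R[x,Ry]$, gives
\[
a_\cdot(x,y,z)-a_\cdot(y,x,z)=\bigl[R\bigl([Rx,y]+[x,Ry]\bigr),z\bigr]-[Rx,[Ry,z]]+[Ry,[Rx,z]].
\]

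Next I would invoke the CYBE in the form $R([Rx,y]+[x,Ry])=[Rx,Ry]$ to rewrite the first term as $[[Rx,Ry],z]$. The Jacobi identity applied to the triple $(Rx,Ry,z)$ yields
\[
[[Rx,Ry],z]=[Rx,[Ry,z]]-[Ry,[Rx,z]],
\]
and substituting this back shows that the right-hand side of the displayed equation collapses to zero. Hence $a_\cdot(x,y,z)=a_\cdot(y,x,z)$ for all $x,y,z\in\mathfrak g$, which is precisely condition \eqref{eq:lpre-Lie}.

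There is no real obstacle: the proof is a two-line manipulation whose only non-trivial inputs are the CYBE (to convert $R([Rx,y]+[x,Ry])$ into a commutator of $R$-images) and the Jacobi identity on $\mathfrak g$. The only point worth a moment of care is the bookkeeping of signs when rewriting $-R[Ry,x]$ as $R[x,Ry]$; after that the cancellation is immediate. Thus the product defined in \eqref{eq:preR} satisfies the left pre-Lie axiom, and $(\mathfrak g,\cdot)$ is a left pre-Lie algebra.
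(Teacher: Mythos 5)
Your proof is correct and follows essentially the same route as the paper's: both arguments rest on exactly the two inputs you identify, namely the Jacobi identity applied to $[[Rx,Ry],z]$ and the CYBE used to trade $R([Rx,y]+[x,Ry])$ for $[Rx,Ry]$. The only cosmetic difference is that you form the difference $a_\cdot(x,y,z)-a_\cdot(y,x,z)$ and show it vanishes, whereas the paper rewrites $a_\cdot(x,y,z)$ step by step until it becomes $a_\cdot(y,x,z)$.
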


\begin{proof}
Indeed, for all $x,y,z\in\mathfrak g$ we have
\begin{eqnarray*}	
	(x\cdot y)\cdot z-x\cdot (y\cdot z)
	&=&[R[Rx,y],z]-[Rx,[Ry,z]]\\\
	&\stackrel{(a)}{=}&[R[Rx,y],z]-[[Rx,Ry],z]-[Ry,[Rx,z]]\\
	&\stackrel{(b)}{=}&-[R[x,Ry],z]-[Ry,[Rx,z]]\\
	&=&[R[Ry,x],z]-[Ry,[Rx,Rz]]\\
	&=&(y\cdot x)\cdot z-y\cdot (x\cdot z).
\end{eqnarray*}
In (a) we used the Jacobi identity. In (b) we used \eqref{eq:impc} with $\theta=0$
\end{proof}

Note that the Lie bracket \eqref{doubleLiebracket} defined by a solution  $R$ of the CYBE is, up to a numerical factor, subordinate to the pre-Lie product \eqref{eq:preR}. In fact, since $x\cdot y-y\cdot x=[Rx,y]-[Ry,x]=[Rx,y]+[x,Ry]$, 
\[
	[\cdot,\cdot]_R=\frac{1}{2}(x\cdot y-y\cdot x).
\]
In particular, the Lie bracket \eqref{doubleLiebracket} is subordinate to the pre-Lie product $\bullet :\mathfrak g\otimes\mathfrak g\rightarrow\mathfrak g$ defined by $x\bullet y=\frac{1}{2}x\cdot y$, for all $x,y\in\mathfrak g$. Let $R$ be a solution of the CYBE and let $\bullet:\mathfrak g\otimes\mathfrak g\rightarrow\mathfrak g$ be the pre-Lie product defined by  $x \bullet y=\frac{1}{2}[Rx,y]$ for all $x,y\in\mathfrak g$. Then, if for every $x\in\mathfrak g_R=(\mathfrak g,[\cdot,\cdot]_R)$ one denotes with $X_x$ the left-invariant vector field on $G_R$, the unique connected and simply-connected Lie group whose Lie algebra is $\mathfrak g_R$, then one can prove the next result.

\begin{proposition}
The left-invariant linear connection on $G_R$ defined by
\begin{equation}
	\nabla_{X_x}X_y=\frac{1}{2}X_{[Rx,y]},\label{eq:lipr1}
\end{equation}
is flat and torsion free. In particular, the product
\begin{equation}
	X\cdot Y=\nabla_XY\label{eq:lipr2}
\end{equation}
defines a left pre-Lie algebra on $\mathfrak X_{G_R}$.
\end{proposition}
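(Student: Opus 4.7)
My plan is to compute the torsion and curvature of $\nabla$ directly on pairs (respectively triples) of left-invariant vector fields and then invoke the fact, recalled in the Introduction, that a flat torsion-free connection on a manifold endows its space of vector fields with a left pre-Lie structure via $X \cdot Y := \nabla_X Y$. Throughout I will use that for left-invariant vector fields on $G_R$ the Jacobi--Lie bracket is $[X_x,X_y] = X_{[x,y]_R}$, where $[x,y]_R = \tfrac{1}{2}([Rx,y]+[x,Ry])$ is the bracket of $\mathfrak g_R$.

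\emph{Torsion.} I would expand
\[
\mathrm T(X_x,X_y) \;=\; \nabla_{X_x} X_y - \nabla_{X_y} X_x - [X_x, X_y]
\;=\; \tfrac{1}{2} X_{[Rx,y]} - \tfrac{1}{2} X_{[Ry,x]} - X_{[x,y]_R}.
\]
Using skew-symmetry of $[\cdot,\cdot]$, the first two terms combine to $\tfrac{1}{2} X_{[Rx,y]+[x,Ry]} = X_{[x,y]_R}$, so $\mathrm T \equiv 0$.

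\emph{Curvature.} The crucial identity, obtained by applying the CYBE \eqref{eq:impc} with $\theta = 0$, is
\[
R[x,y]_R \;=\; \tfrac{1}{2}\, R\bigl([Rx,y]+[x,Ry]\bigr) \;=\; \tfrac{1}{2}\,[Rx,Ry].
\]
With this in hand, computing each of the three terms of $\mathrm R_\nabla(X_x,X_y)X_z$ by left-invariance gives
\begin{align*}
\nabla_{X_x}\nabla_{X_y} X_z &= \tfrac{1}{4}\, X_{[Rx,[Ry,z]]}, \\
\nabla_{X_y}\nabla_{X_x} X_z &= \tfrac{1}{4}\, X_{[Ry,[Rx,z]]}, \\
\nabla_{[X_x,X_y]} X_z &= \nabla_{X_{[x,y]_R}} X_z = \tfrac{1}{2}\, X_{[R[x,y]_R,\, z]} = \tfrac{1}{4}\, X_{[[Rx,Ry],z]}.
\end{align*}
Assembling, $\mathrm R_\nabla(X_x,X_y)X_z = \tfrac{1}{4}\, X_{[Rx,[Ry,z]] - [Ry,[Rx,z]] - [[Rx,Ry],z]}$, and the Jacobi identity of $\mathfrak g$ applied to $Rx, Ry, z$ makes the inner bracketed expression vanish. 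Hence $\nabla$ is flat.

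\emph{Pre-Lie algebra.} Since $\nabla$ is a linear connection with $\mathrm T = 0$ and $\mathrm R_\nabla = 0$, the discussion in Section~\ref{sec:intro} (specifically the reduction of equation \eqref{eq:curvature} when $\mathrm R = 0 = \mathrm T$ to the left pre-Lie identity $a_\curvearrowright(X,Y,Z) = a_\curvearrowright(Y,X,Z)$) shows that $X \cdot Y := \nabla_X Y$ defines a left pre-Lie product on $\mathfrak X_{G_R}$. The only real obstacle is bookkeeping the $\tfrac{1}{2}$ factors together with the fact that the Jacobi--Lie bracket on $G_R$ is $[\cdot,\cdot]_R$, not the original $[\cdot,\cdot]$; the CYBE identity $R[x,y]_R = \tfrac{1}{2}[Rx,Ry]$ is precisely what reconciles the two and forces the vanishing of the curvature.
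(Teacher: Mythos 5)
Your computation of the torsion and curvature is essentially identical to the paper's: both work on left-invariant fields and use the CYBE (in your equivalent form $R[x,y]_R=\tfrac{1}{2}[Rx,Ry]$) together with the Jacobi identity applied to $Rx,Ry,z$ to kill the curvature. The only divergence is at the end, where the paper verifies the left pre-Lie identity by a second direct computation of the associators $a_\cdot(X_x,X_y,X_z)$, whereas you invoke the general fact that a flat, torsion-free connection makes $X\cdot Y:=\nabla_XY$ a left pre-Lie product (the reduction of \eqref{eq:curvature} when $\mathrm{R}=\mathrm{T}=0$); this is equally valid and slightly more economical.
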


\begin{proof}
First let us compute the torsion of the connection defined in \eqref{eq:lipr1}.
\allowdisplaybreaks
\begin{eqnarray*}
	\mathrm{T}(X_x,X_y)&=&\nabla_{X_x}X_y-\nabla_{X_y}X_x-[X_x,X_y]\\
			&=&\frac{1}{2}\big(X_{[Rx,y]}-X_{[Ry,x]}\big)-X_{[x,y]_R}
			=0,
\end{eqnarray*}
for all $x,y\in\mathfrak g$. On the other hand, computing the curvature of $\nabla$ one gets:
\allowdisplaybreaks
\begin{eqnarray*}
	\mathrm{R}(X_x,X_y)X_z
	&=&\nabla_{X_x}\nabla_{X_y}X_z-\nabla_{X_y}\nabla_{X_x}X_z-\nabla_{[X_x,X_y]}X_z\\
	&=&\frac{1}{4}\big(X_{[Rx,[Ry,z]]}-X_{[R_y,[Rx,z]]}\big)-\frac{1}{2}\nabla_{X_{([Rx,y]+[x,Ry])}}X_z\\
	&=&\frac{1}{4}\big(X_{[Rx,[Ry,z]]-[Ry,[Rx,z]]-[R([Rx,y]+[x,Ry]),z]}\big)\\
	&\stackrel{(a)}{=}&\frac{1}{4}\big(X_{[[Rx,Ry]-R([Rx,y]+[x,Ry]),z]}\big)
	=0,
\end{eqnarray*}
for all $x,y,z\in\mathfrak g$, proving the first claim. Note that in (a) we used the Jacobi identity, and the last equality follows from $R$ being a solution of CYBE. For the second one, note that it suffices to prove it for the left-invariant vector fields. Then, given $x,y,z\in\mathfrak g$, one has that
\allowdisplaybreaks
\begin{eqnarray*}
	a_\cdot(X_x,X_y,X_z)
	&=&\nabla_{\nabla_{X_x}X_y}X_z-\nabla_{X_x}\nabla_{X_y}X_z\\
	&=&\frac{1}{2}(\nabla_{X_{[Rx,y]}}X_z-\nabla_{X_x}{X_{[Ry,z]}})\\
	&=&\frac{1}{4}X_{([R[Rx,y],z]-[Rx[Ry,z]])}\\
	&=&\frac{1}{4}X_{([R[Rx,y],z]-[[Rx,Ry],z]-[Ry,[Rx,z]])}\\
	&=&\frac{1}{4}X_{[R[Ry,x],z]}-\frac{1}{4}X_{[Ry,[Rx,z]]}
	=a_\cdot(X_y,X_x,X_z),
\end{eqnarray*}
for all $x,y,z\in\mathfrak g$.
\end{proof}

We will now prove the following result, which completes Example \ref{ex:simplecticLieG}. Let $(\mathfrak g,B)$ be a quadratic Lie algebra\footnote{Recall that a quadratic Lie algebra $(\mathfrak g,B)$ is a Lie algebra endowed with a non-degenerate, $\mathfrak g$-invariant bilinear form $B: \mathfrak g\otimes\mathfrak g \rightarrow \mathfrak g$, i.e., $B$ is a bilinear form such that 1) if $x \in\mathfrak g$ is such that $B(x,y)=0$ for all $y \in \mathfrak g$, then $x=0$ and 2) $B([x,y],z)+B(y,[x,z])=0$ for all $x,y,z\in\mathfrak g$.}. Then we have the next result.

\begin{proposition}[Drinfeld \cite{Drinf}]
The set of \emph{invertible} and \emph{skew-symmetric} solutions of the CYBE on $(\mathfrak g,B)$ is in one-to-one correspondence with set of the invariant symplectic structures on the corresponding connected and simply-connected Lie group $G_R$. In particular, every invertible solution of the CYBE defines a left pre-Lie algebra structure on $\mathfrak X_{G_R}$.
\end{proposition}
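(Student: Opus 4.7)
The plan is to identify skew-symmetric invertible endomorphisms $R\in\operatorname{End}_{\mathbb F}(\mathfrak g)$ with non-degenerate skew-symmetric bilinear forms on $\mathfrak g$ via $R\mapsto\omega_e$, where $\omega_e(x,y):=B(Rx,y)$. The invariance and non-degeneracy of $B$ immediately translate skew-symmetry and invertibility of $R$ into antisymmetry and non-degeneracy of $\omega_e$. Since Example~\ref{ex:simplecticLieG} identifies left-invariant symplectic structures on $G_R$ with non-degenerate elements of $\mathscr Z^2(\mathfrak g_R,\mathbb F)$, the proposition reduces to the equivalence: $R$ satisfies the CYBE if and only if $\omega_e$ is a $2$-cocycle on $\mathfrak g_R$.

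The key observation is that both conditions collapse to a single pointwise identity on the cyclically invariant trilinear form $T(a,b,c):=B([a,b],c)=B(a,[b,c])$ (cyclic invariance following from $\operatorname{ad}$-invariance together with symmetry of $B$). First, substituting $x=Sa$, $y=Sb$ with $S:=R^{-1}$ into $[Rx,Ry]=R([Rx,y]+[x,Ry])$ rewrites the CYBE as $S[a,b]=[Sa,b]+[a,Sb]$, i.e.\ as the statement that $S$ is a derivation of $(\mathfrak g,[\cdot,\cdot])$; writing $B(a,S[b,c])=-B(Sa,[b,c])$ via skew-symmetry of $S$ and using non-degeneracy of $B$, this is equivalent to
\begin{equation*}
T(Sa,b,c)+T(a,Sb,c)+T(a,b,Sc)=0,\qquad\forall\,a,b,c\in\mathfrak g.
\end{equation*}
Second, from $B(Rx,z)=-B(x,Rz)$ and the definition of $[\cdot,\cdot]_R$ one computes $\omega_e([x,y]_R,z)=-\tfrac12\bigl(B([Rx,y],Rz)+B([x,Ry],Rz)\bigr)$, and after the change of variables $u=Rx$, $v=Ry$, $w=Rz$ (which is a bijection) and cyclic invariance of $T$ the cyclic sum becomes
\begin{equation*}
\sum_{\circlearrowleft}\omega_e([x,y]_R,z)=-\bigl[T(u,v,Sw)+T(Su,v,w)+T(u,Sv,w)\bigr].
\end{equation*}
Since $(u,v,w)$ ranges over all of $\mathfrak g^{3}$, vanishing of the cyclic sum is equivalent to the displayed pointwise identity, and hence to the CYBE. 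The ``in particular'' clause then follows from the final part of Example~\ref{ex:simplecticLieG}: every invariant symplectic form $\omega$ on $G_R$ determines a flat, torsion-free, left-invariant connection $\nabla$ via $-i_{\nabla_{X_x}X_y}\omega=\mathscr L_{X_x}i_{X_y}\omega$, and the product $X\cdot Y:=\nabla_X Y$ defines a left pre-Lie structure on $\mathfrak X_{G_R}$.

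The subtlest point, and the one requiring most care, is the double role of the quadratic hypothesis: $B$ is used once to identify $R$ with $\omega_e$ at the level of vector spaces, and again to equate the derivation condition for $S$ (intrinsic to $\mathfrak g$) with the $2$-cocycle condition on $\mathfrak g_R$ (whose bracket mixes $R$ with the original one). The cyclic invariance of $T$ is the algebraic bridge that makes this reduction symmetric, yielding both directions of the equivalence from the single computation above rather than requiring a separate converse argument.
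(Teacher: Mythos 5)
Your proposal is correct, and its computational core is organized differently from the paper's. The paper verifies the two directions separately: for the forward direction it expands the cyclic sum $\omega(x,[y,z]_R)+\omega(z,[x,y]_R)+\omega(y,[z,x]_R)$ term by term, repeatedly using $\operatorname{ad}$-invariance of $B$ to move everything onto the third slot, and only at the last step recognizes the CYBE; for the converse it defines $R=(B^v)^{-1}\circ\omega^v$, re-derives skew-symmetry and invertibility, and runs the cocycle identity back through $B$. You instead observe that invertibility lets one rewrite the CYBE as the statement that $S=R^{-1}$ is a derivation of $(\mathfrak g,[\cdot,\cdot])$, and that both the derivation condition and the cocycle condition for $\omega_e=B(R\cdot,\cdot)$ reduce, via the cyclically invariant form $T(a,b,c)=B([a,b],c)$ and the substitution $(u,v,w)=(Rx,Ry,Rz)$, to the single identity $T(Su,v,w)+T(u,Sv,w)+T(u,v,Sw)=0$. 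This buys you both directions of the bijection from one computation and makes transparent exactly where invertibility enters (the change of variables), at the cost of an extra layer of reformulation; the paper's direct expansion is longer but requires no inverse. Your reduction of the ``in particular'' clause to Example~\ref{ex:simplecticLieG} matches the paper. One point worth flagging, though it afflicts the paper's own formulation equally: in the converse direction the cocycle condition is stated on $\mathfrak g_R$, whose bracket already presupposes $R$, so strictly speaking the ``identity on the cyclic sum'' is being imposed on a skew bracket $[\cdot,\cdot]_R$ that is only known to satisfy Jacobi once the CYBE is established; since your equivalence is a pointwise identity of trilinear forms that does not use Jacobi for $[\cdot,\cdot]_R$, your argument survives this, but it deserves a sentence of acknowledgement.
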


\begin{proof}
Let $R$ be an invertible solution of the CYBE on $\mathfrak g$ and let $\omega(\cdot,\cdot)=B(R\cdot,\cdot):\mathfrak g\otimes\mathfrak g\rightarrow\mathfrak g$. Then $\omega$ is non-degenerate and skew-symmetric. In fact, since $R$ is skew-symmetric one has that 
\[
	 \omega(y,x)=B(Ry,x)=-B(y,Rx)=-B(Rx,y)=-\omega(x,y),
\]
 for all $x,y\in\mathfrak g$, and if $x\in\mathfrak g$ is such that $\omega(x,y)=0$ for all $y\in\mathfrak g$, then $B(Rx,y)=-B(x,Ry)=0$ for all $y\in\mathfrak g$, which implies that $B(x,z)=0$ for all $z\in\mathfrak g$ since $R$ is invertible. Let us now prove that $\omega\in\mathscr Z^2(\mathfrak g_R,\mathbb F)$, i.e., that 
\[
	\omega(x,[y,z]_R)+\omega(z,[x,y]_R)+\omega(y,[z,x]_R)=0,\quad \forall x,y,z\in\mathfrak g.
\]
To this end, first compute
\begin{eqnarray*}
	\omega(x,[y,z]_R)
	&=&\frac{1}{2}B(Rx,[Ry,z])+\frac{1}{2}B(Rx,[y,Rz])\\
 	&=&\frac{1}{2}B([Rx,Ry],z)-\frac{1}{2}B(R[Rx,y],z),
\end{eqnarray*}
then compute

\begin{eqnarray*}
	\omega(y,[z,x]_R)&=&\frac{1}{2}B(Ry,[Rz,x])+\frac{1}{2}B(Ry,[z,Rx])\\
 	&=&\frac{1}{2}B([Rx,Ry],z)-\frac{1}{2}B(R[x,Ry],z).
\end{eqnarray*}
On the other hand,
\begin{eqnarray*}
	\omega(z,[x,y]_R)&=&B(Rz,[x,y]_R)\\
 	&=&-B(R[x,y]_R,z)
	=-\frac{1}{2}B\big(R([Rx,y]+[x,Ry]),z\big).
\end{eqnarray*}
Using the results of these partial computations one has:
\allowdisplaybreaks
\begin{eqnarray*}
	&& \omega(x,[y,z]_R)+\omega(z,[x,y]_R)+\omega(y,[z,x]_R)\\
 	&=&\frac{1}{2}B([Rx,Ry],z)-\frac{1}{2}B(R[Rx,y],z)\\
  	&=&\frac{1}{2}B([Rx,Ry],z)-\frac{1}{2}B(R[x,Ry],z)\\
  	&=&-\frac{1}{2}B\big(R([Rx,y]+[x,Ry]),z\big)
	=0,
\end{eqnarray*}
since $R$ is a solution of the CYBE. On the other hand, suppose that $\omega\in\mathscr Z^2(\mathfrak g_R,\mathbb F)$ is non-degenerate. Then, using $B$ and $\omega$ one can define $B^v:\mathfrak g\rightarrow\mathfrak g^\ast$ and, respectively, $\omega^v:\mathfrak g\rightarrow\mathfrak g^\ast$ to be the linear isomorphisms such that $\langle B^v(x),y\rangle=B(x,y)$ and $\langle\omega^v(x),y\rangle=\omega(x,y)$ for all $x,y\in\mathfrak g$. Then if $R:=(B^v)^{-1}\circ\omega^v:\mathfrak g\rightarrow\mathfrak g$, one has that
\begin{eqnarray*}
	B(Ry,x)&=&\langle\omega^v(y),x\rangle\\
  	&=&\omega(y,x)=-\omega(x,y)=-\langle\omega^v(x),y)=-B(Rx,y)=-B(y,Rx),
\end{eqnarray*}
showing that $R$ is skew-symmetric. On the other hand, if $x\in\mathfrak g$ is such that $Rx=0$, then 
\begin{equation*}
	0=(Rx,y)=\langle\omega^v(x),y\rangle=\omega(x,y),
\end{equation*}
for all $y\in\mathfrak g$, which implies that $x=0$, proving that $R$ is an isomorphism. Furthermore, since $\omega\in\mathscr Z^2(\mathfrak g_R,\mathbb F)$, 
\[
	\omega(z,[x,y]_R)+\omega(y,[z,x]_R)+\omega(x,[y,z]_R)=0\quad \forall x,y,z\in\mathfrak g,
\]
which implies that 
\[
	B(R([Rx,y]+[x,Ry],z)=B([Rx,Ry],z),\quad \forall x,y,z\in\mathfrak g,
\]
\end{proof}
proving that $R$ is a solution of the CYBE. Finally, if $\omega\in\mathscr Z^2(\mathfrak g_R,\mathbb F)$ is non-degenerate, its extension on $G_R$ by left-translations defines a left-invariant symplectic form on $G_R$. Then the last part of the statement of the proposition follows now from the discussion in Example \ref{ex:simplecticLieG}.

We will now see how given a solution of the mCYBE on $\mathfrak g$ one can define a structure of a post-Lie algebra on $\mathfrak X_{G_R}$. In spite of the fact that the relation between post-Lie algebra structures on $\mathfrak X_{G_R}$ and solutions of the mCYBE is completely analogous to the one just discussed between the solutions of the CYBE and pre-Lie algebra structures on the $\mathfrak X_{G_R}$, we will give full details also in this case. Before moving to this more geometrical topic, let us make a few observations of algebraic flavor. Let $R \in \operatorname{End}_{\mathbb{F}}(\mathfrak g)$ be a solution of the mCYBE, Equation \eqref{eq:impc}, and let $R_\pm$ be defined as in \eqref{eq:Rpm}. Then: 

\begin{theorem}[\cite{GuoBaiNi}]
\label{thm:post-Lie1}
The binary product 
\begin{equation}
\label{def:RBpost-Lie}
	x \triangleright_\pm y := [R_\pm(x),y].
\end{equation}
 defines a left (right) post-Lie algebra structure on $\mathfrak g$.
\end{theorem}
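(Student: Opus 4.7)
The plan is to verify directly the two defining identities of a left post-Lie algebra for $\triangleright_{+}$ (Definition \ref{def:post-Lie}), and then to indicate the analogous verification for $\triangleright_{-}$ as a right post-Lie algebra. The two key inputs are the ordinary Jacobi identity in $\mathfrak g$ (which will take care of the derivation axiom \eqref{post-Lie1}) and the pair of identities \eqref{eq:identity}, namely
\[
    [R_{\pm}x,R_{\pm}y] = R_{\pm}\bigl([R_{\pm}x,y]+[x,R_{\pm}y]\mp [x,y]\bigr),
\]
which will provide exactly the discrepancy needed for the associator identity \eqref{post-Lie2}.

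First I would check \eqref{post-Lie1} for $\triangleright_{+}$. Since $x\triangleright_{+}y=[R_{+}x,y]$, this amounts to the Jacobi identity applied to $R_{+}x,y,z$:
\[
    x\triangleright_{+}[y,z] = [R_{+}x,[y,z]] = [[R_{+}x,y],z]+[y,[R_{+}x,z]] = [x\triangleright_{+}y,z]+[y,x\triangleright_{+}z].
\]
So the derivation property is automatic and requires no information about $R_{+}$ beyond $R_{+}x\in\mathfrak g$.

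Next I would verify \eqref{post-Lie2}. A direct expansion of the associator gives
\[
    a_{\triangleright_{+}}(x,y,z)-a_{\triangleright_{+}}(y,x,z) = \bigl[R_{+}[R_{+}x,y]-R_{+}[R_{+}y,x],z\bigr]-\bigl[R_{+}x,[R_{+}y,z]\bigr]+\bigl[R_{+}y,[R_{+}x,z]\bigr].
\]
Applying the Jacobi identity to the last two terms collapses them to $-[[R_{+}x,R_{+}y],z]$, so the right-hand side becomes
\[
    \Bigl[R_{+}[R_{+}x,y]+R_{+}[x,R_{+}y]-[R_{+}x,R_{+}y],z\Bigr].
\]
Substituting the ``$+$'' version of \eqref{eq:identity} shows that the bracketed expression equals $R_{+}[x,y]=[x,y]\triangleright_{+}z$, which is \eqref{post-Lie2}.

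For $\triangleright_{-}$, axiom \eqref{post-Lie1} again follows from Jacobi. However, when repeating the associator calculation using the ``$-$'' version of \eqref{eq:identity}, the sign flip $+[x,y]$ in place of $-[x,y]$ produces
\[
    a_{\triangleright_{-}}(x,y,z)-a_{\triangleright_{-}}(y,x,z) = -[R_{-}[x,y],z] = -[x,y]\triangleright_{-}z,
\]
which is precisely the right post-Lie relation (cf.\ the axiom recorded after Definition \ref{def:post-Lie}). Hence $\triangleright_{-}$ defines a right post-Lie algebra on $\mathfrak g$. No step presents a genuine obstacle; the only thing to be careful about is keeping track of the $\mp$ sign in \eqref{eq:identity}, which is exactly what distinguishes the left from the right case and, consistently with Proposition \ref{pro:postL1}, reflects the passage between $R_{+}$ and $R_{-}=R_{+}-\operatorname{id}$.
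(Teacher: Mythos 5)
Your proof is correct and follows exactly the route the paper takes (and merely sketches): axiom \eqref{post-Lie1} is the Jacobi identity applied to $R_\pm x,y,z$, and axiom \eqref{post-Lie2} reduces, after one application of Jacobi, to the identity \eqref{eq:identity}, with the $\mp$ sign accounting for the left/right distinction. Nothing is missing; you have simply supplied the details the paper omits.
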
 

\begin{proof}
The axiom \eqref{post-Lie1} holds true since $[R_\pm,\cdot ]$ is a derivation with respect to $[\cdot,\cdot]$. The axiom \eqref{post-Lie2} follows from \eqref{eq:identity} and the Jacobi identity.
\end{proof}

Note that 
\[
	x\triangleright_-y=[R_-x,y]=[(R_+-\operatorname{id}_{\mathfrak g})x,y]=x\triangleright_+y-[x,y]
\]
which is the content of Proposition \ref{pro:postL1}. 
In particular, a computation shows that:
\[
	x\triangleright_-y-y\triangleright_-x+[x,y]=[x,y]_R=x\triangleright_+y-y\triangleright_+x-[x,y],
\]
for all $x,y\in\mathfrak g$. See Proposition \ref{prop:post-lie}, i.e., 
\begin{equation}
	\llbracket\cdot,\cdot\rrbracket=[\cdot,\cdot]_R.
	\label{eq:brack}
\end{equation}

Moreover, one finds the Lie-admissible algebras $(\mathfrak g,\succ_\pm)$ with binary compositions 
\[
	x \succ_\pm y := x \triangleright_\pm y + \frac{1}{2}[x,y].
\]
 The Lie bracket \eqref{post-Lie3} is then given by $\llbracket x,y\rrbracket=[x,y]_R= x \succ y - y \succ x$, 
 for all $x,y\in\mathfrak g$. Writing $\tilde{R}:=\frac{1}{2}R$, one can deduce from $[\tilde{R}x,\tilde{R}y] - \tilde{R}\big([\tilde{R}x,y]+[x,\tilde{R}y]\big) = -\frac{1}{4}[x,y]$, that   
\[
	 {\rm{a}}_{\succ}(x,y,z) - {\rm{a}}_{\succ}(y,x,z) = -\frac{1}{4}[[x,y],z].
\]

Let us now move to the geometric side and discuss the post-Lie structure defined on $\mathfrak X_{G_R}$ by a any solution of the mCYBE. 

Let $R$ be a solution of the mCYBE and let $R_+$ be as defined in \eqref{eq:Rpm}. Then denoting by $X_x$ the left-invariant vector field on $G_R$ defined by $x\in\mathfrak g$ we have the result

\begin{theorem}
The formula:
\begin{equation}
	\nabla_{X_x}X_y=X_{[R_+x,y]},\quad \forall x,y\in\mathfrak g\label{eq:connpost-LieG}
\end{equation}
defines a flat left-invariant linear connection on $G_R$ with constant torsion. In particular, the product
\begin{equation}
	X\triangleright Y=\nabla_XY\label{eq:prodpost-LieG}
\end{equation}
defines a left post-Lie algebra on $(\mathfrak X_{G_R},[\cdot,\cdot])$, where $[\cdot,\cdot]:\mathfrak X_{G_R}\otimes\mathfrak X_{G_R}\rightarrow\mathfrak X_{G_R}$ is the usual Lie bracket on the set of vector field on the smooth manifold $G_R$.
\end{theorem}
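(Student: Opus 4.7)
The plan is to parallel the proof of Proposition~\ref{pro:ge}: once I show that \eqref{eq:connpost-LieG} defines a left-invariant connection $\nabla$ on $G_R$ which is flat and has constant torsion, the left post-Lie algebra structure on $\mathfrak X_{G_R}$ follows immediately. Since $\alpha(x,y):=[R_+x,y]$ belongs to $\operatorname{Hom}_{\mathbb F}(\mathfrak g\otimes\mathfrak g,\mathfrak g)$, the correspondence between such bilinear maps and left-invariant connections on $G_R$ recalled earlier provides the existence and uniqueness of $\nabla$. Because torsion, $\nabla\mathrm{T}$, and curvature are $C^\infty$-multilinear, it is enough to compute them on left-invariant vector fields, where the Jacobi--Lie bracket on $G_R$ satisfies $[X_x,X_y]=X_{[x,y]_R}$.

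For the torsion, a direct computation gives
\[
	\mathrm{T}(X_x,X_y)=X_{[R_+x,y]-[R_+y,x]-[x,y]_R}.
\]
Substituting $R_+=\tfrac12(R+\operatorname{id}_{\mathfrak g})$ and $[x,y]_R=\tfrac12([Rx,y]+[x,Ry])$, the argument of $X$ collapses to the original Lie bracket $[x,y]$ of $\mathfrak g$, so $\mathrm{T}(X_x,X_y)=X_{[x,y]}$. The constancy condition $(\nabla_{X_x}\mathrm{T})(X_y,X_z)=0$ then reduces, after expansion, to
\[
	[R_+x,[y,z]]-[[R_+x,y],z]-[y,[R_+x,z]]=0,
\]
which is the Jacobi identity in $(\mathfrak g,[\cdot,\cdot])$ applied to $R_+x,y,z$.

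The curvature calculation is where the mCYBE enters essentially. On left-invariant fields one obtains
\[
	\mathrm{R}(X_x,X_y)X_z
	=X_{[R_+x,[R_+y,z]]-[R_+y,[R_+x,z]]-[R_+[x,y]_R,z]}.
\]
By Proposition~\ref{prop:prel}, $R_+\colon\mathfrak g_R\to\mathfrak g$ is a Lie algebra homomorphism---a fact equivalent to the mCYBE via identity \eqref{eq:identity}---hence $R_+[x,y]_R=[R_+x,R_+y]$, and the resulting expression vanishes by the Jacobi identity of $(\mathfrak g,[\cdot,\cdot])$. Thus $\nabla$ is flat with constant torsion, and Proposition~\ref{pro:ge} applies to produce the left post-Lie algebra structure $(\mathfrak X_{G_R},\triangleright,[\cdot,\cdot]_\mathrm{T})$, with $X\triangleright Y=\nabla_X Y$. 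The main obstacle is the torsion simplification: tracking the cancellations between $R_+$, $R$ and $[\cdot,\cdot]_R$ demands care, but it is a mechanical verification, after which the curvature step is simply the mCYBE in its $R_+$-form combined with one application of the Jacobi identity.
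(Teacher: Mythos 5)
Your proof is correct and follows essentially the same route as the paper: reduce to left-invariant fields, show $\mathrm{T}(X_x,X_y)=X_{[x,y]}$ and $\nabla\mathrm{T}=0$ via the Jacobi identity, show $\mathrm{R}=0$ using $R_+[x,y]_R=[R_+x,R_+y]$ (the mCYBE in the form of identity \eqref{eq:identity}), and invoke Proposition \ref{pro:ge}. The only cosmetic difference is that you cite Proposition \ref{prop:prel} for the homomorphism property of $R_+$ where the paper re-expands $R=2R_+-\operatorname{id}_{\mathfrak g}$ directly.
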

\begin{proof}
The first statement follows from a direct computation. More precisely
\begin{eqnarray*}
	\nabla_{X_x}\nabla_{X_y}X_z-\nabla_{X_y}\nabla_{X_x}X_z
	&=&X_{[R_+x,[R_+y,z]]-[R_+y[R_+x,z]]}\\
	&=&X_{[[R_+x,R_+y],z]},\quad \forall x,y,z\in\mathfrak g.
\end{eqnarray*}
On the other hand,
\begin{eqnarray*}
	\nabla_{[X_x,X_y]}X_z
	=\nabla_{X_{[x,y]_R}}X_z
	=\frac{1}{2}\nabla_{X_{[Rx,y]+[x,Ry]}}X_z	
	&=&X_{[R_+[R_+x,y]-R_+[x,y]+R_+[x,R_+y],z]}\\
	&=&X_{[[R_+x,R_+y],z]}
\end{eqnarray*}
where we used that $R=2R_+-\operatorname{id}_\mathfrak g$ and \eqref{eq:identity} which, together, prove that $\mathrm{R}(X_x,X_y)X_z=0$ for all $x,y,z\in\mathfrak g.$
Let us now compute
\begin{eqnarray*}
	\mathrm{T}(X_x,X_y)
	&=&\nabla_{X_x}X_y-\nabla_{X_y}X_x-[X_x,X_y]\\
	&=&X_{[R_+x,y]+[x,R_+y]-[x,y]_R}\\
	&=&X_{[x,y]}
\end{eqnarray*}
for all $x,y,z\in\mathfrak g$. Then
\begin{eqnarray*}
	(\nabla_{X_z}\mathrm{T})(X_x,X_y)
	&=&\nabla_{X_z}\mathrm{T}(X_x,X_y)-\mathrm{T}(\nabla_{X_z}X_x,X_y)-\mathrm{T}(X_x,\nabla_{X_z}X_y)\\
	&=&\nabla_{X_z}X_{[x,y]}-X_{[[R_+z,x],y]}-X_{[x,[R_+z,y]]}=0
\end{eqnarray*}
Because of its definition, $\nabla$ is left-invariant and since every $X\in\mathfrak X_{G_R}$ can be written as $X=\sum_{i=1}^{\text{dim}\,\mathfrak g}f_iX_{x_i}$ where $f_i\in C^\infty(G_R)$ for all $i=1,\dots,\text{dim}\,\mathfrak g$ and $x_1,\dots,x_{\text{dim}\,\mathfrak g}$ is a basis of $\mathfrak g$, it follows that $\nabla$ has the properties stated in the theorem. The last part of the statement follows now from Proposition \ref{pro:ge} in Section \ref{sec:lieapospre}.
\end{proof}

\section{Post-Lie algebras, factorization theorems and isospectral flows}

In this section we will study the properties of the universal enveloping algebra of a post-Lie algebra. For post-Lie algebras coming from classical $r$-matrices, we will discuss in details the factorization of group-like elements of the relevant $I$-adic completion. In the last part,  we will discuss how this factorization can be applied to find solutions of particular Lax-type equations. 

\subsection{The universal enveloping algebra of a post-Lie algebra}

Proposition \ref{prop:post-lie} above shows that any post-Lie algebra comes with two Lie brackets, $[\cdot,\cdot]$ and $\llbracket \cdot,\cdot\rrbracket$, which are related in terms of the post-Lie product by identity \eqref{post-Lie3}. The relation between the corresponding universal enveloping algebras was explored in \cite{EFLMK}. In \cite{OudomGuin} similar results in the context of pre-Lie algebras and the symmetric algebra $S_\mathfrak g$ appeared.

The next proposition summarizes the results relevant for the present discussion of lifting the post-Lie algebra structure to $\mathcal U(\mathfrak g)$. Denoting the product induced on $\mathcal U(\mathfrak g)$ by the post-Lie product defined on $(\mathfrak g,\triangleright, [\cdot,\cdot])$ with the same symbol $\triangleright$, one can show the next proposition.

\begin{proposition}\cite{EFLMK}\label{prop:post1}
Let $A,B,C\in\mathcal U(\mathfrak g)$ and $x,y\in\mathfrak g \hookrightarrow \mathcal U(\mathfrak g),$ then there exists a unique extension of the post-Lie product from $\mathfrak g$ to $\mathcal U(\mathfrak g)$, given by:
\allowdisplaybreaks{
\begin{align}
	1\triangleright A &= A \label{eq:pha1}\\
	xA\triangleright y &= x\triangleright (A\triangleright y)-(x\triangleright A)\triangleright y\nonumber\\
	A\triangleright BC &=(A_{(1)}\triangleright B)(A_{(2)}\triangleright C).	\label{eq:pha2}
\end{align}}
\end{proposition}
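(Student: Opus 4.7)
My plan is to mimic the Oudom--Guin construction for pre-Lie algebras adapted to the post-Lie setting. The idea is to build the extended product in three nested stages: first for $x \in \mathfrak{g}$ acting on $\mathcal{U}(\mathfrak{g})$ by derivations, then for $A \in \mathcal{U}(\mathfrak{g})$ acting on $y \in \mathfrak{g}$ by recursion on the length of $A$, and finally for $A \triangleright B$ with arbitrary $B \in \mathcal{U}(\mathfrak{g})$ by imposing \eqref{eq:pha2} as a definition. Uniqueness is automatic at each stage: given \eqref{eq:pha1} and the second identity, the value of $xA \triangleright y$ is determined, and then \eqref{eq:pha2} fixes $A \triangleright B$ in general once $A \triangleright y$ is known for $y \in \mathfrak{g}$.

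\textbf{Stage 1.} For each fixed $x \in \mathfrak{g}$, I would extend the linear map $x \triangleright -\colon \mathfrak{g} \to \mathfrak{g}$ to a derivation $D_x$ of the associative algebra $\mathcal{U}(\mathfrak{g})$. To do this I first extend it as a derivation of the tensor algebra $T(\mathfrak{g})$ by the Leibniz rule, and then check that it descends to the quotient $\mathcal{U}(\mathfrak{g}) = T(\mathfrak{g})/J$. Compute
\[
  D_x\bigl(y \otimes z - z \otimes y - [y,z]\bigr)
  = [x \triangleright y, z] + [y, x \triangleright z] - x \triangleright [y,z],
\]
where the Lie brackets on the right are associative commutators in $\mathcal{U}(\mathfrak{g})$ that equal the Lie brackets of $\mathfrak{g}$ modulo $J$. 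Axiom \eqref{post-Lie1} forces this to vanish, so $D_x(J) \subset J$ and $D_x$ descends. This simultaneously produces \eqref{eq:pha2} in the special case $A = x \in \mathfrak{g}$, since for primitive $A$ the Sweedler sum reduces to the Leibniz rule.

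\textbf{Stage 2.} Next I define $A \triangleright y$ for $A \in \mathcal{U}(\mathfrak{g})$ and $y \in \mathfrak{g}$ by induction on the filtration degree: set $1 \triangleright y := y$, extend $x \triangleright y$ as given, and for a word $xA$ define
\[
  xA \triangleright y := D_x(A \triangleright y) - (D_x A) \triangleright y,
\]
which is exactly the second displayed identity. I would first define this on $T(\mathfrak{g}) \otimes \mathfrak{g}$ and then verify that the difference $(x \otimes w - w \otimes x - [x,w]) \cdot A \triangleright y$ vanishes for all $A$ and $w \in \mathfrak{g}$. A short computation using only Stage 1 and the definition reduces this to the identity
\[
  x \triangleright (w \triangleright y) - w \triangleright (x \triangleright y)
  - (x \triangleright w - w \triangleright x) \triangleright y
  = [x,w] \triangleright y,
\]
which is precisely \eqref{post-Lie2}. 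Hence $A \triangleright y$ is well defined on $\mathcal{U}(\mathfrak{g}) \otimes \mathfrak{g}$.

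\textbf{Stage 3.} Finally I extend to $A \triangleright B$ by declaring \eqref{eq:pha2} to be the definition. Because the coproduct of $\mathcal{U}(\mathfrak{g})$ is cocommutative and coassociative, the right-hand side is unambiguous and linear in both $A$ and $B$, and one verifies inductively on the length of $B$ that it is consistent with $B \in \mathfrak{g}$ as already defined in Stage~2 (the base case $B \in \mathfrak{g}$ is tautological, and the inductive step follows from the explicit shuffle formula \eqref{eq:coprodU} for $\Delta$). The main obstacle I expect is the bookkeeping in Stage~2: one has to verify descent through the ideal $J$ appearing \emph{inside} $A$, which requires a careful induction on filtration degree where at each step the inductive hypothesis is used to rewrite $D_x A - D_x A' \in \mathcal{U}(\mathfrak{g})$ unambiguously. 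Once that is in place, checking \eqref{eq:pha1}--\eqref{eq:pha2} amounts to tracing through the definitions.
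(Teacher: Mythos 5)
Your three-stage construction is the right architecture and is essentially what the paper's one-line proof (``induction on the length of monomials'', with details in \cite{EFLMK}) intends; Stage 1 is correct as written. The problem sits in Stage 2, at the only point where the second post-Lie axiom is actually used. Unwinding the recursion for $A=w\in\mathfrak g$ gives
\begin{equation*}
(xw)\triangleright y-(wx)\triangleright y
= x\triangleright (w\triangleright y)-(x\triangleright w)\triangleright y
 - w\triangleright (x\triangleright y)+(w\triangleright x)\triangleright y ,
\end{equation*}
and with this paper's associator convention \eqref{eq:associator}, $a_{\triangleright}(u,v,z)=(u\triangleright v)\triangleright z-u\triangleright(v\triangleright z)$, the right-hand side equals $a_{\triangleright}(w,x,y)-a_{\triangleright}(x,w,y)=-[x,w]\triangleright y$ by \eqref{post-Lie2}, whereas descent through $J=\langle xw-wx-[x,w]\rangle$ requires it to equal $+[x,w]\triangleright y$. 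So the identity you need is \emph{not} ``precisely \eqref{post-Lie2}''; it is \eqref{post-Lie2} with the opposite sign, and taken at face value your argument would force $[x,w]\triangleright y=0$. The resolution is that Proposition \ref{prop:post1} is imported from \cite{EFLMK}, where the associator is defined as $x\triangleright(y\triangleright z)-(x\triangleright y)\triangleright z$, the negative of \eqref{eq:associator}; under that convention (which is also the one actually satisfied by the $r$-matrix products $\triangleright_{\pm}$ used later in the paper) your Stage 2 check goes through verbatim. You must make this convention explicit and verify the sign rather than assert the match: the entire content of a well-definedness argument is exactly this computation, so skipping it is a genuine gap.

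A secondary point: in Stage 3 you address the ambiguity coming from coassociativity/cocommutativity and the consistency with Stage 2 when $B\in\mathfrak g$, but not the descent through $J$ in the \emph{second} argument. Writing $B=B'(yz-zy-[y,z])B''$ and applying \eqref{eq:pha2} reduces this to the identity $A\triangleright[y,z]=(A_{(1)}\triangleright y)(A_{(2)}\triangleright z)-(A_{(1)}\triangleright z)(A_{(2)}\triangleright y)$ for arbitrary $A\in\mathcal U(\mathfrak g)$; this is \eqref{post-Lie1} only when $A$ is primitive, and for general $A$ it requires its own induction on the length of $A$ (using Stage 2, cocommutativity, and the fact that $A\triangleright y\in\mathfrak g$ for $y\in\mathfrak g$). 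That step is routine but it is a step, not a tautology, and it should appear in the proof.
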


\begin{proof}
The proof of Proposition \ref{prop:post1} goes by induction on the length of monomials in $\mathcal U(\mathfrak g)$. 
\end{proof}

Note that \eqref{eq:pha1} together with \eqref{eq:pha2} imply that the extension of the post-Lie product from $\mathfrak g$ to $\mathcal U(\mathfrak g)$  yields a linear map $d:\mathfrak g \rightarrow \operatorname{Der}\big(\mathcal U(\mathfrak g)\big),$ defined via $d(x)(x_1\cdots x_n):=\sum_{i=1}^nx_1\cdots (x\triangleright x_i)\cdots x_n$, for any word $x_1\cdots x_n \in \mathcal U(\mathfrak g)$. A simple computation shows that, in general, this map is not a morphism of Lie algebras. Together with Proposition \ref{prop:post1} one can prove 

\begin{proposition}\label{prop:post2}
\allowdisplaybreaks{
 \begin{align}
	A \triangleright 1 &= \epsilon (A),\\
	\epsilon(A\triangleright B) &=\epsilon(A)\epsilon (B),\\
	\Delta (A\triangleright B)&=(A_{(1)}\triangleright B_{(1)}) \otimes (A_{(2)}\triangleright B_{(2)}),\\
	xA\triangleright B&=x\triangleright (A\triangleright B)-(x\triangleright A)\triangleright B,\\
	A\triangleright (B\triangleright C)&=(A_{(1)}(A_{(2)}\triangleright B))\triangleright C. \label{last}
\end{align}}
 \end{proposition}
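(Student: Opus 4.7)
My plan is to establish (i)--(v) by induction on the lengths of the relevant monomials in $\mathcal U(\mathfrak g)$, using the three axioms of Proposition \ref{prop:post1} as the basic tools. A useful preliminary observation is that $A \triangleright y \in \mathfrak g$ whenever $y \in \mathfrak g$: this follows by induction on $|A|$ from the second axiom together with the fact that $\mathfrak g$ is closed under the post-Lie product. By the same token, an induction on $|B|$ using the third axiom shows that $A \triangleright B \in \mathcal U_{|B|}(\mathfrak g)$, so the extended action preserves the length filtration in the right slot.

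For (i), I would apply the third axiom to the trivial decomposition $1 = 1 \cdot 1$, obtaining $A \triangleright 1 = \sum (A_{(1)} \triangleright 1)(A_{(2)} \triangleright 1)$. Expanding $\Delta A$ via \eqref{eq:coprodU}, the two extremal terms $A \otimes 1$ and $1 \otimes A$ contribute $2(A \triangleright 1)$, while the remaining (shuffle) terms have both tensor factors of positive length, hence contribute $\epsilon(\cdot)\epsilon(\cdot) = 0$ by inductive hypothesis. This forces $A \triangleright 1 = 0 = \epsilon(A)$ when $|A| \geq 1$, the case $|A| = 0$ being immediate. For (ii), for $B = yB'$ of positive length, the third axiom and multiplicativity of $\epsilon$ give $\epsilon(A \triangleright yB') = \sum \epsilon(A_{(1)} \triangleright y)\,\epsilon(A_{(2)} \triangleright B') = 0$, since $A_{(1)} \triangleright y \in \mathfrak g$ has vanishing counit; the case $|B| = 0$ follows from (i). For (iii), one verifies the identity on $\mathfrak g \otimes \mathfrak g$ directly (using primitivity of $x,y$ and of $x \triangleright y$) and extends by induction on $|A| + |B|$: the third axiom combined with the fact that $\Delta$ is an algebra morphism makes the identity compatible with products in both slots, and the cocommutative coassociative sums reindex exactly. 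Then (iv) follows by induction on $|B|$: the base $B \in \mathfrak g$ is the second axiom verbatim, $B = 1$ is handled by (i) and (ii), and $B = B_1 B_2$ is obtained by applying the third axiom to both sides and matching summands via (iii).

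The main obstacle is (v), which I would prove by induction on $|A|$. For $A = x \in \mathfrak g$, primitivity $\Delta x = x \otimes 1 + 1 \otimes x$ reduces the right-hand side of (v) to $(xB) \triangleright C + (x \triangleright B) \triangleright C$, and (iv) identifies this sum with $x \triangleright (B \triangleright C)$. For $A = xA'$ of higher length, I would apply (iv) to the left-hand side, obtaining $(xA') \triangleright (B \triangleright C) = x \triangleright (A' \triangleright (B \triangleright C)) - (x \triangleright A') \triangleright (B \triangleright C)$; since $|x \triangleright A'| \leq |A'| < |xA'|$ by the preliminary filtration observation, the inductive hypothesis (v) applies to both $A'$ and $x \triangleright A'$, rewriting each summand in terms of $\Delta A'$ and $\Delta(x \triangleright A')$. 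Using (iii) to compute $\Delta(x \triangleright A') = \sum (x \triangleright A'_{(1)}) \otimes A'_{(2)} + A'_{(1)} \otimes (x \triangleright A'_{(2)})$ alongside $\Delta(xA') = \sum xA'_{(1)} \otimes A'_{(2)} + A'_{(1)} \otimes xA'_{(2)}$, the mixed terms cancel, leaving exactly the right-hand side of (v) for $xA'$. The principal difficulty is precisely the bookkeeping of this final cancellation: one must align the Sweedler expansion of $\Delta(xA')$ with the sums produced by two nested applications of (iv), invoking (iii) to commute the action of $x \triangleright$ past the coproduct.
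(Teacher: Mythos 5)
Your proposal is correct and takes essentially the same route as the paper, whose entire proof is the one-line remark that the identities follow by induction on the length of monomials in $\mathcal U(\mathfrak g)$; your detailed computations, in particular the cancellation in the inductive step for \eqref{last}, check out. One small imprecision: for the coproduct identity there is no product rule available in the left slot, so instead of inducting on $|A|+|B|$ with ``compatibility in both slots'' you should observe that for $B=y\in\mathfrak g$ the element $A\triangleright y$ lies in $\mathfrak g$ and is therefore primitive while the right-hand side collapses by the counit property together with $A\triangleright 1=\epsilon(A)$, so the identity holds for all $A$ when $|B|\le 1$, and induction on $|B|$ alone via the third axiom of Proposition \ref{prop:post1} then finishes the argument.
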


\begin{proof}
These identities follow by induction on the length of monomials in $\mathcal U(\mathfrak g)$.
\end{proof}
 
It turns out that identity \eqref{last} in Proposition \ref{prop:post2} can be written $A\triangleright (B\triangleright C)=(A*B)\triangleright C$, where the product $m_\ast:\mathcal U(\mathfrak g)\otimes\mathcal U(\mathfrak g)\rightarrow\mathcal U(\mathfrak g)$ is defined by
\begin{equation}
\label{eq:post-LieU}
	m_\ast(A\otimes B)= A\ast B:=A_{(1)}(A_{(2)}\triangleright B).
\end{equation}

\begin{theorem}\cite{EFLMK}\label{thm:KLM0}
The product defined in \eqref{eq:post-LieU} is non-commutative, associative and unital. Moreover, $\mathcal U_*(\mathfrak g):=(\mathcal U(\mathfrak g),m_\ast,1,\Delta,\epsilon,S_\ast)$ is a co-commutative Hopf algebra, whose unit, co-unit and co-product coincide with those defining the usual Hopf algebra structure on $\mathcal U(\mathfrak g)$. The antipode $S_\ast$ is given uniquely by the defining equations:
\[
	m_\ast\circ(\operatorname{id}\otimes S_\ast)\circ\Delta
	=1\circ\epsilon
	=m_\ast\circ(S_\ast\otimes\operatorname{id})\circ\Delta.
\]
More precisely
\begin{equation}
	S_\ast (x_1\cdots x_n)=-x_1\cdots x_n-\sum_{k=1}^{n-1}
	\sum_{\sigma\in\Sigma_{k,n-k}}x_{\sigma(1)}\cdots x_{\sigma(k)}\ast 
	S(x_{\sigma(k+1)}\cdots x_{\sigma(n)}),\label{eq:antipodests}
\end{equation}
for every $x_1\cdots x_n\in\mathcal U_n(\mathfrak g)$ and for all $n\geq 1$. 
\end{theorem}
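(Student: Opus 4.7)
The plan is to verify the Hopf algebra axioms for $\mathcal U_*(\mathfrak g)$ one at a time, leaning systematically on the identities collected in Propositions \ref{prop:post1} and \ref{prop:post2}. Unitality of $m_*$ is immediate: from $\Delta(1)=1\otimes 1$ and \eqref{eq:pha1} we get $1*A=A$, while $A*1=A_{(1)}(A_{(2)}\triangleright 1)=A_{(1)}\epsilon(A_{(2)})=A$ using $A\triangleright 1=\epsilon(A)$ together with the counit property of $\Delta$. The counit identity $\epsilon(A*B)=\epsilon(A)\epsilon(B)$ follows at once from the same facts. Non-commutativity is automatic: for $x,y\in\mathfrak g$ one finds $x*y-y*x=[x,y]+x\triangleright y - y\triangleright x=\llbracket x,y\rrbracket$, which is generically non-zero.

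The core step is associativity of $m_*$. My plan is to expand both sides in Sweedler notation. For $A*(B*C)$, apply the definition twice, then push $A_{(2)}\triangleright(\,\cdot\,)$ across the product $B_{(1)}(B_{(2)}\triangleright C)$ using the coproduct rule \eqref{eq:pha2}, and finally collapse the nested double action $A_{(3)}\triangleright(B_{(2)}\triangleright C)$ by means of the identity \eqref{last}. For $(A*B)*C$, first compute $\Delta(A*B)$ from multiplicativity of $\Delta$ together with the coproduct formula for $\triangleright$ in Proposition \ref{prop:post2}, then apply the definition of $*$ once more. The two resulting expressions agree after a single rearrangement of Sweedler indices permitted by co-commutativity of $\Delta$. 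I expect this Sweedler bookkeeping to be the main technical obstacle: every step is elementary, but organizing four simultaneous Sweedler indices and spotting the required cocommutativity swap demands care.

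For the bialgebra structure, I would verify that $\Delta$ is a morphism of algebras $\mathcal U_*(\mathfrak g)\to \mathcal U_*(\mathfrak g)\otimes \mathcal U_*(\mathfrak g)$ by expanding $\Delta(A*B)=\Delta(A_{(1)})\,\Delta(A_{(2)}\triangleright B)$ and applying the coproduct rule for $\triangleright$; after another cocommutativity rearrangement this matches the component-wise $m_*$-product of $\Delta(A)$ with $\Delta(B)$. Co-commutativity of $\mathcal U_*(\mathfrak g)$ is automatic, since $\Delta$ is inherited unchanged from the original Hopf structure on $\mathcal U(\mathfrak g)$.

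Finally, for the antipode: a straightforward induction using the second identity of Proposition \ref{prop:post1} shows that $A\triangleright B$ never raises the filtration degree of $B$, hence $\mathcal U_i * \mathcal U_j\subset\mathcal U_{i+j}$, so $\mathcal U_*(\mathfrak g)$ is a connected filtered bialgebra with $\mathcal U_0=\mathbb F$. By the standard convolution-inverse recursion for such bialgebras, a unique antipode $S_*$ exists. To obtain the explicit formula \eqref{eq:antipodests}, apply the defining equation $m_*\circ(\operatorname{id}\otimes S_*)\circ\Delta = 1\circ\epsilon$ to a word $x_1\cdots x_n\in\mathcal U_n(\mathfrak g)$, insert the shuffle coproduct \eqref{eq:coprodU}, and isolate $S_*(x_1\cdots x_n)$: the first term $x_1\cdots x_n\otimes 1$ contributes $x_1\cdots x_n$, the term $1\otimes x_1\cdots x_n$ contributes $S_*(x_1\cdots x_n)$, and the $(k,n-k)$-shuffle terms contribute the sum on the right-hand side of \eqref{eq:antipodests}; solving yields the claimed recursive expression.
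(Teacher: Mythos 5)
Your proof is correct and follows the standard route --- essentially the one in the cited source \cite{EFLMK}, since the present notes state Theorem \ref{thm:KLM0} without proof: unitality and the counit identity from Propositions \ref{prop:post1}--\ref{prop:post2}, associativity by the Sweedler computation reducing both $A*(B*C)$ and $(A*B)*C$ to $A_{(1)}(A_{(2)}\triangleright B_{(1)})\big((A_{(3)}*B_{(2)})\triangleright C\big)$ via \eqref{eq:pha2}, \eqref{last} and cocommutativity, and the antipode from connected filtered bialgebra generalities. Your derivation of the recursion also correctly indicates that the $S$ appearing inside the sum in \eqref{eq:antipodests} is to be read as $S_\ast$.
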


\noindent Here $\Sigma_{k,n-k} \subset \Sigma_n$ denotes the set of $(k,n-k)$-\emph{shuffles}, i.e. the elements $\sigma\in\Sigma_n$  
such that ${\sigma(1)}< \cdots <\sigma(k)$ and ${\sigma(k+1)}< \cdots <{\sigma(n)}$. Note that since elements $x \in \mathfrak g$ are primitive and $\Delta$ is a $\ast$-algebra morphism, one deduces 

\begin{lemma}\label{lem:coprodast}
\allowdisplaybreaks{
\begin{eqnarray*}
	\Delta(x_1\ast\cdots\ast x_n)
	&=&x_1\ast\cdots\ast x_n\otimes 1 \nonumber +1\otimes x_1\ast \cdots \ast x_n\\
	&+&\sum_{k=1}^{n-1}\sum_{\sigma\in\Sigma_{k,n-k}}
			x_{\sigma(1)} \ast \cdots \ast x_{\sigma(k)}\otimes x_{\sigma(k+1)}\ast \cdots \ast x_{\sigma(n)}.
\end{eqnarray*}}
\end{lemma}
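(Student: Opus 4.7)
\medskip

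\noindent\textbf{Proof proposal.} The plan is to argue by induction on $n$, using two ingredients that are already in place: first, every $x \in \mathfrak{g}$ is primitive in $\mathcal{U}_*(\mathfrak{g})$ (since its coproduct $\Delta$ coincides with the one on $\mathcal{U}(\mathfrak{g})$, by Theorem~\ref{thm:KLM0}); second, $\Delta : \mathcal{U}_*(\mathfrak{g}) \to \mathcal{U}_*(\mathfrak{g}) \otimes \mathcal{U}_*(\mathfrak{g})$ is a morphism of $*$-algebras, i.e.\ $\Delta(A \ast B) = \Delta(A) \ast \Delta(B)$, where the product on the right-hand side is the componentwise $*$-product $(a \otimes b) \ast (c \otimes d) = (a \ast c) \otimes (b \ast d)$.

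The base case $n=1$ is precisely the primitivity of $x_1$, i.e.\ $\Delta(x_1) = x_1 \otimes 1 + 1 \otimes x_1$. Assuming the claimed identity for $n-1$, the inductive step proceeds by writing
\[
\Delta(x_1 \ast \cdots \ast x_n)
= \Delta(x_1 \ast \cdots \ast x_{n-1}) \ast \Delta(x_n)
= \Delta(x_1 \ast \cdots \ast x_{n-1}) \ast \bigl(x_n \otimes 1 + 1 \otimes x_n\bigr),
\]
then expanding the first factor using the inductive hypothesis. Each summand in the hypothesis is of the form $(x_{\sigma(1)} \ast \cdots \ast x_{\sigma(k)}) \otimes (x_{\sigma(k+1)} \ast \cdots \ast x_{\sigma(n-1)})$ for some $(k,n-1-k)$-shuffle $\sigma$. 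Right-multiplying by $x_n \otimes 1$ appends $x_n$ as the last factor on the left tensorand, and right-multiplying by $1 \otimes x_n$ appends $x_n$ as the last factor on the right tensorand.

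The combinatorial reorganization is the only step needing care. Given any $(k',n-k')$-shuffle $\tau \in \Sigma_{k',n-k'}$ of $\{1,\dots,n\}$, the index $n$ is the maximum of $\{1,\dots,n\}$ and hence appears either as the last entry $\tau(k')$ of the first block or as the last entry $\tau(n)$ of the second block. In the first case, removing $n$ yields a unique $(k'-1, n-k')$-shuffle of $\{1,\dots,n-1\}$; in the second case, a unique $(k', n-1-k')$-shuffle. This bijection shows that the two families produced in the previous paragraph, after collecting powers of $x_n$, jointly reproduce the sum $\sum_{k=0}^{n} \sum_{\sigma \in \Sigma_{k,n-k}} (x_{\sigma(1)} \ast \cdots \ast x_{\sigma(k)}) \otimes (x_{\sigma(k+1)} \ast \cdots \ast x_{\sigma(n)})$, each term exactly once, with the boundary cases $k=0$ and $k=n$ yielding $1 \otimes x_1 \ast \cdots \ast x_n$ and $x_1 \ast \cdots \ast x_n \otimes 1$ respectively.

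The main (minor) obstacle is to check that $\Delta$ is indeed a morphism of $*$-algebras with respect to the componentwise $*$-product on $\mathcal{U}_*(\mathfrak{g}) \otimes \mathcal{U}_*(\mathfrak{g})$; this is part of the bialgebra structure asserted in Theorem~\ref{thm:KLM0} and therefore can be invoked directly. The remainder of the argument is a routine induction combined with the elementary shuffle bijection described above.
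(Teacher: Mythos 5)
Your proof is correct and takes essentially the same route as the paper: the lemma is deduced there in one line from the primitivity of elements of $\mathfrak g$ together with the fact that $\Delta$ is an algebra morphism for the $\ast$-product (part of Theorem~\ref{thm:KLM0}), which is exactly the content of your induction. The explicit shuffle bijection (splitting off the maximal index $n$ from either block) that you spell out is the standard argument the paper leaves implicit, and it is carried out correctly.
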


The relation between the Hopf algebra $\mathcal U_*(\mathfrak g)$ in Theorem \ref{thm:KLM0} and the universal enveloping algebra $\mathcal U(\overline{\mathfrak g})$ corresponding to the Lie algebra $\overline{\mathfrak g}$ is the content of the following theorem.

\begin{theorem}\cite{EFLMK}\label{thm:KLM}
$\mathcal U_*(\mathfrak g)$ is isomorphic, as a Hopf algebra, to $\mathcal U(\overline{\mathfrak g})$. More precisely, the identity map $\operatorname{id}:\overline{\mathfrak g}\rightarrow\mathfrak g$ admits a unique extension to an isomorphism of Hopf algebras $\phi:\mathcal U(\overline{\mathfrak g})\rightarrow \mathcal U_*(\mathfrak g)$.
\end{theorem}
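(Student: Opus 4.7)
The plan is to proceed in three stages. First, I would verify that the identity map on the underlying vector space $\mathfrak g = \overline{\mathfrak g}$, viewed with values in $\mathcal U_\ast(\mathfrak g)$, is a morphism of Lie algebras with respect to the commutator $[A,B]_\ast := A\ast B - B\ast A$. Given $x,y\in\mathfrak g$, every element of $\mathfrak g$ is primitive in $\mathcal U(\mathfrak g)$, so using the definition \eqref{eq:post-LieU} of the $\ast$-product together with $1\triangleright y = y$ yields $x\ast y = xy + x\triangleright y$. Antisymmetrizing gives
\begin{equation*}
x\ast y - y\ast x = [x,y] + x\triangleright y - y\triangleright x = \llbracket x,y\rrbracket,
\end{equation*}
which is exactly the bracket of $\overline{\mathfrak g}$. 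By the universal property of $\mathcal U(\overline{\mathfrak g})$, this Lie algebra morphism extends uniquely to a morphism $\phi:\mathcal U(\overline{\mathfrak g})\rightarrow\mathcal U_\ast(\mathfrak g)$ of unital associative algebras, determined by sending a monomial $x_1 \cdots x_n$ in $\mathcal U(\overline{\mathfrak g})$ to $x_1\ast\cdots\ast x_n$ in $\mathcal U_\ast(\mathfrak g)$.

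Next, I would check that $\phi$ is a morphism of Hopf algebras. By Theorem \ref{thm:KLM0} both bialgebras carry the \emph{same} counit $\epsilon$ and coproduct $\Delta$ as $\mathcal U(\mathfrak g)$, so $\epsilon\circ\phi = \epsilon$ holds at once on generators and extends by multiplicativity. For the coproduct, the two maps $(\phi\otimes\phi)\circ\Delta_{\mathcal U(\overline{\mathfrak g})}$ and $\Delta_\ast\circ\phi$ are both morphisms from $\mathcal U(\overline{\mathfrak g})$ to $\mathcal U_\ast(\mathfrak g)\otimes\mathcal U_\ast(\mathfrak g)$ of unital associative algebras (the second by the fact, established in Theorem \ref{thm:KLM0} and Lemma \ref{lem:coprodast}, that $\Delta$ is a $\ast$-algebra morphism); they agree on generators $x\in\mathfrak g$, where both give $x\otimes 1 + 1\otimes x$, hence they agree globally. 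Compatibility with the antipode is then automatic since the antipode in a Hopf algebra is uniquely determined by the bialgebra structure.

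Finally, to show that $\phi$ is an isomorphism I would invoke the Poincar\'e--Birkhoff--Witt theorem and a filtration argument. Both $\mathcal U(\overline{\mathfrak g})$ and $\mathcal U_\ast(\mathfrak g)$ carry a length filtration; by PBW, the associated graded algebras are both canonically isomorphic to $S(\mathfrak g)$ (the filtration on $\mathcal U_\ast(\mathfrak g)$ is the very filtration of $\mathcal U(\mathfrak g)$, since the underlying vector space and its length filtration are unchanged). A direct induction on $n$, using that $x\ast A = xA + x\triangleright A$ for $x\in\mathfrak g$ together with the derivation identity \eqref{eq:pha2}, shows
\begin{equation*}
x_1\ast\cdots\ast x_n = x_1\cdots x_n \;\;\mathrm{mod}\;\; \mathcal U_{n-1}(\mathfrak g),
\end{equation*}
so $\phi$ preserves the filtration and the induced map $\operatorname{gr}\phi$ is the identity on $S(\mathfrak g)$. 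A standard filtered-to-graded argument then promotes this to the statement that $\phi$ itself is a bijection. The one point requiring genuine care is this leading-term computation: the induction must carry along both factors of the Sweedler decomposition in \eqref{eq:pha2} and verify that every correction produced by the post-Lie action lands in strictly lower filtration degree, which is the essential obstacle and the place where Proposition \ref{prop:post1} does the real work.
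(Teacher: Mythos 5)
Your proposal is correct and follows essentially the same route as the paper's proof: establish $x\ast y - y\ast x = \llbracket x,y\rrbracket$, extend by the universal property (the paper phrases this via the tensor algebra $T\mathfrak g$ factoring through the ideal $J$, which is the same thing), and conclude bijectivity from the leading-term computation $x_1\ast\cdots\ast x_n \equiv x_1\cdots x_n \bmod \mathcal U_{n-1}(\mathfrak g)$ together with the filtered-to-graded argument. Your treatment of the coalgebra compatibility (two algebra morphisms into $\mathcal U_\ast(\mathfrak g)\otimes\mathcal U_\ast(\mathfrak g)$ agreeing on generators, antipode automatic) is in fact slightly more explicit than the paper's, which leaves that step to the reader.
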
 
\begin{proof}
First, let us verify the existence of an algebra morphism $\phi:\mathcal U(\overline{\mathfrak g})\rightarrow\mathcal U_\ast(\mathfrak g)$. To this end, note that the inclusion map $i:\mathfrak g \hookrightarrow \mathcal U_\ast(\mathfrak g)$, via the universal property of the tensor algebra $T\mathfrak g$, guarantees the existence of an algebra morphism $I:T\mathfrak g\rightarrow\mathcal U_\ast(\mathfrak g)$ making the following diagram commutative:
\begin{equation*}
\xy\xymatrix{
&&\mathfrak g\ar[dl]_{i_T}\ar[dr]^{i}&&\\
&{T\mathfrak g}\ar[rr]^{I} &&\mathcal U_\ast(\mathfrak g)&
}\endxy
\end{equation*}
where $i_T:\mathfrak g \hookrightarrow T\mathfrak g$ is an inclusion map. Note that, since $i(x)=x\in\mathcal U_\ast(\mathfrak g)$ and $i_T(x)=x\in T\mathfrak g$ for all $x\in\mathfrak g$, one has $I(x)=x$ for all $x\in\mathfrak g$, i.e., the map $I$ restricts to the identity on $\mathfrak g$. Then, for all monomials $x_1\otimes\cdots\otimes x_n\in T\mathfrak g$, one has
\[
	I(x_1\otimes\cdots\otimes x_n)=x_1\ast\cdots\ast x_n,
\]
and, since $x\ast y-y\ast x=\llbracket x,y\rrbracket$ for $x,y \in \mathfrak g$,
\[
	I(x\otimes y-y\otimes x-\llbracket x,y\rrbracket)=0.
\]
It follows then that the map $I:T\mathfrak g\rightarrow\mathcal U_\ast(\mathfrak g)$ factors through the (bilateral) ideal $J=\langle x\otimes y-y\otimes x-\llbracket x,y\rrbracket\rangle\subset T\mathfrak g$, defining a morphism of (filtered) algebras $\phi:\mathcal U(\overline{\mathfrak g})\rightarrow\mathcal U_\ast(\mathfrak g)$ which makes the following diagram commutative:
\begin{equation*}
	\xy\xymatrix{
	&&\mathfrak g\ar[dl]_{i_T}\ar[dr]^{i}&&\\
	&T{\mathfrak g}\ar[rr]^{I}\ar[dr]_{\pi} &&\mathcal U_\ast(\mathfrak g)&\\
	&&\mathcal U(\overline{\mathfrak g})\ar[ur]_{\phi}&&}\endxy
\end{equation*}
where $\pi:T\mathfrak g\rightarrow\mathcal U(\overline{\mathfrak g})$ is the canonical projection, i.e., $\pi (A)=A\ \text{mod}\ J$, for all $A\in T\mathfrak g$. Note that since $\pi (x)=x$ for all $x\in\mathfrak g$, the map $\phi$ restricts to the identity on $\mathfrak g$. Now, using a simple inductive argument on the length of monomials, one can show that for all $A\in\mathcal U_n(\mathfrak g)$ and $B\in\mathcal U_m(\mathfrak g)$
\[
	m_\ast(A\otimes B)=AB\  \text{mod}\ \mathcal U_{n+m-1}(\mathfrak g),
\]
which implies that the graded map $\text{gr}(\phi):\operatorname{gr}(\mathcal U(\overline{\mathfrak g}))\rightarrow\operatorname{gr}(\mathcal U_\ast(\mathfrak g))$, defined, at the level of the homogeneous components, by
\[
	\text{gr}_n(\phi)\big(x_1\cdots x_n\ \text{mod}\ \mathcal U_{n-1}(\overline{\mathfrak g})\big)
	=\phi(x_1 \cdots x_n)\text{mod}\ \mathcal U_{\ast,n-1}(\mathfrak g)
\]
is an isomorphism, proving that $\phi:\mathcal U(\overline{\mathfrak g})\rightarrow\mathcal U_\ast(\mathfrak g)$ is an isomorphism of filtered algebras. It is easy now to show that this morphism is compatible with the Hopf algebra structure maps, which implies the statement of the theorem.
\end{proof}

\begin{remark}
Note that to prove the theorem above one could argue as follows. First note that $\mathcal U_\ast(\mathfrak g)$ is a co-commutative and connected Hopf algebra, which implies, by the Cartier--Quillen--Milnor--Moore's theorem \cite{Cartier07}, that it is the enveloping algebra  of the Lie algebra of its primitive elements. Furthermore, since the co-product of $\mathcal U_\ast(\mathfrak g)$ is the same of the one of $\mathcal U(\mathfrak g)$, one can conclude that the Lie algebra of the primitive elements of $\mathcal U_\ast(\mathfrak g)$ is $\mathfrak g$. Finally, since $x\ast y-y\ast x=\llbracket x,y\rrbracket$, for all $x,y\in\mathfrak g$, $\mathcal U_\ast(\mathfrak g)$ is isomorphic to $\mathcal U(\overline{\mathfrak g})$ and $\phi$ is an isomorphism. 
\end{remark}

In the general case, on the other hand, it is difficult to say more about the isomorphism $\phi:\mathcal U(\bar{\mathfrak g})\rightarrow\mathcal U_*(\mathfrak g)$. One has the following nice combinatorial description.
If $m_\cdot: \mathcal U(\overline{\mathfrak g}) \otimes \mathcal U(\overline{\mathfrak g}) \to \mathcal U(\overline{\mathfrak g})$ denotes the product in $\mathcal U(\overline{\mathfrak g})$, i.e., $m_\cdot(A \otimes B)=A \cdot B$ for any $A,B \in \mathcal U(\overline{\mathfrak g})$, then the Hopf algebra isomorphism $\phi: \mathcal U(\overline{\mathfrak g}) \to \mathcal U_*(\mathfrak g)$ in Theorem \ref{thm:KLM} can be described as follows. From the proof of Theorem \ref{thm:KLM} it follows that $\phi$ restricts to the identity on $\mathfrak g \hookrightarrow \mathcal U(\mathfrak g)$. Moreover, for $x_1,x_2,x_3 \in \mathfrak g$ we find
$$
	\phi(x_1 \cdot x_2) = \phi(x_1) * \phi(x_2) = x_1 * x_2 =x_1x_2 + x_1 \triangleright  x_2,
$$
and 
\allowdisplaybreaks{
\begin{align}
	\phi(x_1 \cdot x_2 \cdot x_3) &= x_1 * x_2 * x_3 \nonumber \\	
	&= x_1(x_2 * x_3) + x_1 \triangleright (x_2 * x_3) \label{recursion}\\
	&=x_1x_2x_3 + x_1(x_2 \triangleright x_3) + x_2(x_1 \triangleright x_3) 
			+ (x_1 \triangleright  x_2)x_3 + x_1 \triangleright(x_2 \triangleright x_3). \nonumber 
\end{align}}
Equality \eqref{recursion} can be generalized to the following simple recursion for words in $\mathcal U(\overline{\mathfrak g})$ with $n>0$ letters
\begin{equation}
	\phi(x_1 \cdot \cdots \cdot x_n) = x_1\phi(x_2 \cdot \cdots \cdot x_n)  
								+ x_1 \triangleright \phi(x_2 \cdot \cdots \cdot x_n)  \label{eq:PHIrecursion1}.
\end{equation}
Recall that $x \triangleright {1}=0$ for $x \in \mathfrak g$, and $\phi({1})={1}$. From the fact that the post-Lie product on $\mathfrak g$ defines a linear map $d:\mathfrak g \rightarrow \operatorname{Der}\big(\mathcal U(\mathfrak g)\big),$ we deduce that the number of terms on the righthand side of the recursion \eqref{eq:PHIrecursion1} is given with respect to the length $n=1,2,3,4,5,6$ of the word $x_1 \cdot \cdots \cdot x_n \in \mathcal U_*(\mathfrak g)$ by 1, 2, 5, 15, 52, 203, respectively. These are the Bell numbers $B_i$, for $i=1,\ldots,6$, and for general $n$, these numbers satisfy the recursion $B_{n+1} = \sum_{i=0}^n {n \choose i} B_i$. Bell numbers count the different ways the set $[n]$ can be partition into disjoint subsets. 
From this we deduce the general formula for $x_1 \cdot \cdots \cdot x_n \in \mathcal U(\overline {\mathfrak g})$
\begin{equation}
	\phi(x_1 \cdot \cdots \cdot x_n) = x_1 * \cdots * x_n = \sum_{\pi \in P_n} X_\pi \in  \mathcal U( {\mathfrak g})    \label{eq:PHIrecursion2},
\end{equation}
where $P_n$ is the lattice of set partitions of the set $[n]=\{1,\dots,n\}$, which has a partial order of refinement ($\pi \leq \kappa$ if $\pi$ is a finer set partition than $\kappa$). Remember that a partition $\pi$ of the (finite) set $[n]$ is a collection of (non-empty) subsets $\pi=\{\pi_1,\dots,\pi_b\}$ of $[n]$, called blocks, which are mutually disjoint, i.e., $\pi_i \cap \pi_j=\emptyset$ for all $i\neq j$, and whose union $\cup_{i=1}^b \pi_i =[n]$. We denote by $|\pi|:=b$ the number of blocks of the partition $\pi$, and $|\pi_i|$ is the number of elements in the block $\pi_i$. Given $p,q \in [n]$ we will write that $p \sim_{\pi} q$ if and only if they belong to same block. The partition $\hat{1}_n = \{\pi_1\}$ consists of a single block, i.e., $|\pi_1|=n$. It is the maximum element in $P_n$. The partition $\hat{0}_n=\{\pi_1,\dots,\pi_n\}$ has $n$ singleton blocks, and is the minimum partition in $P_n$.

The element $X_\pi$ in \eqref{eq:PHIrecursion2} is defined as follows
\begin{equation}
	X_{\pi} := \prod_{\pi_i \in \pi} x(\pi_i), \label{eq:PHIrecursion2a} 
\end{equation}
where $x(\pi_i):= \ell^{\triangleright }_{x_{k_1^i}} \circ \ell^{\triangleright }_{x_{k_2^i}} \circ\cdots \circ  \ell^{\triangleright }_{x_{k_{l-1}^i}}(x_{k_l^i})$ for the block $\pi_i=\{k_1^i,k_2^i,\ldots ,k_l^i\}$ of the partition $\pi=\{\pi_1, \ldots, \pi_m\}$, and $\ell^{\triangleright}_{a}(b):= a \triangleright b$, for $a,b$ elements in the post-Lie algebra  $\mathfrak g \hookrightarrow \mathcal U(\mathfrak g)$. Recall that $k_l^i \in \pi_i$ is the maximal element in this block. 


\begin{remark}
Defining $m_i:=\phi(x^{\cdot i})$ and $d_i := \ell^{\triangleright i-1}_x(x):=x \triangleright (\ell^{\triangleright i-2}_x(x))$, $ \ell^{\triangleright 0}:=\mathrm{id}$, we find that \eqref{eq:PHIrecursion2} is the $i$-th-order non-commutative Bell polynomial, $m_i = {\mathrm{B}}^{nc}_i(d_1,\ldots,d_i)$. See \cite{ELM14,LMK2} for details. 
\end{remark}

Next we state a recursion for the compositional inverse $\phi^{-1}(x_1 \cdots x_n)$ of the word $x_1 \cdots x_n \in \mathcal U(\mathfrak g)$. First, it is easy to see that $\phi^{-1}(x_1x_2)=x_1 \cdot x_2 - x_1 \triangleright  x_2 \in \mathcal U(\overline{\mathfrak g})$. Indeed, since $\phi$ is linear and the identity on $\mathfrak g \hookrightarrow \mathcal U(\mathfrak g)$, we have
\[
	\phi(x_1 \cdot x_2 - x_1 \triangleright  x_2)= x_1 * x_2 - x_1 \triangleright  x_2 = x_1x_2,
\]	
and 
\allowdisplaybreaks{
\begin{eqnarray*}
	\phi^{-1}(x_1x_2x_3 ) &=& x_1 \cdot x_2 \cdot x_3 
		- \phi^{-1}(x_1(x_2 \triangleright x_3)) 
		- \phi^{-1}(x_2(x_1 \triangleright x_3)) 
		- \phi^{-1}((x_1 \triangleright  x_2)x_3)\\
		&-& x_1 \triangleright(x_2 \triangleright x_3)
\end{eqnarray*}}
which is easy to verify. In general, we find the recursive formula for $\phi^{-1}(x_1 \cdots x_n) \in \mathcal U(\overline{\mathfrak g})$
\begin{equation}
	\phi^{-1}(x_1 \cdots x_n) = x_1 \cdot \cdots \cdot x_n - \sum_{\hat{0}_n < \pi \in P_n} \phi^{-1}(X_\pi) \label{eq:PHIrecursion3}.
\end{equation}
This is well-defined since in the sum on the righthand side all partitions have less than $n$ blocks. 

Observe now that since $\phi$ maps the augmentation ideal of $\mathcal U(\bar{\mathfrak g})$ to the one of $\mathcal U_*(\mathfrak g)$ it extends to an isomorphism between the completions of the two universal enveloping algebras $\hat{\phi}:\hat{\mathcal U}(\bar{\mathfrak g})\rightarrow\hat{\mathcal U}_*(\mathfrak g)$, see Section \ref{sec:LieTheory}. We are interested in the inverse of the group-like element $\exp(x) \in \mathcal G(\hat{\mathcal U}(\mathfrak g))$, $x\in\mathfrak g$, with respect to $\hat{\phi}$. It follows from the inverse of the word $x^n \in \hat{\mathcal U}(\mathfrak g)$, i.e., $\hat{\phi}^{-1}(\exp(x))=\sum_{n \ge 0} \frac{1}{n!} \hat\phi^{-1}(x^n)$. 

\begin{theorem}\label{thm:FinverseChi}
For each $x \in \mathfrak g$, there exists an unique element $\chi(x) \in \mathfrak g$, such that 
\begin{equation}
	\exp(x) = \exp^*(\chi(x)). \label{group-like}
\end{equation}
\end{theorem}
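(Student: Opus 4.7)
The plan is to show that $\exp(x)$ is simultaneously a group-like element for both Hopf algebra structures on the completion, and then invoke the bijective correspondence between primitives and group-like elements in a complete Hopf algebra.

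First I would exploit the observation, already recorded in Theorem \ref{thm:KLM0}, that the two Hopf algebras $\hat{\mathcal U}(\mathfrak g)$ and $\hat{\mathcal U}_*(\mathfrak g)$ share the same unit, counit and coproduct; only the multiplication (and hence the antipode) differs. In particular the defining equations for primitives and for group-likes involve only $\Delta$, $\epsilon$ and $1$, so as subsets of the common underlying vector space one has $\mathcal P(\hat{\mathcal U}(\mathfrak g)) = \mathcal P(\hat{\mathcal U}_*(\mathfrak g))$ and $\mathcal G(\hat{\mathcal U}(\mathfrak g)) = \mathcal G(\hat{\mathcal U}_*(\mathfrak g))$. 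Combined with the isomorphism $\hat\phi:\hat{\mathcal U}(\overline{\mathfrak g})\rightarrow \hat{\mathcal U}_*(\mathfrak g)$ of Theorem \ref{thm:KLM}, which restricts to the identity on $\overline{\mathfrak g}=\mathfrak g$, and the standard identification $\mathcal P(\hat{\mathcal U}(\overline{\mathfrak g}))=\overline{\mathfrak g}$, this yields
\[
    \mathcal P(\hat{\mathcal U}_*(\mathfrak g))=\mathfrak g.
\]

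Next, since $x\in\mathfrak g=\mathcal P(\hat{\mathcal U}(\mathfrak g))$, the proposition relating primitives and group-likes (applied to the complete Hopf algebra $\hat{\mathcal U}(\mathfrak g)$) gives $\exp(x)\in\mathcal G(\hat{\mathcal U}(\mathfrak g))$. By the identification of group-like sets above, $\exp(x)\in\mathcal G(\hat{\mathcal U}_*(\mathfrak g))$ as well. Now apply the corollary asserting that, in any complete Hopf algebra $H$, the exponential map $\exp\colon\mathcal P(H)\rightarrow\mathcal G(H)$ is a bijection with inverse given by the logarithmic series, to the complete Hopf algebra $H=\hat{\mathcal U}_*(\mathfrak g)$. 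This produces a unique $\chi(x)\in\mathcal P(\hat{\mathcal U}_*(\mathfrak g))=\mathfrak g$ with
\[
    \exp^*(\chi(x))=\exp(x),
\]
and one may write $\chi(x)=\log^*\bigl(\exp(x)\bigr)$, where $\log^*$ denotes the logarithmic series computed with respect to the $*$-product.

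The only non-trivial input is the identification $\mathcal P(\hat{\mathcal U}_*(\mathfrak g))=\mathfrak g$, which in turn rests on Theorem \ref{thm:KLM} and on the coincidence of the coalgebra structures of the two Hopf algebras. Everything else is a direct application of the general framework of complete Hopf algebras developed earlier in the text; no new estimates, convergence arguments, or combinatorial identities are required.
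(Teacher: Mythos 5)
Your argument is correct, and its essential engine is the same as the decisive step in the paper's proof: since $\hat{\mathcal U}(\mathfrak g)$ and $\hat{\mathcal U}_*(\mathfrak g)$ share the coproduct, counit and unit, the element $\exp(x)$ is group-like for the $*$-structure as well, and the bijection $\exp\colon\mathcal P(H)\to\mathcal G(H)$ in a complete Hopf algebra then forces $\chi(x)=\log^*(\exp(x))$ to lie in $\mathcal P(\hat{\mathcal U}_*(\mathfrak g))=\mathfrak g$ (more precisely, in the completion $\hat{\mathfrak g}$, a point the paper also flags). Where you differ is in economy: the paper first constructs $\chi(xt)=xt+\sum_{m>0}\chi_m(x)t^m$ by matching $\exp^*(\chi(xt))$ against $\exp(xt)$ order by order, which yields the explicit recursion \eqref{chi-map} and the closed forms of $\chi_2,\chi_3$, and only then invokes the group-like/primitive correspondence to conclude $\chi_n(x)\in\mathfrak g$ for all $n$. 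Your route obtains existence, uniqueness and membership in $\mathfrak g$ in one stroke from the abstract bijection, at the price of not producing the computable recursion for the $\chi_n$ that the paper uses later for the post-Lie Magnus expansion and the factorization formulas. For the theorem as stated, your proof is complete; the only cosmetic caveat is that $\mathcal P(\hat{\mathcal U}(\overline{\mathfrak g}))$ is the pro-nilpotent completion of $\overline{\mathfrak g}$ rather than $\overline{\mathfrak g}$ itself, exactly as in the paper's own closing remark.
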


\begin{proof}
For $x \in \mathfrak g$ the exponential $\exp(x)$ is a group-like element in $\mathcal G(\hat{\mathcal U}(\mathfrak g))$. The proof of Theorem \ref{thm:FinverseChi} involves calculating the inverse of the group-like element $\exp(x) \in \mathcal G(\hat{\mathcal U}(\mathfrak g))$ with respect to the map $\hat{\phi}$. Indeed, we would like to show that $\hat{\phi}^{-1}(\exp(x)) = \exp^\cdot(\chi(x)) \in \mathcal G(\hat{\mathcal U}(\bar{\mathfrak g}))$, from which identity \eqref{group-like} follows
$$
	\hat{\phi}\circ\hat{\phi}^{-1}(\exp(x)) = \exp(x) = \hat{\phi}\circ \exp^\cdot(\chi(x)) = \exp^*(\chi(x)),
$$
due to $\hat\phi$ being an algebra morphism from $\hat{\mathcal U}(\overline{\mathfrak g})$ to $\hat{\mathcal U}_*(\mathfrak g)$, which reduces to the identity on~${\mathfrak g}$.

First we show that for $x\in \mathfrak g$, the element $\chi(x)$ is defined inductively. For this we consider the expansion $\chi(xt):=xt + \sum_{m>0} \chi_m(x)t^m$ in the parameter $t$. Comparing $\exp^*(\chi(xt))$ order by order with $\exp(xt)$ yields at second order in $t$
$$
	 \chi_2(x) := \frac{1}{2}x^2 - \frac{1}{2}x * x=- \frac{1}{2}x \triangleright x \in \mathfrak g. 
$$
At third order we deduce from \eqref{group-like} that 
\allowdisplaybreaks{
\begin{align*}
	\lefteqn{\chi_3(x) := -\frac{1}{3!} \sum_{\hat{0}_3 < \pi \in P_3} X_\pi   - \frac{1}{2}  \chi_2(x) * x - \frac{1}{2}  x * \chi_2(x)} \\
		      &= -\frac{1}{3!} \sum_{\hat{0}_3 < \pi \in P_3} X_\pi  
		      			+ \frac{1}{4}  \big((x \triangleright x) x + (x \triangleright x) \triangleright x\big)
					+ \frac{1}{4}  \big(x (x \triangleright x) + x \triangleright (x \triangleright x)\big)\\
		      &= -\frac{1}{3!}\big( 2x(x \triangleright x) + (x \triangleright  x)x + x \triangleright (x \triangleright x)\big)
		      			+ \frac{1}{4}  \big((x \triangleright x) x + (x \triangleright x) \triangleright x 
					+ x (x \triangleright x) + x \triangleright (x \triangleright x)\big)\\
		     &= \frac{1}{12} [(x \triangleright x), x] 
		     			+ \frac{1}{4} (x \triangleright x) \triangleright x 
		     				+  \frac{1}{12} x \triangleright (x \triangleright x) \\
		     &=  \frac{1}{6} [\chi_1(x), \chi_2(x)] 
		     			- \frac{1}{2} \chi_2(x)\triangleright x 
		     				-  \frac{1}{6} x \triangleright \chi_2(x),		 
\end{align*}}
where we defined $\chi_1(x):=x$. The $n$-th order term is given by
\allowdisplaybreaks{
\begin{align}
\label{eq:nth-order}
	\chi_n(x) &:= -\frac{1}{n!} \sum_{\hat{0}_n < \pi \in P_n} X_\pi  
		- \sum_{k=2}^{n-1} \frac{1}{k!} \sum_{p_1 + \cdots + p_k = n \atop p_i > 0}  \chi_{p_1}(x) *  \chi_{p_2}(x) * \cdots *  \chi_{p_k}(x)\\
		      &= \frac{1}{n!} x^n -\frac{1}{n!} x^{*n} 
		      - \sum_{k=2}^{n-1} \frac{1}{k!} \sum_{p_1 + \cdots + p_k = n \atop p_i > 0}  \chi_{p_1}(x) *  \chi_{p_2}(x) * \cdots *  \chi_{p_k}(x).\end{align}}
From this we derive an inductive description of the terms $\chi_n(x) \in \hat{\mathcal U}_*({\mathfrak g})$ depending on the $\chi_p(x)$ for $1 \le p \le n-1$
\begin{equation}
	\chi_n(x) := \frac{1}{n!} x^n 
		      - \sum_{k=2}^{n} \frac{1}{k!} \sum_{p_1 + \cdots + p_k = n \atop p_i > 0}  \chi_{p_1}(x) *  \chi_{p_2}(x) * \cdots *  \chi_{p_k}(x).
		      \label{chi-map}
\end{equation} 

We have verified directly that the first three terms, $\chi_i(x)$ for $i=1,2,3$, in the expansion $\chi(xt):=xt + \sum_{m>0} \chi_m(x)t^m$ are in $ \mathfrak g$. However, showing that $\chi_n(x) \in \mathfrak g$ for $n>3$ is more difficult using formula \eqref{chi-map}. We therefore follow another strategy. At this stage \eqref{chi-map} implies that $\chi(x) \in \hat{\mathcal U}_*({\mathfrak g})$ exists. Since $x \in \mathfrak g$, we have that $\exp(x)$ is group-like, i.e., $\hat{\Delta} (\exp(x)) = \exp(x) \hat\otimes \exp(x)$. Recall that $\hat{\mathcal U}_*({\mathfrak g})$ is a complete Hopf algebra with the same coproduct $\hat{\Delta}$. Hence
$$
	\hat{\Delta}(\exp^*(\chi(x)))
			= \hat{\Delta} (\exp(x))
			= \exp(x) \hat\otimes \exp(x) 
			= \exp^*(\chi(x)) \hat\otimes \exp^*(\chi(x)).
$$
Using $\hat\phi$ we can write $\hat\phi \hat\otimes \hat\phi \circ \hat{\Delta}_{\overline{\mathfrak g}}(\exp^\cdot(\chi(x))) = \hat\phi \hat\otimes \hat\phi \circ (\exp^\cdot(\chi(x)) \hat\otimes \exp^\cdot(\chi(x))),$ which implies that $\exp^\cdot(\chi(x))$ is a group-like element in $\hat{\mathcal U}(\overline{\mathfrak g})$
$$
	 \hat{\Delta}_{\overline{\mathfrak g}}(\exp^\cdot(\chi(x))) = \exp^\cdot(\chi(x)) \hat\otimes \exp^\cdot(\chi(x)).
$$ 
Since $\hat{\mathcal U}(\overline{\mathfrak g})$ is a complete filtered Hopf algebra, the relation between group-like and primitive elements is one-to-one, see Section 
\ref{sec:LieTheory}. This implies that $\chi(x) \in \overline{\mathfrak g} \simeq {\mathfrak g}$, which proves equality \eqref{group-like}. Note that $\chi(x)$ actually is an element of the completion of the Lie algebra $\mathfrak g$. However, the latter is part of $\hat{\mathcal U}({\mathfrak g})$. 
\end{proof}

\begin{corollary}\label{cor:diffeqChi}
Let $x \in {\mathfrak g}$. The following differential equation holds for $\chi(xt) \in \mathfrak g[[t]]$
\begin{equation}
\label{proof-key2}
	\dot \chi(xt) =  {\rm dexp}^{*-1}_{-\chi(xt)}\Big( \exp^*\big(-\chi(xt)\big) \triangleright   x\Big).
\end{equation}
The solution $\chi(xt)$ is called post-Lie Magnus expansion. 
\end{corollary}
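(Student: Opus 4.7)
The plan is to differentiate the defining identity $\exp(xt)=\exp^*(\chi(xt))$ from Theorem~\ref{thm:FinverseChi} with respect to $t$ and to solve the resulting equation for $\dot\chi(xt)$. On the left-hand side, the ordinary derivative gives $x\cdot\exp(xt)$, using the product of $\hat{\mathcal U}(\mathfrak g)$. For the right-hand side, since $\chi(xt)\in\mathfrak g[[t]]$ is a formal curve of primitive elements, I would transport the classical exponential derivative formula~\eqref{eq:difexp} verbatim to the complete Hopf algebra $(\hat{\mathcal U}_*(\mathfrak g),*)$ to obtain
\[
    \frac{d}{dt}\exp^*(\chi(xt)) \;=\; \operatorname{dexp}^{*}_{-\chi(xt)}(\dot\chi(xt))\,*\,\exp^*(\chi(xt)),
\]
where $\operatorname{dexp}^{*}_{-A}=\sum_{n\ge 0}(\operatorname{ad}^{*}_A)^n/(n+1)!$ with $\operatorname{ad}^{*}_A(B):=A*B-B*A$. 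This operator is formally invertible on $\mathfrak g[[t]]$ because its leading term is the identity, which is why $\operatorname{dexp}^{*\,-1}_{-\chi(xt)}$ in the statement is well defined.

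Equating the two derivatives, substituting $\exp(xt)=\exp^*(\chi(xt))$ on the left, and right $*$-multiplying by the group-like $*$-inverse $\exp^*(-\chi(xt))$ isolates the $\operatorname{dexp}^{*}$-term:
\[
    \operatorname{dexp}^{*}_{-\chi(xt)}(\dot\chi(xt)) \;=\; \bigl(x\cdot\exp^*(\chi(xt))\bigr)*\exp^*(-\chi(xt)).
\]
The crux is to recognise that the right-hand side collapses to $\exp^*(-\chi(xt))\triangleright x$. I would derive this by stringing together the identities from Propositions~\ref{prop:post1} and~\ref{prop:post2}: the formula $m_*(A\otimes B)=A_{(1)}(A_{(2)}\triangleright B)$ with $A=\exp^*(-\chi(xt))$ group-like rewrites the $*$-product as $\exp^*(-\chi(xt))\cdot\bigl(\exp^*(-\chi(xt))\triangleright(x\cdot\exp^*(\chi(xt)))\bigr)$; the derivation property $A\triangleright(BC)=(A_{(1)}\triangleright B)\cdot(A_{(2)}\triangleright C)$ then distributes $\exp^*(-\chi(xt))\triangleright$ across the product $x\cdot\exp^*(\chi(xt))$; and the group-like relation $\exp^*(-\chi(xt))\triangleright\exp^*(\chi(xt))=\exp^*(-\chi(xt))^{-1}$ (ordinary inverse in $\hat{\mathcal U}(\mathfrak g)$), read off from $\exp^*(-\chi(xt))*\exp^*(\chi(xt))=1$ via the same definition of $*$, telescopes the $\cdot$-exponential factors. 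Applying $\operatorname{dexp}^{*\,-1}_{-\chi(xt)}$ delivers~\eqref{proof-key2}.

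The main obstacle is the careful bookkeeping of the interplay between the two associative products $\cdot$ and $*$ on the common underlying vector space: primitivity and group-likeness are shared by both coproduct structures, but the two antipodes $S$ and $S^{*}$ disagree (see~\eqref{eq:antipodests}), so that $*$-inverses of group-likes differ from their $\cdot$-inverses. Before finalising the computation I would cross-check signs by matching the formula at second and third order in $t$ against the explicit closed forms $\chi_{1}(x)=x$, $\chi_{2}(x)=-\tfrac{1}{2}x\triangleright x$ and $\chi_{3}(x)$ produced by the recursion~\eqref{chi-map}.
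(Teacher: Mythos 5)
Your overall strategy --- differentiate $\exp(xt)=\exp^*(\chi(xt))$, trivialize with a $\mathrm{dexp}^*$ formula, and collapse the resulting $*$-product via \eqref{eq:post-LieU} and \eqref{eq:pha2} --- is the paper's, but you trivialize on the wrong side, and that breaks the key collapse step. Your displayed identity is
\[
\operatorname{dexp}^{*}_{-\chi(xt)}(\dot\chi(xt))=\bigl(x\cdot\exp^*(\chi(xt))\bigr)*\exp^*(-\chi(xt)),
\]
with $\exp^*(-\chi(xt))$ as the \emph{right} $*$-factor, yet the collapse you then describe (``$m_*(A\otimes B)=A_{(1)}(A_{(2)}\triangleright B)$ with $A=\exp^*(-\chi(xt))$ group-like\dots'') silently treats $\exp^*(-\chi(xt))$ as the \emph{left} factor. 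Since $*$ is noncommutative these are different elements: writing $g:=\exp^*(\chi(xt))=\exp(xt)$, one has $g^{*-1}*(gx)=g^{*-1}\triangleright x$ (this is the computation that actually telescopes), whereas
\[
(xg)*g^{*-1}=g*\bigl(g^{*-1}*(xg)\bigr)*g^{*-1}=g*\bigl(g^{*-1}\triangleright x\bigr)*g^{*-1},
\]
which differs from $g^{*-1}\triangleright x$ by the $*$-conjugation $e^{\operatorname{ad}^{(*)}_{\chi(xt)}}$; the discrepancy first appears at order $t^2$ and would shift $\chi_3(x)$ by a multiple of $x*(x\triangleright x)-(x\triangleright x)*x$. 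The telescoping mechanism genuinely requires the group-like element that acts via $\triangleright$ to sit on the left, because \eqref{eq:post-LieU} only lets the left factor act on the right one.

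There is a second, related slip: you set $\operatorname{dexp}^{*}_{-A}=\sum_{n\ge0}(\operatorname{ad}^{*}_{A})^{n}/(n+1)!$, which is $\tfrac{e^{u}-1}{u}$ with $u=\operatorname{ad}^{(*)}_{A}$, i.e.\ the paper's $\operatorname{dexp}^{*}_{+A}$ rather than its $\operatorname{dexp}^{*}_{-A}=\tfrac{1-e^{-u}}{u}$. With your convention the right-trivialized formula $\tfrac{d}{dt}\exp^*(\beta)=\operatorname{dexp}^{*}_{-\beta}(\dot\beta)*\exp^*(\beta)$ is internally consistent, but the operator whose inverse appears in \eqref{proof-key2} is the left-trivialized one, and the two inverses differ exactly by $-\operatorname{ad}^{(*)}_{\chi(xt)}$ --- again a nonzero correction from order $t^2$ on, so the two candidate formulas for $\dot\chi$ cannot both hold. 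Both problems disappear at once if you left-multiply by $\exp^*(-\chi(xt))$, i.e.\ use $\exp^*(-\beta)*\tfrac{d}{dt}\exp^*(\beta)=\mathrm{dexp}^{*}_{-\beta}(\dot\beta)$ with $\mathrm{dexp}^{*}_{\beta}=\sum_{n\ge0}\operatorname{ad}^{(*n)}_{\beta}/(n+1)!$: the right-hand side becomes $\exp^*(-\chi(xt))*(\exp(xt)x)$ and your collapse argument then goes through verbatim. The low-order cross-check you propose would in fact have exposed the error at $\chi_3$.
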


\begin{proof} Recall the general fact for the $\rm{dexp}$-operator \cite{Blanes}
$$
	\exp^*({-\beta(t)}) \ast \frac{d }{dt}\exp^*({\beta(t)}) 
	= \exp^*({-\beta(t)}) \ast {\rm{dexp}}^\ast _{\beta}(\dot{\beta}) *\exp^*({\beta(t)}) 
	={\rm{dexp}}^\ast _{-\beta}(\dot{\beta}),
$$
where 
$$
	{\rm{dexp}}^\ast _{\beta}(x):= \sum_{n \ge 0} \frac{1}{(n+1)!}\operatorname{ad}^{(\ast n)}_\beta(x)
	\qquad {\rm{and}} \qquad
	{\rm dexp}^{\ast  -1}_{\beta}(x):=\sum_{n \ge 0} \frac{b_n}{n!} \operatorname{ad}^{(\ast n)}_\beta(x).
$$
Here $b_n$ are the Bernoulli numbers and $\operatorname{ad}^{(\ast k)}_a(b):=[a,\operatorname{ad}^{(\ast k-1)}_a(b)]_\ast$. This together with the differential equation $\frac{d}{dt}\exp^*(\chi(xt)) = \exp(xt)x$ deduced from \eqref{group-like}, implies
\allowdisplaybreaks{ 
\begin{eqnarray*}
	 &&{\rm{dexp}}^{*}_{-\chi(xt)}\big(\dot \chi(xt)\big) 
	  			 = \exp^*\big(-\chi(xt)\big)* (\exp(xt)x) \nonumber\\ 
	 			 &=&  \exp^*\big(-\chi(xt)\big)
				 	\Big(\exp^*\big(-\chi(xt)\big) \triangleright   (\exp(xt)x)\Big) \label{step1}\\					 								&=&	 \exp^*\big(-\chi(xt)\big)
					 \bigg(
						\big(\exp^*\big(-\chi(xt)\big) \triangleright   \exp(xt)\big) 
						 \big(\exp^*\big(-\chi(xt)\big) \triangleright   x\big)
					 \bigg) \label{step2}\\ 
				 &=&	 \exp^*\big(-\chi(xt)\big)
					\bigg(
						 \big(\exp^*\big(-\chi(xt)\big) \triangleright   \exp^*\big(\chi(xt)\big)\big) 
				 		\big(\exp^*\big(-\chi(xt)\big) \triangleright   x\big)
					\bigg) \label{step3}\\ 
				  &=&
				  	\bigg( \exp^*\big(-\chi(xt)\big)
				 		\Big(\exp^*\big(-\chi (xt)\big) \triangleright 
						\exp^*\big(\chi(t a)\big)\Big)  
					\bigg)
						 \big(\exp^*\big(-\chi (xt)\big) \triangleright x\big)
									 \label{step4}\\ 
				 &=&
			 		 \Big( \exp^*\big(-\chi(xt)\big) * 
					 \exp^*\big(\chi(xt)\big) \Big)
				 	\big(\exp^*\big(-\chi(xt)\big) \triangleright   x\big) 	\nonumber\\ 
				 &=& \exp^*\big(-\chi(xt)\big) \triangleright x. 			\nonumber
\end{eqnarray*}} 
The claim in \eqref{proof-key2} follows after inverting $ {\rm{dexp}}^{*}_{-\chi(xt)}\big(\dot \chi(xt)\big) $. Note that we used successively \eqref{eq:post-LieU}, \eqref{eq:pha2} and \eqref{group-like}
\end{proof}

Let us return to point $3.$ of Remark  \ref{rem:notrem} in Section \ref{sec:lieapospre}, and assume that the post-Lie algebra $(\mathfrak g,\triangleright,[\cdot,\cdot])$ is equipped with an abelian Lie bracket. This implies that $(\mathfrak g,\triangleright)$ reduces to a left pre-Lie algebra. The complete universal enveloping algebra  $\hat{\mathscr U}(\mathfrak g)$ becomes the complete symmetric algebra ${\hat S}_\mathfrak g$. This is the setting of \cite{OudomGuin}. Identity \eqref{group-like} was analyzed in the pre-Lie algebra context in \cite{ChapPat}.  

\begin{corollary}\label{cor:pre-LieMag}
For the pre-Lie algebra $(\mathfrak g,\triangleright,[\cdot,\cdot]=0)$, identity \eqref{group-like} in $\hat{S}_\mathfrak g$ is solved by the pre-Lie Magnus expansion
$$
	\chi(x) = \frac{\ell^{\triangleright}_{-\chi(x)}}{e^{\ell^{\triangleright}_{-\chi(x)}} - 1}(x)
	=\sum_{k\ge 0} \frac{(-1)^kb_k}{k!} \ell^{\triangleright k}_{\chi(x)}(x),
$$ 
where $b_n$ is the $n$-th Bernoulli number.
\end{corollary}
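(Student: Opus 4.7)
Plan: I would specialize Corollary \ref{cor:diffeqChi} to the abelian case $[\cdot,\cdot]=0$, reduce the resulting differential equation to a pre-Lie Magnus equation living entirely inside $\mathfrak g$, and then verify that the announced series is its unique formal solution. In this setting $\mathcal U(\mathfrak g) = S_\mathfrak g$ is commutative, while the bracket $\llbracket x,y\rrbracket = x\triangleright y - y\triangleright x$ of $\overline{\mathfrak g}$ reduces to the commutator of $\triangleright$.

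First I would simplify the right-hand side of \eqref{proof-key2}. Iterating identity \eqref{last} of Proposition \ref{prop:post2}, namely $A\triangleright(B\triangleright C) = (A*B)\triangleright C$, an easy induction gives $\chi^{*n}\triangleright x = \ell^{\triangleright n}_\chi(x) \in \mathfrak g$ for every $n \ge 0$, so
\[
\exp^*(-\chi)\triangleright x = \sum_{n\ge 0}\frac{(-1)^n}{n!}\ell^{\triangleright n}_\chi(x) = e^{\ell^\triangleright_{-\chi}}(x).
\]
Moreover, the commutativity of $S_\mathfrak g$ yields $[\chi,y]_* = \chi\triangleright y - y\triangleright\chi = \llbracket\chi,y\rrbracket \in \mathfrak g$ for $y\in\mathfrak g$, so $\operatorname{ad}^*_{-\chi}$ preserves $\mathfrak g$ and agrees there with $\operatorname{ad}^{\overline{\mathfrak g}}_{-\chi}:=\llbracket-\chi,\cdot\rrbracket$. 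Plugging these observations into \eqref{proof-key2} recasts the differential equation as an identity living entirely inside $\mathfrak g$:
\[
\sum_{n\ge 0}\frac{b_n}{n!}\bigl(\operatorname{ad}^{\overline{\mathfrak g}}_{-\chi(xt)}\bigr)^n\bigl(\dot\chi(xt)\bigr) = e^{\ell^\triangleright_{-\chi(xt)}}(x),\qquad \chi(0)=0 .
\]

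Second, I would verify that the candidate $\chi(xt) := \sum_{k\ge 0}\frac{(-1)^k b_k}{k!}\ell^{\triangleright k}_{\chi(xt)}(xt)$ satisfies this reduced equation. Since $\tfrac{z}{e^z-1}$ and $\tfrac{e^z-1}{z}$ are compositional inverses, the fixed-point identity is equivalent to $\sum_{n\ge 0}\tfrac{1}{(n+1)!}\ell^{\triangleright n}_{-\chi(xt)}(\chi(xt)) = xt$. Differentiating this in $t$, using $\ell^\triangleright_{-\chi}(\chi) = -\chi\triangleright\chi$ together with the classical pre-Lie commutator identity $[\ell^\triangleright_a,\ell^\triangleright_b] = \ell^\triangleright_{\llbracket a,b\rrbracket}$ to repackage products of $\ell^\triangleright$-operators into $\operatorname{ad}^{\overline{\mathfrak g}}_{-\chi}$-powers, the cross terms reorganize exactly into the $\operatorname{dexp}$-series on the left and into $e^{\ell^\triangleright_{-\chi}}(x)$ on the right of the reduced equation. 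Uniqueness of formal power-series solutions with $\chi(0)=0$ then identifies the candidate with $\chi(xt)$.

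The main obstacle is precisely this combinatorial repackaging of mixed $\ell^\triangleright$-sums into $\operatorname{ad}^{\overline{\mathfrak g}}$-powers in the last step; it is the pre-Lie analogue of the classical derivation of the Magnus expansion, and has been worked out in the pre-Lie literature (see e.g.\ \cite{OudomGuin,ChapPat,EFLMK}). Once this is in hand, everything else is a direct consequence of the structural identities in Propositions \ref{prop:post1}--\ref{prop:post2} specialized to the commutative ambient algebra.
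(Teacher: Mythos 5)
Your route is genuinely different from the paper's. The paper disposes of this corollary in one line by citing Chapoton--Patras \cite{ChapPat}: in the abelian case one computes $\chi(x)=\log^*(\exp(x))$ directly in $\hat S_{\mathfrak g}$ and recognizes the pre-Lie Magnus series. You instead specialize the differential equation \eqref{proof-key2} of Corollary \ref{cor:diffeqChi} and verify that the announced fixed point solves it. Your two preparatory reductions are correct and worth recording: iterating \eqref{last} does give $\chi^{*n}\triangleright x=\ell^{\triangleright n}_{\chi}(x)$, hence $\exp^*(-\chi)\triangleright x=e^{\ell^{\triangleright}_{-\chi}}(x)$, and commutativity of $S_{\mathfrak g}$ does give $\operatorname{ad}^{(*)}_{-\chi}\big|_{\mathfrak g}=\llbracket -\chi,\cdot\rrbracket$, so the equation really does close up inside $\mathfrak g[[t]]$. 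This buys something the paper's citation does not: an intrinsic characterization of the pre-Lie Magnus expansion as the solution of a flow equation living entirely in $\mathfrak g$.

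Two caveats. First, your displayed ``reduced equation'' carries the wrong coefficients: transporting \eqref{proof-key2} gives $\operatorname{dexp}^{*}_{-\chi(xt)}\big(\dot\chi(xt)\big)=e^{\ell^{\triangleright}_{-\chi(xt)}}(x)$, so the left-hand side should be $\sum_{n\ge0}\tfrac{1}{(n+1)!}\big(\operatorname{ad}^{\overline{\mathfrak g}}_{-\chi(xt)}\big)^{n}\big(\dot\chi(xt)\big)$, not $\sum_{n\ge0}\tfrac{b_n}{n!}(\cdots)$; as written you have applied $\operatorname{dexp}^{*-1}$ to the wrong side of the equation. Your later prose (``the $\operatorname{dexp}$-series on the left'') shows you meant the former, so this is a slip rather than a conceptual error, but it must be corrected before the verification in your second step can go through. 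Second, the actual crux --- differentiating the fixed-point identity $\sum_{n\ge0}\tfrac{1}{(n+1)!}\ell^{\triangleright n}_{-\chi(xt)}(\chi(xt))=xt$ in $t$ and reorganizing the mixed $\ell^{\triangleright}$-terms into powers of $\operatorname{ad}^{\overline{\mathfrak g}}_{-\chi}$ --- is precisely the step you defer to the literature. Since the paper itself proves the corollary only by citation to \cite{ChapPat}, this is a defensible move, but your proposal as written is not more self-contained than the paper's proof; it merely relocates the citation from the statement to the key combinatorial lemma.
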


\begin{proof}
The proof of this result was given in \cite{ChapPat} and follows directly from identity \eqref{group-like} in Theorem \ref{thm:FinverseChi}, i.e., by calculating the Lie algebra element $\chi(x)$ as the $\log^*(\exp(x))$ in $\hat{S}_\mathfrak g$.
\end{proof}

The next proposition will be useful in the context of Lie bracket flow equations. 

\begin{proposition}\label{prop:star-sol}
\begin{equation}
\label{LieflowSol}
	a(t) :=\exp^*\big(- \chi(a_0t)\big)\triangleright a_0.
\end{equation}
solves the non-linear post-Lie differential equation with initial value $a(0)=a_0$
\begin{equation}
\label{post-Lie-flow}
	\dot{a}(t) = - a(t) \triangleright  a(t).
\end{equation}
\end{proposition}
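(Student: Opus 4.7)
\smallskip
\noindent\textbf{Proof proposal.}
The plan is to differentiate $a(t) = \exp^*(-\chi(a_0 t)) \triangleright a_0$ directly, reducing the computation to Corollary~\ref{cor:diffeqChi} and to the ``associativity'' identity \eqref{last} of Proposition~\ref{prop:post2}, namely $(A*B)\triangleright C = A\triangleright(B\triangleright C)$. The initial condition $a(0)=a_0$ is immediate from $\chi(0)=0$, which gives $\exp^*(0)=1$, together with \eqref{eq:pha1}.

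Since $a_0$ is constant in $t$ and $\triangleright$ is bilinear,
\[
   \dot a(t) \;=\; \Big(\tfrac{d}{dt}\exp^*\!\big(-\chi(a_0 t)\big)\Big) \triangleright a_0.
\]
To compute the derivative in parentheses, I would apply the standard left-invariant ``$\mathrm{dexp}$'' formula recalled in the proof of Corollary~\ref{cor:diffeqChi},
\[
   \tfrac{d}{dt}\exp^*\!\big(\beta(t)\big)
   \;=\; \mathrm{dexp}^{*}_{\beta(t)}(\dot\beta(t)) * \exp^*\!\big(\beta(t)\big),
\]
with $\beta(t) = -\chi(a_0 t)$. Using bilinearity of $\mathrm{dexp}^{*}$ in its second argument to absorb the minus sign, this becomes
\[
   \tfrac{d}{dt}\exp^*\!\big(-\chi(a_0 t)\big)
   \;=\; -\,\mathrm{dexp}^{*}_{-\chi(a_0 t)}(\dot\chi(a_0 t)) * \exp^*\!\big(-\chi(a_0 t)\big).
\]
Here I invoke the key identity established \emph{inside} the proof of Corollary~\ref{cor:diffeqChi}, before inverting $\mathrm{dexp}^{*}$:
\[
   \mathrm{dexp}^{*}_{-\chi(a_0 t)}(\dot\chi(a_0 t)) \;=\; \exp^*\!\big(-\chi(a_0 t)\big)\triangleright a_0 \;=\; a(t).
\]
Thus $\frac{d}{dt}\exp^*\!\big(-\chi(a_0 t)\big) = -\,a(t)*\exp^*\!\big(-\chi(a_0 t)\big)$.

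Substituting back and applying $\triangleright a_0$,
\[
   \dot a(t) \;=\; -\big(a(t)*\exp^*(-\chi(a_0 t))\big)\triangleright a_0
   \;=\; -\,a(t)\triangleright\big(\exp^*(-\chi(a_0 t))\triangleright a_0\big) \;=\; -\,a(t)\triangleright a(t),
\]
where the middle equality is identity \eqref{last} of Proposition~\ref{prop:post2}, written in the compact form $(A*B)\triangleright C = A\triangleright(B\triangleright C)$.

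The computation is essentially mechanical once the $\mathrm{dexp}^{*}$ formula is in place; the only delicacy is bookkeeping of signs and of which product ($\cdot$ or $*$) is used at each step, and recognizing that the very identity appearing in the proof of Corollary~\ref{cor:diffeqChi} is precisely what lets us re-express $\mathrm{dexp}^{*}_{-\chi(a_0 t)}(\dot\chi(a_0 t))$ as $a(t)$. No further convergence issue arises because all manipulations take place in the complete Hopf algebras $\hat{\mathcal U}(\mathfrak g)$ and $\hat{\mathcal U}_*(\mathfrak g)$, in which the exponentials and the post-Lie Magnus series already live.
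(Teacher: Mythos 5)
Your proposal is correct and follows essentially the same route as the paper's proof: differentiate $a(t)$, write $\tfrac{d}{dt}\exp^*(-\chi(a_0t))$ via the $\mathrm{dexp}^*$ formula, identify $\mathrm{dexp}^*_{-\chi(a_0t)}(\dot\chi(a_0t))$ with $a(t)$ using the identity from the proof of Corollary~\ref{cor:diffeqChi}, and conclude with $(A*B)\triangleright C=A\triangleright(B\triangleright C)$. Your version is if anything slightly more explicit about the sign bookkeeping and the initial condition.
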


\begin{proof}
We calculate 
\allowdisplaybreaks{ 
\begin{eqnarray*}
	\dot{a}(t) 	&=& \Big(-{\rm{dexp}}_{-\chi(a_0t)}\big(\dot {\chi}(a_0t)\big) * \exp^*\big(-\chi(a_0t)\big) \Big)\triangleright a_0\\
			&=& -{\rm{dexp}}_{-\chi(a_0t)}\big(\dot {\chi}(a_0t)\big) \triangleright  \Big(\exp^*\big(-\chi(a_0t)\big) \triangleright a_0\Big)\\
			&=& - a(t) \triangleright  a(t),
\end{eqnarray*}}
where we used that $\exp^*\big(-\chi(a_0t)\big)\triangleright a_0 = {\rm{dexp}}_{-\chi(a_0t)}\big(\dot {\chi}(a_0t)\big) = a(t)$.
\end{proof}


\subsection{Factorization theorems and $r$-matrices}

In this subsection we will suppose that the post-Lie algebra structure on $\mathfrak g$ is defined in terms of a solution of the mCYBE, see Subsection \ref{ss:YBPP}. Recall that in this case $\bar{\mathfrak g}=\mathfrak g_R$ implying that $\mathcal U(\bar{\mathfrak g})=\mathcal U(\mathfrak g_R)$ and, correspondingly, that $\hat{\mathcal U}(\bar{\mathfrak g})=\hat{\mathcal U}(\mathfrak g_R)$. In what follows we will prove that for this particular class of post-Lie algebras, the isomorphism $\phi$ admits an explicit description in terms of the structure of the two Hopf algebras of the universal enveloping algebras $\mathcal U(\mathfrak g_R), \mathcal U_*(\mathfrak g)$. To this end first we prove the following result.

\begin{theorem}\label{thm:lineariso}
The map $F:\mathscr U(\mathfrak g_R)\rightarrow\mathscr U_{\ast}(\mathfrak g)$ defined by
\begin{equation}
\label{eq:sigma}
	F=m_{\mathfrak g}\circ (\operatorname{id}\otimes S_{\mathfrak g})\circ (R_+\otimes R_-)\circ\Delta_{\mathfrak g_R},
\end{equation}
is an isomorphism of Hopf algebras. 
\end{theorem}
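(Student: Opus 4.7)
The plan is to identify $F$ with the abstract Hopf algebra isomorphism $\phi\colon \mathcal U(\mathfrak g_R)\to \mathcal U_*(\mathfrak g)$ of Theorem~\ref{thm:KLM}; the isomorphism property will then be automatic. By Proposition~\ref{prop:prel}, the maps $R_\pm\colon \mathfrak g_R\to \mathfrak g$ are Lie algebra homomorphisms, and the universal property of the enveloping algebra lifts them uniquely to Hopf algebra morphisms $R_\pm\colon \mathcal U(\mathfrak g_R)\to \mathcal U(\mathfrak g)$. In Sweedler notation one then has $F(h) = R_+(h_{(1)})\,S_\mathfrak g(R_-(h_{(2)}))$; that is, $F$ is the convolution $R_+ \star (S_\mathfrak g\circ R_-)$ using $\Delta_{\mathfrak g_R}$ on the source and the ordinary product $m_\mathfrak g$ on the target.

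Two preliminary properties are then immediate. First, $F$ restricts to the identity on $\mathfrak g\hookrightarrow \mathcal U(\mathfrak g_R)$: for a primitive $x$ one has $\Delta_{\mathfrak g_R}(x) = x\otimes 1 + 1\otimes x$, so
\[
F(x) = R_+(x)\cdot 1 + 1\cdot S_\mathfrak g(R_-(x)) = R_+(x) - R_-(x) = x,
\]
using $R_+ - R_- = \operatorname{id}_\mathfrak g$. Second, $F$ is a coalgebra morphism with respect to $\Delta_{\mathfrak g_R}$ and the common coproduct $\Delta$ of $\mathcal U(\mathfrak g) = \mathcal U_*(\mathfrak g)$: the factor $R_+$ is a coalgebra morphism, the factor $S_\mathfrak g\circ R_-$ is also a coalgebra morphism in the cocommutative setting (cocommutativity of $\mathcal U(\mathfrak g_R)$ absorbs the ``anti'' of the antipode's coalgebra-antimorphism property), and a convolution of coalgebra morphisms into a cocommutative coalgebra is itself a coalgebra morphism.

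The central step is to show that $F$ is an algebra morphism from $\mathcal U(\mathfrak g_R)$ to $\mathcal U_*(\mathfrak g)$, i.e.\ $F(A\cdot_R B) = F(A) * F(B)$ for all $A, B \in \mathcal U(\mathfrak g_R)$. Once this is established, the universal property of $\mathcal U(\mathfrak g_R)$, combined with \eqref{eq:brack} --- which ensures that $\operatorname{id}\colon \mathfrak g\hookrightarrow \mathcal U_*(\mathfrak g)$ is a Lie algebra morphism from $\mathfrak g_R$ to $(\mathcal U_*(\mathfrak g))_{\mathrm{Lie}}$ --- forces $F$ to coincide with the unique algebra extension of that inclusion, which is exactly $\phi$; bijectivity and antipode compatibility are then inherited from Theorem~\ref{thm:KLM}. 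My approach to the multiplicativity identity is induction on the filtration degree of $A$, using the coalgebra morphism property of the previous paragraph together with \eqref{eq:post-LieU} to reduce the inductive step to the base case $A = x \in \mathfrak g$. In that base case, one expands $F(xB)$ via $\Delta_{\mathfrak g_R}(xB) = (xB_{(1)})\otimes B_{(2)} + B_{(1)}\otimes (xB_{(2)})$ and the Hopf-morphism properties of $R_\pm$, after which the required identity reduces to matching an expression of the form $R_+(x)\,F(B) - F(B)\,R_-(x)$ against $x\triangleright F(B)$. This last matching follows from $R_+ - R_- = \operatorname{id}$ together with the defining relation $x \triangleright y = [R_+ x, y]$ of~\eqref{def:RBpost-Lie}. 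The principal obstacle of the proof is precisely this careful bookkeeping of Sweedler indices and $r$-matrix identities in the base case.
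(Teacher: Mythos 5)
Your overall strategy is sound and, in its core, is the same as the paper's: both arguments establish that $F$ restricts to the identity on $\mathfrak g$, prove multiplicativity $F(A\cdot_R B)=F(A)*F(B)$ by induction on the length of monomials with the essential content concentrated in the degree-one case, and then dispose of the coalgebra compatibility and bijectivity by soft arguments. The differences are organizational: the paper verifies the coproduct compatibility by noting that $\Delta\circ F$ and $(F\otimes F)\circ\Delta_{\mathfrak g_R}$ are algebra morphisms agreeing on the generators, proves surjectivity from generation and injectivity from the fact that a nonzero coideal must contain a nonzero primitive, whereas you import bijectivity and the Hopf compatibility wholesale from Theorem \ref{thm:KLM} by identifying $F$ with $\phi$ via the universal property (this identification is exactly the paper's Proposition \ref{prop:idenF}, so your route is legitimate and arguably tidier); your convolution argument for the coalgebra morphism property is also correct in the cocommutative setting, though it becomes redundant once $F=\phi$ is known.

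There is, however, one concrete error in the crucial base-case matching. The computation of $F(x\cdot_R B)$ from $\Delta_{\mathfrak g_R}(x\cdot_R B)=(x\cdot_R B_{(1)})\otimes B_{(2)}+B_{(1)}\otimes(x\cdot_R B_{(2)})$ gives
\[
F(x\cdot_R B)=R_+(x)\,F(B)-F(B)\,R_-(x)=x\,F(B)+\bigl[R_-(x),F(B)\bigr],
\]
using $R_+-R_-=\operatorname{id}_{\mathfrak g}$. For this to equal $x*F(B)=x\,F(B)+x\triangleright F(B)$ you need the post-Lie product $x\triangleright_- y=[R_-(x),y]$ of \eqref{def:RBpost-Lie} (whose lift to $\mathcal U(\mathfrak g)$ is the derivation $[R_-(x),\cdot\,]$), \emph{not} $x\triangleright_+ y=[R_+(x),y]$ as you wrote. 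With $\triangleright_+$ the required identity reduces to $F(B)x=xF(B)$, which is false in general; so as literally stated your matching step fails. This is the convention fixed throughout the paper's factorization section (see Proposition \ref{prop:idenF} and the computation $F(x_1x_2)=x_1x_2+[R_-(x_1),x_2]$ in the paper's proof). Once the sign is corrected the argument goes through.
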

\begin{proof}
First note that $m_{\mathfrak g}$ and $S_{\mathfrak g}$ denote respectively the product and the antipode of $\mathcal U(\mathfrak g)$, whereas $\Delta_{\mathfrak g_R}$ denotes the co-product in $\mathcal U(\mathfrak g_R)$. Also, recall that $\mathcal U_\ast(\mathfrak g)$ denotes the Hopf algebra $(\mathcal U(\mathfrak g),m_\ast,1,\Delta,\epsilon,S_\ast)$. The slightly more cumbersome notation is applied in order to make the presentation more traceable. Given an element $x \in\mathfrak g_R \hookrightarrow \mathcal U(\mathfrak g_R)$, one has that
\allowdisplaybreaks{ 
\begin{align*}
	F(x) 	&=m_{\mathfrak g}\circ (\operatorname{id}\otimes S_{\mathfrak g})\circ (R_+\otimes R_-)\circ\Delta_{\mathfrak g_R}(x)\\
		&= m_{\mathfrak g}\circ (\operatorname{id}\otimes S_{\mathfrak g})\circ (R_+\otimes R_-)(x \otimes {1} + {1} \otimes x)\\
		&= m_{\mathfrak g}\circ (\operatorname{id}\otimes S_{\mathfrak g})(R_+(x)\otimes {1} + {1} \otimes R_-(x))\\
		&= m_{\mathfrak g}(R_+(x)\otimes  {1} - {1} \otimes R_-(x))\\
		&= R_+(x)-R_-(x)
		= x \in \mathfrak g \hookrightarrow \mathcal U(\mathfrak g),
\end{align*}}
showing that $F$ restricts to the identity map between $\mathfrak g_R$ and $\mathfrak g$. As in Lemma \ref{lem:coprodast} we have
\[
	\Delta_{\mathfrak g_R}(x_1\cdots x_n) = x_1 \cdots x_n \otimes {1} + {1} \otimes x_1 \cdots x_n
	+ \sum_{k=1}^{n-1}\sum_{\sigma\in\Sigma_{k,n-k}} x_{\sigma(1)}\cdots x_{\sigma(k)} \otimes x_{\sigma(k+1)}\cdots x_{\sigma (n)}.
\]
Since $R_\pm$ are homomorphisms of unital associative algebras, one can easily show that for every $x_{1}\cdots x_k\in\mathcal U_k(\mathfrak g_R)$:
 \begin{eqnarray*}
	\lefteqn{F (x_1\cdots x_k)=R_+(x_1)\cdots R_+(x_k) + (-1)^k R_-(x_k)\cdots R_-(x_1) +} \\
				&\sum_{l=1}^{k-1}\sum_{\sigma\in\Sigma_{l,k-l}}(-1)^{k-l}R_+(x_{\sigma(1)})\cdots R_+(x_{\sigma(l)})R_-(x_{\sigma(k)})\cdots R_-(x_{\sigma(l+1)}) \in \mathcal U_k(\mathfrak g).
\end{eqnarray*}

This shows, in particular, that $F$ maps homogeneous elements to homogeneous elements. Moreover, a simple computation shows that 
\begin{align}
	F(x_1 \cdots x_k)	
	&= R_+(x_{1}) \cdots R_+(x_{k}) + (-1)^k R_-(x_{k}) \cdots R_-(x_{1}) \nonumber\\
	&+ \sum_{l=1}^{k-1} \sum_{\sigma\in\Sigma_{l.k-l}}(-1)^{k-l}
	R_+(x_{\sigma(1)})\cdots R_+(x_{\sigma(l)})\cdot R_-(x_{\sigma(k)}) \cdots R_-(x_{\sigma(l+1)}). 
	\nonumber
\end{align}
for each monomial $x_1\cdots x_k$. Then, using the definition of the $*$-product, one can easily see that $F(x_1x_2)=x_1x_2 + [R_-(x_1),x_2] =  x_1x_2 + x_1 \triangleright_- x_2$, where $\triangleright$ is defined in \eqref{def:RBpost-Lie} (and lifted to $\mathcal U(\mathfrak g) $),  which implies that 
$F(x_1x_2) = x_1 * x_2 \in \mathcal U_*(\mathfrak g) $. Using a simple induction on the length of the monomials, the above calculation extends to all of $\mathcal U(\mathfrak g_R)$ and shows that $F$ is a morphism of unital, associative algebras.
On the other hand, since $F(x)=x$ for all $x\in\mathfrak g_R$ and $\mathcal U(\mathfrak g)$ is generated by $\mathfrak g$, one can conclude that $F$ is a surjective. Furthermore, 
since
\[
\Delta\circ F(x)=x\otimes 1+1\otimes x=(F\otimes F)\circ\Delta_{\mathfrak g_R}(x)
\]
for all $x\in\mathfrak g_R$ one can conclude that the two algebra morphisms $\Delta\circ F$ and $(F\otimes F)\circ\Delta$ are equal, since they coincide on $\mathfrak g_R$, which implies that $F$ is bialgebra morphism. This, in turn, implies that $F$ is an Hopf algebra morphism since the compatibility with the antipodes is automatically fulfilled. To conclude the proof of the theorem, it suffices to show that $F$ is injective. But since $\text{ker}\,(F)$ is a co-ideal, if non-trivial it should contain a non-zero primitive element, which is not the case since $F(x)=x$ for all $x\in\mathfrak g_R$ and $\mathfrak g_R=\mathcal P(\mathcal U(\mathfrak g_R))$.
\end{proof}

Comparing this result with the Theorem \ref{thm:KLM} of the previous section, one has 

\begin{proposition}\label{prop:idenF}
If the post-Lie algebra $(\mathfrak g,[\cdot,\cdot],\triangleright_-)$ is defined in terms of an $r$-matrix $R$ via formula \eqref{def:RBpost-Lie}, then the isomorphism $\phi$ of Theorem \ref{thm:KLM} assumes the explicit form given in formula \eqref{eq:sigma}, i.e., $\phi=F$.
\end{proposition}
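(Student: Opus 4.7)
The plan is straightforward: both $\phi$ and $F$ are unital associative algebra morphisms from $\mathcal{U}(\mathfrak{g}_R)$ to $\mathcal{U}_*(\mathfrak{g})$ which restrict to the identity $\operatorname{id}:\mathfrak{g}_R\to\mathfrak{g}$ on the generating subspace. Since Theorem \ref{thm:KLM} asserts that such an extension is \emph{unique}, the equality $\phi=F$ will follow at once.

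The first step is to observe that when the post-Lie product is given by $x\triangleright_- y=[R_-(x),y]$, the Lie bracket $\llbracket\cdot,\cdot\rrbracket$ coincides with $[\cdot,\cdot]_R$ by \eqref{eq:brack}. Consequently $\bar{\mathfrak g}=\mathfrak g_R$ and $\mathcal U(\bar{\mathfrak g})=\mathcal U(\mathfrak g_R)$, so Theorems \ref{thm:KLM} and \ref{thm:lineariso} indeed produce Hopf algebra isomorphisms between exactly the same pair of Hopf algebras, and it makes sense to compare them.

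Next I would verify the only nontrivial input that is not already contained in the two theorems: that $F$ restricts to the identity on $\mathfrak{g}_R\hookrightarrow\mathcal U(\mathfrak g_R)$. This is immediate from its definition and is in fact the very first computation in the proof of Theorem \ref{thm:lineariso}: for $x\in\mathfrak g_R$ the coproduct $\Delta_{\mathfrak g_R}(x)=x\otimes 1+1\otimes x$ and the formula \eqref{eq:sigma} for $F$ give
\[
F(x)=R_+(x)-R_-(x)=x,
\]
using $R_+-R_-=\operatorname{id}_{\mathfrak g}$ from \eqref{eq:Rpm}. Since Theorem \ref{thm:lineariso} already asserts that $F$ is a unital associative (in fact Hopf) algebra morphism $\mathcal U(\mathfrak g_R)\to\mathcal U_*(\mathfrak g)$, and since the construction of $\phi$ in the proof of Theorem \ref{thm:KLM} exhibits $\phi$ as the unique such extension of the identity on $\mathfrak g_R$, the uniqueness clause forces $\phi=F$.

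No step presents a genuine obstacle: all the technical labor (associativity of $m_*$, the fact that $F$ is an algebra map with respect to $m_*$, coalgebra/antipode compatibility) has already been absorbed into the statements of Theorems \ref{thm:KLM0}, \ref{thm:KLM} and \ref{thm:lineariso}. The only conceptual point worth emphasizing is that, although $F$ is presented through the rather elaborate formula \eqref{eq:sigma} built from the Hopf structure of $\mathcal U(\mathfrak g)$, it is entirely pinned down by its values on primitive elements once it is known to be an algebra map targeting $\mathcal U_*(\mathfrak g)$—and these values are simply the identity, by the short computation above.
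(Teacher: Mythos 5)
Your proof is correct and follows essentially the same route as the paper: the paper's own argument is precisely that $\phi$ and $F$ are both morphisms of filtered, unital associative algebras into $\mathcal U_*(\mathfrak g)$ restricting to the identity on $\mathfrak g_R$, which generates $\mathcal U(\mathfrak g_R)$, hence they coincide. Your additional verifications (that $\bar{\mathfrak g}=\mathfrak g_R$ via \eqref{eq:brack}, and that $F(x)=R_+(x)-R_-(x)=x$ on primitives) are accurate and simply make explicit what the paper delegates to Theorem \ref{thm:lineariso}.
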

\begin{proof}
In fact note that both $\phi$ and $F$ are isomorphisms of filtered, unital associative algebras taking values in $\mathscr U_*(\mathfrak g)$, restricting to the identity map on $\mathfrak g_R$ which is the generating set of $\mathscr U(\mathfrak g_R)$.
\end{proof}

At this point it is worth making the following observation, which will be useful in what follows.

\begin{corollary}\label{cor:dec}
Every $A\in\mathscr U(\mathfrak g)$ can be written uniquely as
\begin{equation}
	A= R_+(a_{(1)})S_{\mathfrak g}(R_-(a_{(2)})) \label{eq:factinu1}
\end{equation}
for a suitable element $a\in\mathscr U(\mathfrak g_R)$, where we wrote the co-product of this element using the Sweedler's notation, i.e., $\Delta_{\mathfrak g_R}(a)=a_{(1)}\otimes a_{(2)}$.
\end{corollary}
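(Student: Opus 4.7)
The plan is to read off the statement as an immediate corollary of Theorem \ref{thm:lineariso}, which is where the real work was done. That theorem establishes that the map
\[
F=m_{\mathfrak g}\circ (\operatorname{id}\otimes S_{\mathfrak g})\circ (R_+\otimes R_-)\circ\Delta_{\mathfrak g_R}
\]
is a Hopf algebra isomorphism between $\mathscr U(\mathfrak g_R)$ and $\mathscr U_\ast(\mathfrak g)$. The key observation for the corollary is that although $\mathscr U(\mathfrak g)$ and $\mathscr U_\ast(\mathfrak g)$ are different Hopf algebras, they share the \emph{same} underlying vector space. Consequently $F$, viewed just as a linear map, is a bijection from $\mathscr U(\mathfrak g_R)$ onto $\mathscr U(\mathfrak g)$.

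Given $A\in\mathscr U(\mathfrak g)$, I would then set $a:=F^{-1}(A)\in\mathscr U(\mathfrak g_R)$; existence and uniqueness of $a$ are immediate from the bijectivity just noted. To extract the explicit form in \eqref{eq:factinu1}, I would simply unwind the definition of $F$ on $a$, using Sweedler's notation $\Delta_{\mathfrak g_R}(a)=a_{(1)}\otimes a_{(2)}$:
\[
A \;=\; F(a) \;=\; m_{\mathfrak g}\bigl(R_+(a_{(1)})\otimes S_{\mathfrak g}(R_-(a_{(2)}))\bigr) \;=\; R_+(a_{(1)})\,S_{\mathfrak g}(R_-(a_{(2)})),
\]
which is precisely the claimed factorization. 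Note that the product on the right-hand side is the ordinary product $m_{\mathfrak g}$ in $\mathscr U(\mathfrak g)$ (not the $\ast$-product), which is consistent with $F$ being a bijection of the underlying vector spaces and with the bijection recovering elements of $\mathscr U(\mathfrak g)$ under its native multiplicative structure.

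There is really no obstacle to speak of: the corollary is a transcription of the bijectivity of $F$ from Theorem \ref{thm:lineariso} together with the definition \eqref{eq:sigma}. The only mildly subtle point to flag for the reader is the distinction between the \emph{algebra} structures on $\mathscr U(\mathfrak g)$ and $\mathscr U_\ast(\mathfrak g)$ versus their common \emph{underlying vector space}; once this is noted, existence and uniqueness of the Sweedler decomposition on the right-hand side of \eqref{eq:factinu1} follow with no further computation.
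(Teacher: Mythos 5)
Your proof is correct and follows essentially the same route as the paper: both invoke the bijectivity of $F$ from Theorem \ref{thm:lineariso} and then unwind the definition \eqref{eq:sigma} to read off the factorization $F(a)=R_+(a_{(1)})S_{\mathfrak g}(R_-(a_{(2)}))$. Your explicit remark that $\mathscr U_\ast(\mathfrak g)$ and $\mathscr U(\mathfrak g)$ share the same underlying vector space, so that $F$ is a linear bijection onto $\mathscr U(\mathfrak g)$, is a helpful clarification the paper leaves implicit.
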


\begin{proof}
The proof follows from Theorem \ref{thm:lineariso}, noticing that for each $a \in \mathscr U(\mathfrak g_R)$,
\[
	F(a)= R_+(a_{(1)})S_{\mathfrak g}(R_-(a_{(2)})).
\]
\end{proof}

Finally, in this more specialized context, we can give the following computational proof of the result contained in Theorem \ref{thm:KLM}.

We conclude this section with the following observation, see Remark \ref{rem:STSR}.

\begin{proposition}\label{prop:prodRSTS}
For all $A,B\in\mathcal U(\mathfrak g)$, one has that:
\begin{equation}
\label{eq:pSTS2}
	A \ast B = R_+(a_{(1)})B S_{\mathfrak g}(R_-(a_{(2)})),
\end{equation}
where $a\in\mathcal U(\mathfrak g_R)$ is the unique element, such that $A=F(a)$, see Corollary \ref{cor:dec}.
\end{proposition}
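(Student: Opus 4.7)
The plan is to exploit the fact that Theorem \ref{thm:lineariso} establishes $F$ as an isomorphism of \emph{algebras} from $(\mathcal U(\mathfrak g_R),\,\cdot\,)$ to $(\mathcal U(\mathfrak g),\,\ast\,)$, so that $\ast$-products on the target can be traced back to ordinary products on the source. This reduces the claim to an elementary unfolding of the defining formula \eqref{eq:sigma} for $F$.

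First, by Corollary \ref{cor:dec}, pick the unique $b\in\mathcal U(\mathfrak g_R)$ with $B=F(b)$. Since $F$ is multiplicative from the standard product on $\mathcal U(\mathfrak g_R)$ to $m_\ast$ on $\mathcal U(\mathfrak g)$, we obtain
\[
   A\ast B \;=\; F(a)\ast F(b) \;=\; F(a\cdot b).
\]

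Next, expand the right-hand side using \eqref{eq:sigma}. The coproduct $\Delta_{\mathfrak g_R}$ is an algebra homomorphism, so $(a\cdot b)_{(1)}\otimes(a\cdot b)_{(2)} = a_{(1)}b_{(1)}\otimes a_{(2)}b_{(2)}$. By Proposition \ref{prop:prel}(1), the maps $R_\pm\colon\mathfrak g_R\to\mathfrak g$ are Lie algebra homomorphisms, hence extend via the universal property of $\mathcal U(\mathfrak g_R)$ to associative algebra homomorphisms $\mathcal U(\mathfrak g_R)\to\mathcal U(\mathfrak g)$; meanwhile $S_{\mathfrak g}$ is an algebra anti-homomorphism. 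Combining these facts,
\begin{align*}
   F(a\cdot b)
   &= R_+\bigl((a\cdot b)_{(1)}\bigr)\,S_{\mathfrak g}\bigl(R_-\bigl((a\cdot b)_{(2)}\bigr)\bigr) \\
   &= R_+(a_{(1)}b_{(1)})\,S_{\mathfrak g}\bigl(R_-(a_{(2)}b_{(2)})\bigr) \\
   &= R_+(a_{(1)})\,R_+(b_{(1)})\,S_{\mathfrak g}(R_-(b_{(2)}))\,S_{\mathfrak g}(R_-(a_{(2)})) \\
   &= R_+(a_{(1)})\,F(b)\,S_{\mathfrak g}(R_-(a_{(2)})) \\
   &= R_+(a_{(1)})\,B\,S_{\mathfrak g}(R_-(a_{(2)})),
\end{align*}
which is exactly \eqref{eq:pSTS2}.

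I do not expect a serious obstacle: the substantive content (that $F$ is an algebra morphism intertwining $\cdot$ and $\ast$, and that $F$ is bijective so every $B\in\mathcal U(\mathfrak g)$ has the required preimage) has already been secured by Theorem \ref{thm:lineariso} and Corollary \ref{cor:dec}. The only point requiring mild care is verifying that $R_\pm$ really do \emph{extend} to unital associative algebra morphisms $\mathcal U(\mathfrak g_R)\to\mathcal U(\mathfrak g)$ — but this is immediate from the universal property applied to the Lie homomorphisms $R_\pm\colon\mathfrak g_R\to\mathcal U(\mathfrak g)_{\mathrm{Lie}}$ furnished by Proposition \ref{prop:prel}. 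The rest is bookkeeping with Sweedler notation.
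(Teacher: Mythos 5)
Your proof is correct and follows essentially the same route as the paper's: write $A\ast B=F(a\cdot b)$ using that $F$ intertwines the two products, expand via the defining formula \eqref{eq:sigma} using that $\Delta_{\mathfrak g_R}$ and $R_\pm$ are algebra morphisms and $S_{\mathfrak g}$ is an anti-homomorphism, and then recognize the inner factor $R_+(b_{(1)})S_{\mathfrak g}(R_-(b_{(2)}))$ as $F(b)=B$. No gaps.
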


\begin{proof}
Let $a, b\in\mathcal U(\mathfrak g_R)$ such that $F(a)=A$ and $F(b)=B$. We will use Sweedler's notation for the co-product $\Delta_{\mathfrak g_R}(a)=a_{(1)}\otimes a_{(2)}$, and write $m_{\mathfrak g_R}(a\otimes b):=a\cdot b$ for the product in $\mathcal U(\mathfrak g_R)$.
\allowdisplaybreaks{
\begin{align*}
	A\ast B=F(a\cdot b)
	&=m_{\mathfrak g}\circ (\operatorname{id}\otimes S_{\mathfrak g})\circ (R_+\otimes R_-) \circ\Delta_{\mathfrak g_R}(a\cdot b)\\
	&=m_{\mathfrak g}\circ(\operatorname{id}\otimes S_{\mathfrak g})\circ (R_+\otimes R_-)(a_{(1)}\otimes a_{(2)})\cdot (b_{(1)}\otimes b_{(2)})\\
	&=m_{\mathfrak g}\circ(\operatorname{id}\otimes S_{\mathfrak g})\circ (R_+\otimes R_-)(a_{(1)}\cdot b_{(1)})\otimes (a_{(2)}\cdot  b_{(2)})\\
	&=m_{\mathfrak g}\circ(\operatorname{id}\otimes S_{\mathfrak g})\big(R_+(a_{(1)})R_+(b_{(1)})\otimes R_-(a_{(2)})R_-(b_{(2)})\big)\\
	&\stackrel{(*)}{=}m_{\mathfrak g}\big(R_+(a_{(1)}) R_+(b_{(1)})\otimes S_{\mathfrak g}(R_-(b_{(2)}))S_{\mathfrak g}(R_-(a_{(2)}))\big)\\
	&=R_+(a_{(1)}) R_+(b_{(1)})S_{\mathfrak g}(R_-(b_{(2)}))S_{\mathfrak g}(R_-(a_{(2)}))\\
	&= R_+(a_{(1)}) F(b)S_{\mathfrak g}(R_-(a_{(2)})\\
	&=R_+(a_{(1)}) BS_{\mathfrak g}(R_-(a_{(2)}),
\end{align*}}
which proves the statement. In equality $(*)$ we applied that $S_{\mathfrak g}(\xi\eta)=S_{\mathfrak g}(\eta)S_{\mathfrak g}(\xi)$.
\end{proof}

\begin{remark}[Link to the work of Semenov-Tian-Shansky and Reshetikhin]\label{rem:STSR} 
In this remark we aim to link the previous results to the ones described in the references \cite{STS3} and \cite{RSTS}. The map \eqref{eq:sigma} was first defined in \cite{STS3} (see also \cite{RSTS}), where it was used to push-forward to $\mathcal U(\mathfrak g)$ the associative product of $\mathcal U(\mathfrak g_R)$ using the formula 
\begin{equation}
A\ast B=F(m_{\mathfrak g_R}(F^{-1}(A)\otimes F^{-1}(B))),\label{eq:push}
\end{equation}
for all monomials $A,B\in\mathcal U(\mathfrak g)$. From the equality between the maps $\phi$ and $F$, see Proposition \ref{prop:idenF}, it follows at once that the associative product defined in $\mathcal U(\mathfrak g)$ by the authors of \cite{STS3, RSTS}, \emph{is} the product defined in formula \eqref{eq:post-LieU}. Moreover, to the best knowledge of the authors of the present note, in the references \cite{STS3,RSTS}, the Hopf algebra structure induced on (the underlying vector space of) $\mathcal U(\mathfrak g)$, by the push-forward of the associative product of $\mathcal U(\mathfrak g_R)$ was not disclosed. Via the theory of the post-Lie algebras, on one hand we could extend (part of) the results of \cite{STS3,RSTS} to an Hopf algebraic framework, while on the other, we could get a more computable formula for the product defined in \cite{STS3, RSTS}. In particular, note that, although the result in Proposition \ref{prop:prodRSTS} was stated in \cite{STS3,RSTS}, the product in formula \eqref{eq:pSTS2} is not easily computable, since it supposes the knowledge of the inverse of the map $F$. On the other hand, formula \eqref{eq:post-LieU} provides an explicit way to compute the $\ast$-product between any two monomials of $\mathcal U(\mathfrak g)$.
\end{remark}

In this final part we discuss an application of the result presented above to the problem of the factorization of the group like-elements of the completed universal enveloping algebra of $\mathfrak g_R$. This result should be compared with the one in Theorem \ref{thm:factorizationtheorem} in Subsection \ref{ss:Rfact}.
We start observerving that, since $R_\pm:\mathcal U(\mathfrak g_R) \rightarrow \mathcal U(\mathfrak g)$ are algebra morphisms, they map the augmentation ideal of $\mathcal U(\mathfrak g_R)$ to the augmentation ideal of $\mathcal U(\mathfrak g)$  and, for this reason, both these morphisms extend to morphisms $R_\pm:\hat{\mathcal U}(\mathfrak g_R) \rightarrow \hat{\mathcal U}(\mathfrak g)$. In particular, the map $F$ extends to an isomorphism of (complete) Hopf algebras $\hat{F}:\hat{\mathcal U}(\mathfrak g_R)\rightarrow \hat{\mathcal U}_{\ast}(\mathfrak g)$, defined by 
\[
	\hat{F}=\hat{m}_{\mathfrak g}\circ (\operatorname{id} 
	\hat\otimes \hat{S}_{\mathfrak g})\circ(R_+\hat\otimes R_-)\circ\hat\Delta_{\mathfrak g_R},
\]
where, $\hat\Delta_{\mathfrak g_R}$ denotes the coproduct of $\hat{\mathcal U}(\mathfrak g_R)$, and with $\hat{m}_{\mathfrak g}$, $\hat{S}_{\mathfrak g}$ the product respectively the antipode of $\hat{\mathcal U}(\mathfrak g)$ are denoted. Let $\operatorname{exp}^{\cdot}(x)\in\mathcal G(\hat{\mathcal U}(\mathfrak g_R))$, $\operatorname{exp}^{\ast}(x)\in\mathcal G(\hat{\mathcal U}_{\ast}(\mathfrak g))$ and
$\exp(x) \in \mathcal G(\hat{\mathcal U}(\mathfrak g))$, the respective exponentials. 

Following \cite{RSTS} we now compare identity \eqref{eq:factinu2} with \eqref{eq:fac}. At the level of the universal enveloping algebra, the main result of Theorem \ref{thm:factorizationtheorem} can be rephrased as follows.

\begin{theorem}\label{thm:factcircled}
Every element $\operatorname{exp}^{\ast}(x)\in \mathcal G(\hat{\mathcal U}_{\ast}(\mathfrak g))$ admits the following factorization:
\begin{equation}
	\operatorname{exp}^{\ast}(x)=\exp({x_+})\exp({-x_-}).\label{eq:factinu2}
\end{equation}
\end{theorem}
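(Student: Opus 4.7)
The plan is to exploit the Hopf algebra isomorphism $\hat{F}:\hat{\mathcal U}(\mathfrak g_R)\rightarrow\hat{\mathcal U}_\ast(\mathfrak g)$ and the fact, noted in Corollary \ref{cor:dec} and Proposition \ref{prop:idenF}, that $\hat{F}$ restricts to the identity on $\mathfrak g=\mathfrak g_R$. Since $\hat{F}$ is an algebra morphism sending the concatenation product of $\mathcal U(\mathfrak g_R)$ to the $\ast$-product of $\mathcal U_\ast(\mathfrak g)$, its action on powers gives $\hat{F}(x^{\cdot n})=x^{\ast n}$ for every $x\in\mathfrak g$. Hence
\[
  \hat{F}\bigl(\exp^{\cdot}(x)\bigr)=\sum_{n\ge 0}\frac{1}{n!}\hat{F}(x^{\cdot n})=\sum_{n\ge 0}\frac{1}{n!}x^{\ast n}=\exp^{\ast}(x).
\]

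Next I would use the explicit formula for $\hat{F}$, namely
\[
  \hat{F}=\hat{m}_{\mathfrak g}\circ(\operatorname{id}\hat\otimes\hat{S}_{\mathfrak g})\circ(R_+\hat\otimes R_-)\circ\hat\Delta_{\mathfrak g_R},
\]
together with the observation that $\exp^{\cdot}(x)$ is group-like in $\hat{\mathcal U}(\mathfrak g_R)$ (this is a consequence of $x$ being primitive, hence $\hat\Delta_{\mathfrak g_R}\exp^{\cdot}(x)=\exp^{\cdot}(x)\hat\otimes\exp^{\cdot}(x)$). Applying $\hat{F}$ term by term then yields
\[
  \exp^{\ast}(x)=\hat{m}_{\mathfrak g}\bigl(R_+(\exp^{\cdot}(x))\hat\otimes\hat{S}_{\mathfrak g}\bigl(R_-(\exp^{\cdot}(x))\bigr)\bigr)=R_+(\exp^{\cdot}(x))\cdot\hat{S}_{\mathfrak g}\bigl(R_-(\exp^{\cdot}(x))\bigr).
\]

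To finish, I would use that $R_\pm:\hat{\mathcal U}(\mathfrak g_R)\rightarrow\hat{\mathcal U}(\mathfrak g)$ are unital algebra homomorphisms (by Proposition \ref{prop:prel}, item 1, extended to the enveloping algebras via the universal property and then to the $I$-adic completion), so they commute with the exponential series and give $R_\pm(\exp^{\cdot}(x))=\exp(R_\pm x)=\exp(x_\pm)$. Combined with the antipode identity $\hat{S}_{\mathfrak g}(\exp(x_-))=\exp(-x_-)$, this produces exactly $\exp^{\ast}(x)=\exp(x_+)\exp(-x_-)$.

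The only subtle point — and what I would treat as the main technical obstacle — is to justify rigorously the interchange of $\hat{F}$, $R_\pm$, and $\hat{S}_{\mathfrak g}$ with the infinite exponential series in the completed setting. This requires checking that the relevant maps are continuous with respect to the filtration topology on the $I$-adic completions, which follows from the fact that $R_\pm$ map the augmentation ideal of $\mathcal U(\mathfrak g_R)$ into that of $\mathcal U(\mathfrak g)$ (so they preserve the filtration), and that the Cauchy sequence defining $\exp^{\cdot}(x)$ is mapped to a Cauchy sequence converging to $\exp(x_\pm)$. Once this is granted, the identity is a direct consequence of the group-like property and the multiplicativity of $\hat{F}$.
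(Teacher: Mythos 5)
Your proof is correct, and it shares with the paper's argument the same backbone: both use that $\hat F$ is an algebra morphism intertwining the product of $\hat{\mathcal U}(\mathfrak g_R)$ with the $\ast$-product, so that $\hat F(\exp^{\cdot}(x))=\exp^{\ast}(x)$. Where you diverge is in how you evaluate $\hat F(\exp^{\cdot}(x))$ inside $\hat{\mathcal U}(\mathfrak g)$. The paper proceeds combinatorially: it takes the explicit monomial formula for $F$ from the proof of Theorem \ref{thm:lineariso}, specializes it to $x^{\cdot n}$ to get $\hat F(x^{\cdot n})=\sum_{l=0}^{n}(-1)^{n-l}\binom{n}{l}R_+(x)^lR_-(x)^{n-l}$ (all $R_+$ factors to the left, all $R_-$ factors to the right), and then resums the double series into $e^{x_+}e^{-x_-}$. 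You instead exploit the coalgebra structure: since $\exp^{\cdot}(x)$ is group-like, the convolution-type formula $F=m_{\mathfrak g}\circ(\operatorname{id}\otimes S_{\mathfrak g})\circ(R_+\otimes R_-)\circ\Delta_{\mathfrak g_R}$ collapses on it to $R_+(\exp^{\cdot}(x))\,S_{\mathfrak g}(R_-(\exp^{\cdot}(x)))$, and then multiplicativity of $R_\pm$ plus $S_{\mathfrak g}(\exp(y))=\exp(-y)$ finishes the job. Your route buys conceptual clarity — it makes visible that the two-factor form of \eqref{eq:factinu2} is forced by the coproduct structure, and it applies verbatim to any group-like element of $\hat{\mathcal U}(\mathfrak g_R)$, not just exponentials of Lie elements — at the cost of having to justify the continuity of the maps involved on the completed tensor product, which you correctly identify and which follows from the filtration-preserving property of $R_\pm$ and $S_{\mathfrak g}$. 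The paper's route avoids that discussion by working term by term, but hides the reason the reordering works inside an explicit binomial computation. Both are complete proofs.
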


\begin{proof}
To simplify notation, write $m_{\mathfrak g_R}(x\otimes y)=x\cdot y$, for all $x,y\in\mathfrak g_R$, so that for each $x \in \mathfrak g_R$, $x^{\cdot n}:=x\cdots x$. Then observe that, for each $n\geq 0$, one has
\[
	\hat F (x^{\cdot n})=R_+(x)^n+\sum_{l=1}^{n-1}(-1)^{n-l}{n\choose l}R_+(x)^lR_-(x)^{n-l}+(-1)^nR_-(x)^n.
\]
Then, after reordering the terms, one gets $\hat{F} (\operatorname{exp}_{\cdot}(x))=e^{x_+}e^{-x_-}.$ On the other hand, since $\hat F:\hat{\mathcal U}(\mathfrak g_R)\rightarrow \hat{\mathcal U}_\ast(\mathfrak g)$ is an algebra morphism, one obtains for each $n\geq 0$
\[
	\hat F(x^{\cdot n})	= \hat F(x)\ast\cdots\ast \hat F (x)
					= x^{\ast n},
\]
from which it follows that
\begin{eqnarray*}
	\hat{F} (\operatorname{exp}^{\cdot}(x))
		&=&\hat F(1)+\hat F(x)+\frac{\hat F(x^{\cdot 2})}{2!}+\cdots + \frac{\hat F(x^{\cdot n})}{n!} + \cdots =\operatorname{exp}^{\ast}(x),
\end{eqnarray*}
giving the result.
\end{proof}

The observation in Theorem \ref{thm:FinverseChi} implies for group-like elements in $\mathcal G(\hat{\mathcal U}(\mathfrak g))$ and $\mathcal G(\hat{\mathcal U}_*(\mathfrak g))$ that $\exp(x) = \exp^*(\chi(x))$, from which we deduce 

\begin{corollary} Group-like elements $\operatorname{exp}(x) \in \mathcal G(\hat{\mathcal U}(\mathfrak g))$ factorize
\begin{equation}
	\operatorname{exp}(x)=\exp({\chi_+(x)})\exp({-\chi_-(x)}).\label{eq:factast}
\end{equation}
\end{corollary}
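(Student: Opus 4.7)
The plan is to combine the two main ingredients built up in the excerpt: the post-Lie Magnus expansion from Theorem \ref{thm:FinverseChi} and the factorization of group-like elements of $\hat{\mathcal{U}}_\ast(\mathfrak g)$ from Theorem \ref{thm:factcircled}. The corollary should follow from a direct concatenation of these two results, together with the definition $\chi_\pm(x) := \pm R_\pm \chi(x)$.

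First I would invoke Theorem \ref{thm:FinverseChi}: for any $x\in\mathfrak g$ there exists a unique element $\chi(x)\in\mathfrak g$ such that
\[
    \exp(x) \;=\; \exp^\ast(\chi(x))
\]
as elements of the completed Hopf algebra $\hat{\mathcal U}(\mathfrak g)$ (where on the right-hand side we view $\exp^\ast(\chi(x))$ as an element of $\hat{\mathcal U}_\ast(\mathfrak g)$, which shares the underlying vector space with $\hat{\mathcal U}(\mathfrak g)$). This reduces the problem to factorizing the group-like element $\exp^\ast(\chi(x)) \in \mathcal G(\hat{\mathcal U}_\ast(\mathfrak g))$.

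Next I would apply Theorem \ref{thm:factcircled} to the element $\chi(x)\in\mathfrak g$. That theorem, established by extending the algebra morphism $F$ to the completion and computing $\hat F(\exp^\cdot(y))$ term by term, yields
\[
    \exp^\ast(\chi(x)) \;=\; \exp\bigl(R_+\chi(x)\bigr)\,\exp\bigl(-R_-\chi(x)\bigr).
\]
Substituting the definitions $\chi_+(x) = R_+\chi(x)$ and $\chi_-(x) = R_-\chi(x)$ gives exactly the asserted factorization
\[
    \exp(x) \;=\; \exp\bigl(\chi_+(x)\bigr)\,\exp\bigl(-\chi_-(x)\bigr).
\]

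Since both inputs are already available, there is no genuine obstacle: the only mild subtlety is to be careful that the identification of $\exp^\ast(\chi(x))$ with $\exp(x)$ takes place inside $\hat{\mathcal U}(\mathfrak g)$, while the factorization from Theorem \ref{thm:factcircled} produces two ordinary exponentials $\exp(\chi_\pm(x))$ in the same completed enveloping algebra (via the inclusion $\mathfrak g_\pm = R_\pm \mathfrak g \hookrightarrow \mathfrak g$, these are genuine group-like elements of $\mathcal G(\hat{\mathcal U}(\mathfrak g))$ associated with the subgroups $G_\pm$). Thus the corollary is essentially a one-line composition of Theorems \ref{thm:FinverseChi} and \ref{thm:factcircled}, and the conceptual content has already been carried out in proving those two results.
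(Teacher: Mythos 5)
Your proposal is correct and is exactly the paper's argument: the corollary is obtained by composing Theorem \ref{thm:FinverseChi}, which gives $\exp(x)=\exp^*(\chi(x))$, with Theorem \ref{thm:factcircled} applied to $\chi(x)\in\mathfrak g$, using $(\chi(x)_+,\chi(x)_-)=(R_+\chi(x),R_-\chi(x))$. No gaps; the paper's own proof is the same one-line concatenation.
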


\begin{proof}
The proof follows from Theorem \ref{thm:FinverseChi} and Theorem \ref{thm:factcircled}.
\end{proof}

\begin{remark} Looking at $\chi(x)$ in the context of $\hat{\mathcal U}(\mathfrak g)$, i.e., with the post-Lie product on $\mathfrak g$ defined in terms of the $r$-matrix, $x \triangleright_- y = [R_-(x),y]$, we find that $\chi_2(x) = -\frac{1}{2} [R_-(x),x]$ and 
$$
	\chi_3(x) = \frac{1}{4} [R_-([R_-(x),x]),x] + \frac{1}{12} ( [[R_-(x),x], x] + [R_-(x),[R_-(x),x]]).   				
$$
This should be compared with Eq.~(7) in \cite{EGM}, as well as with the results in \cite{EFLIM}.
\end{remark}


\subsection{Applications to isospectral flow equations}
\label{ssect:flow}

Recall Proposition \ref{prop:star-sol}. In the context of the post-Lie product $x \triangleright_- y := [R_-(x),y]$ induced on $\mathfrak g$ by an $r$-matrix $R$, this proposition says that the Lie bracket flow 
$$
	\dot{x}(t)=[x,R_-(x)], \quad x(0)=x_0
$$ 
has solution 
\allowdisplaybreaks{ 
\begin{eqnarray*}
	x(t) 	&=& \exp^*(-\chi(x_0t)) \triangleright_- x_0   					\nonumber \\						
		&=&  \exp\big(- R_-(\chi(x_0t))\big) x_0 \exp\big(R_-(\chi(x_0t))\big). 	\label{Lie-Flow}
\end{eqnarray*}}
The last equality follows from general results of post-Lie algebra. Since, $-\chi(x_0t) \in \mathfrak g$ we have  
\begin{eqnarray*}
	\exp^*(-\chi(x_0t)) \triangleright_- x_0
	&=& x_0 -  \chi(x_0t) \triangleright_- x_0 
	+ \frac{1}{2!} (\chi(x_0t) * \chi(x_0t)) \triangleright_- x_0 + \cdots\\
	&=& X_0 -  \chi(X_0t) \triangleright_- X_0 
	+ \frac{1}{2!} \chi(X_0t) \triangleright_- (\chi(x_0t) \triangleright_- x_0) + \cdots \\
	&=& \sum_{n\ge 0} \frac{(-1)^n}{n!} \operatorname{ad}_{R_-(\chi(x_0t))}^{(n)}x_0.
\end{eqnarray*}


%
%

\begin{thebibliography}{99.}

	
\bibitem{GuoBaiNi}
	C.~Bai, L.~Guo, X.~Ni,
 	{\emph{Nonabelian generalized Lax pairs, the classical Yang-Baxter equation and PostLie algebras}},
  	Communications in Mathematical Physics {\bf{297}}, number {2}, 553 (2010). 
	
\bibitem{Blanes}
    	S.~Blanes, F.~Casas, J.A.~Oteo, J.~Ros,
    	{\textsl{Magnus expansion: mathematical study and physical applications}},
    	Phys.~Rep.~{\bf{470}}, 151 (2009). 

\bibitem{Burde}  
	D.~Burde,
	{\emph{Left-symmetric algebras, or pre-Lie algebras in geometry and physics}},
  	Central European Journal of Mathematics {\bf{4}}, number {3}, 323 (2006).
	
\bibitem{BD} 
	D.~Burde, K.~Dekimpe, 
	{\emph{Post-Lie algebras structures on pair of Lie algebras}}
 	J.~Algebra {\bf{464}}, 226  (2016).


\bibitem{BDV} 
	D.~Burde, K.~Dekimpe, K.~Verkammen,
	{\emph{Affine actions on Lie groups and post-Lie algebra structures}}
 	Linear Algebra Appl.~{\bf{437}}, no. 5, 1250  (2012). 
 
 
\bibitem{C-S}
	E.~Cartan, J.~A.~Schouten,
	{\emph{On the Geometry of the Group-manifold of simple and semi-simple groups}},
	Proceedings of the Royal Academy of Amsterdam, Vol. XXIX (1926).

\bibitem{Cartier07}
        P.~Cartier,  
        {\it{A primer of Hopf algebras}}, 
        In ``Frontiers in Number Theory, Physics, and Geometry II'',
        Springer Berlin Heidelberg (2007), 537-615.

\bibitem{Cartier11}
         P.~Cartier,
        {\emph{Vinberg algebras, Lie groups and combinatorics}},
         Clay Mathematical Proceedings {\bf{11}}, 107 (2011). 
         

	
\bibitem{ChaLiv}
    	F.~Chapoton, M.~Livernet,
    	{\emph{Pre-Lie algebras and the rooted trees operad}},
    	Int.~Math.~Res.~Not.~{\bf{2001}}, 395 (2001). 
	
\bibitem{ChapPat}
	F.~Chapoton, F.~Patras, 
	{\it{Enveloping algebras of preLie algebras, Solomon idempotents and the Magnus formula}}, 
	Int.~J.~Algebra and Computation {\bf{23}}, No.~4, 853 (2013). 
    	
	
\bibitem{CK}
	A.~Connes, D.~Kreimer,
	{\emph{Hopf algebras, renormalization and noncommutative geometry}}, 
	Comm.~Math.~Phys.~{\bf{199}}, no. 1, 203  (1998). 

\bibitem{DM}
	A.~Diatta, A.~Medina,
	{\emph{Classical Yang--Baxter equation and left invariant affine geometry on Lie groups}},
 	Manuscripta Math.~{\bf{114}}, no. 4, 477  (2004). 

\bibitem{Drinf}
   	V.~G.~Drinfeld,
   	{\emph{Hamiltonian structures on Lie groups, Lie bialgebras and the geometric meaning of the 
	classical Yang--Baxter equation}},
   	Soviet Math.~Dokl.~{\bf{27}}, (1983), No.1.
	
\bibitem{D-K}
	J.~J.~Duistermaat, J.~A.~Kolk,
	Lie Groups,
	Universitext, Springer-Verlag, Berlin, 2000.	

\bibitem{EGM}
	K.~Ebrahimi-Fard, L.~Guo, D.~Manchon,
	{\textsl{Birkhoff type decompositions and the Baker--Campbell--Hausdorff recursion}},
	Communications in Mathematical Physics {\bf{267}}, 821 (2006). 
	
\bibitem{ELM14} 
	K.~Ebrahimi-Fard, A.~Lundervold, D.~Manchon,
	{\emph{Noncommutative Bell polynomials, quasideterminants and incidence Hopf algebras}},
	International Journal of Algebra and Computation {\bf{24}}, no 5, 671 (2014).

\bibitem{EFLMK}
	K.~Ebrahimi-Fard, A.~Lundervold, H.~Z.~Munthe-Kaas,
	{\emph{On the Lie enveloping algebra of a post-Lie algebra}}, 
	Journal of Lie Theory {\bf{25}}, No.~4, 1139 (2015).
		
\bibitem{EFLIM}
	K.~Ebrahimi-Fard, A.~Lundervold, I.~Mencattini, H.~Z.~Munthe-Kaas,
	{\emph{Post-Lie Algebras and Isospectral Flows}}, 
	SIGMA {\bf{25}}, No.~11, 093 (2015).
		
\bibitem{EFIM}
	K.~Ebrahimi-Fard, I.~Mencattini, H.~Z.~Munthe-Kaas,
	{\emph{Post-Lie algebras and factorization theorems}}, 
	arXiv:1701.07786 [math.RA]. 		
	
\bibitem{Faybusovich}
 	L.~E.~Faybusovich, 
	{\emph{QR-type factorizations, the Yang--Baxter equation and an eigenvalue problem of control theory}}, 
	Linear Algebra Appl.~{\bf{122-124}}, 943 (1989). 
	
\bibitem{Fuks}
	D.~B.~Fuks,
	Cohomology of Infinite Dimensional Lie Algebras
	Consultant Bureau, New York, 1986

\bibitem{KobayashiNomizu}
	S.~Kobayashi, K.~Nomizu,
	Foundations of differential geometry Vol.1
	Reprint of the 1963 original. Wiley Classics Library. A Wiley-Interscience Publication,
	John Wiley \& Sons, Inc., New York, 1996. xii+329 pp	

\bibitem{Lax}	
        P.~D.~Lax,
        {Outline of a theory of KdV equation}
        in Recent mathematical methods in nonlinear wave propagation (Montecatini Terme 1994), pp.70--102,
        Lecture Notes in Math.~{\bf{1640}}, Springer, Berlin, (1996).

\bibitem{Poisson}
	C.~Laurent-Gengoux, A.~Pichereau, P.~Vanhaecke,
 	Poisson structures, Grundlehren der Mathematischen Wissenschaften 
 	[Fundamental Principles of Mathematical Sciences], {\bf 347}, Springer, Heidelberg, 2013.

 \bibitem{LMK1}
	A.~Lundervold, H.~Z.~Munthe-Kaas,
	{\emph{On post-Lie algebras, Lie--Butcher series and moving frames}},
	Foundations of Computational Mathematics {\bf{13}}, Issue 4, 583 (2013).

 \bibitem{LMK2}
	A.~Lundervold, H.~Z.~Munthe-Kaas,
	{\emph{On algebraic structures of numerical integration on vector spaces and manifolds}},
	in {\em Fa\`{a} di Bruno Hopf Algebras, Dyson--Schwinger Equations, and Lie--Butcher Series}, 
	K.~Ebrahimi-Fard  and F.~Fauvet, Eds., IRMA Lect.~Math.~Theor.~Phys. {\bf{21}}, 
	Eur.~Math.~Soc., Strasbourg, France, 2015.

\bibitem{Manchon}
	D.~Manchon,
	A short survey on pre-Lie algebras, 
	in Noncommutative Geometry and Physics: 
	Renormalisation, Motives, Index Theory, \textit{E.~Schr\"odinger Institut Lectures in Math. Phys.}, 
	Eur.~Math.~Soc, A.~Carey Ed.~(2011).

\bibitem{Medina}
	A.~Medina,
	{\emph{Flat left-invariant connections adapted to the automorphism structure of a Lie group}},
 	J.~Differential Geom.~{\bf{16}}, no. 3, 445 (1981).

\bibitem{OudomGuin}		
	J.-M.~Oudom, D.~Guin,
  	{\emph{On the Lie enveloping algebra of a pre-Lie algebra}},
	Journal of K-theory: K-theory and its Applications to Algebra, 
	Geometry, and Topology {\bf{2}}, number 1, 147 (2008). 
	
\bibitem{Postnikov}
	M.~M.~Postnikov,
	Lie Groups and Lie Algebras. Lectures in Geometry. Semester V.
	Mir, Moscow, 1986.
	
\bibitem{PostGeo}
	M.~M.~Postnikov,
	Geometry VI. Riemannian Geometry.
	Encyclopaedia of Mathematical Sciences, 9, Springer-Verlag, Berlin, 2001.	

\bibitem{Quillen}
	D.~Quillen, 
	{\emph{Rational homotopy theory}}, 
	Ann.~of Math. (2), {\bf 90},  205 (1969). 

\bibitem{RSTS}
	N.~Yu.~Reshetikhin, M.~A.~Semenov-Tian-Shansly,
	{\emph{Quantum R-matrices and factorization problems}},
	Journal of Geometry and Physics {\bf{5}}, no. 4, 533 (1988). 
	
\bibitem{Reu}
	C.~Reutenauer,
	Free Lie algebras, London Mathematical Society Monographs. New Series, 7,
	Oxford Science Publications, The Clarendon Press, Oxford University Press, New York, 1993. 	
	
\bibitem{STS1} 
	M.~A.~Semenov-Tian-Shansky, 
	{\emph{What is a classical $r$-matrix?}},
         Funct.~Ana.~Appl.~{\bf 17}, 254 (1983).

\bibitem{STS2}
	M.~A.~Semenov Tian-Shansky,
	{\emph{Lectures on R-matrices, Poisson-Lie groups and integrable systems}} ,
	in Lectures on integrable systems (Sophia-Antipolis, 1991), 269Ð317, World Sci. Publ., River Edge, NJ, 1994.    
         
\bibitem{STS3} 
	M.~A.~Semenov-Tian-Shansky, 
	{\emph{Classical $r$-matrix and quantization}},
         Jour.~Sov.~Math.~{\bf 31}, no. 6, 3411 (1985).

\bibitem{STS4} 
	M.~A.~Semenov-Tian-Shansky,    
	{\emph{Quantum and classical integrable systems}}, in 
	Integrability of nonlinear systems (Pondicherry, 1996), 314--377, 
	Lecture Notes in Phys., {\bf 495}, Springer, Berlin, 1997              

	
\bibitem{Sweedler}
        M.~E.~Sweedler,
        Hopf algebras,
        Mathematical Lectures Notes,
        W.A. Benjamin, Inc. New York, 1969.
         
\bibitem{Vallette}
	B.~Vallette, 
	{\emph{Homology of generalized partition posets}}, 
	Journal of Pure and Applied Algebra, {\bf{208}}(2): 699 (2007). 
	
\bibitem{Vinberg} 
	E.~B.~Vinberg,
	{\emph{The theory of homogeneous convex cones}},
	Trudy Moskov.~Mat.~Ob.~{\bf{12}} 303 (1963).
	
\bibitem{ZBG} 
	Y.~Zhang, C.~Bai, L.~Guo, 
	{\emph{Totally compatible associative and Lie dialgebras, tridendriform algebras and PostLie algebras}},
	Sci. China Math.~{\bf{57}}, no. 2, 259 (2014).--273.	
	
\bibitem{Warner}
	F.~W.~Warner, 
	Foundations of differentiable manifolds and Lie groups,  
	Graduate Texts in Mathematics, {\bf 94}, Springer-Verlag, New York-Berlin, 1983. 


\end{thebibliography}
%


\end{document}